\newcommand{\algo}{{\em Ours-NoT}}
\newcommand{\algomu}{{\em Ours-$\mu$T}}
\newcommand{\algodelta}{{\em Ours}}
\newcommand{\eps}{\varepsilon}
\newcommand{\Gc}{G_{\text{core}}}
\newcommand{\Vc}{V_{\text{core}}}
\newcommand{\Ec}{E_{\text{core}}}
\newcommand{\myqed}{\begin{flushright}
    \qed
    \end{flushright}
}
\newcommand{\costeu}{\text{\em cost}_\text{EU}}
\newcommand{\costei}{\text{\em cost}_\text{EI}}
\newcommand{\costed}{\text{\em cost}_\text{ED}}
\newcommand{\costef}{\text{\em cost}_\text{EF}}
\newcommand{\costec}{\text{\em cost}_\text{EC}}
\newcommand{\costcf}{\text{\em cost}_\text{CF}}
\newcommand{\costcu}{\text{\em cost}_\text{CU}}
\newtheorem{fact}{Fact}
\newtheorem{observation}{Observation}
\newtheorem{claim}{Claim}
\newtheorem{definition}{Definition}
\newtheorem{theorem}{Theorem}
\newtheorem{lemma}{Lemma}
\newcommand{\shortauthors}{\emph{et al.}}
\title{Dynamic Structural Clustering Unleashed: Flexible Similarities, Versatile Updates and for All Parameters
}
\author{
  Zhuowei Zhao \\
  The University of Melbourne \\
  \texttt{zhuoweiz1@student.unimelb.edu.au} \\
   \And
  Junhao Gan \\
  The University of Melbourne \\
  \texttt{junhao.gan@unimelb.edu.au} \\
     \And
  Boyu Ruan \\
  The Hong Kong Unversity of Science and Technology \\
  \texttt{boyuruan@ust.hk} \\
     \And
  Zhifeng Bao \\
  RMIT University \\
  \texttt{zhifeng.bao@rmit.edu.au} \\
     \And
  Jianzhong Qi \\
  The University of Melbourne \\
  \texttt{jianzhong.qi@unimelb.edu.au} \\
     \And
  Sibo Wang \\
  The Chinese University of Hong Kong \\
  \texttt{swang@se.cuhk.edu.hk} \\
}
\begin{document}
\maketitle

\begin{abstract}

We study structural clustering on graphs in {\em dynamic} scenarios, where the graphs can be updated by {\em arbitrary} insertions or deletions of edges/vertices. The goal is to efficiently compute structural clustering results for any clustering parameters $\eps$ and $\mu$ given {\em on the fly}, for arbitrary graph update patterns, and for all typical similarity measurements. Specifically, we adopt the idea of update affordability and propose an a-lot-simpler yet more efficient (both theoretically and practically) algorithm (than state of the art), named {\em VD-STAR} to handle graph updates.
First, with a theoretical clustering result quality guarantee, {\em VD-STAR} can output high-quality clustering results with up to 99.9\% accuracy. Second, our {\em VD-STAR} is easy to implement as it just needs to maintain certain sorted linked lists and hash tables, and hence, effectively
enhances its deployment in practice. 
Third and most importantly, by careful analysis, {\em VD-STAR} improves the per-update time bound of the state-of-the-art from $O(\log^2 n)$ expected {\em with} certain update pattern assumption to $O(\log n)$ {\em amortized in expectation} {\em without} any update pattern assumption. 
We further design two variants of {\em VD-STAR} to enhance its empirical performance. Experimental results show that our algorithms consistently outperform the state-of-the-art competitors by up to 9,315 times in update time across nine real datasets.

\end{abstract}


\section{Introduction}
Graph clustering is a fundamental problem that aims to group similar vertices of a graph into clusters. 
Various clustering schemes were proposed and have been widely studied, such as modularity-based clustering~\cite{newman2004finding} and spectral clustering~\cite{white2005spectral}. 
In this paper, we focus on a popular scheme named {\em Structural Clustering}~\cite{xu2007scan} that groups the vertices based on their {\em structural similarity} in the graph. 
An important feature of Structural Clustering 
is that it identifies not only clusters of the vertices but also {\em different roles} 
for the vertices (i.e., cores, hubs, and outliers) in the clustering result.

\vspace{1mm}
\noindent
{\bf Structural Clustering.}
Given an undirected graph $G = \langle V, E \rangle$, a specified {\em similarity measurement} $\sigma$ between the neighborhoods of vertices, a similarity threshold parameter $0 < \eps \leq 1$ and an integer parameter $\mu \geq 1$, 
the process of structural clustering starts with identifying a set of special vertices known as the {\em core vertices}. 
A core vertex is a vertex $u\in V$ with at least $\mu$ {\em similar neighbors}, 
and a similar neighbor is a neighbor vertex of $u$ with a similarity score $\geq \eps$.
The core vertices, and the edges to their similar core neighbors, collectively form connected components (CCs), where each CC serves as a {\em primitive cluster}. 
The non-core vertices are then assigned to the corresponding primitive clusters of their similar core neighbors.
For each vertex $u \in V$ which {\em does not} belong to any cluster, 
if $u$ connects to neighbors from two or more clusters, then $u$ is categorized as a {\em hub}, in a sense that $u$ serves as a bridge connecting multiple clusters.
Otherwise, $u$ is considered as an {\em outlier}.

Figure~\ref{fig:clustering_exp} shows an example of structural clustering results with $\eps = 0.5$, $\mu= 4$, using Jaccard similarity as the similarity measurement between vertices. 
Vertices $v_1$ and $v_8$ are core vertices (solid lines indicate edges connecting similar neighbors) and hence, each of them itself forms a primitive cluster. 
Vertices $\{v_1, v_2, v_3, v_4, v_5\}$ form a cluster, 
while $\{v_8, v_9, v_{10}, v_{11}, v_{12}\}$ form another. 
Both $v_7$ and $v_6$ belong to none of the clusters -- $v_7$ is a hub as it connects to neighbors belonging to two different clusters, while $v_6$ is an outlier. 


\begin{figure}
    \centering
    \includegraphics[width=0.8\textwidth]{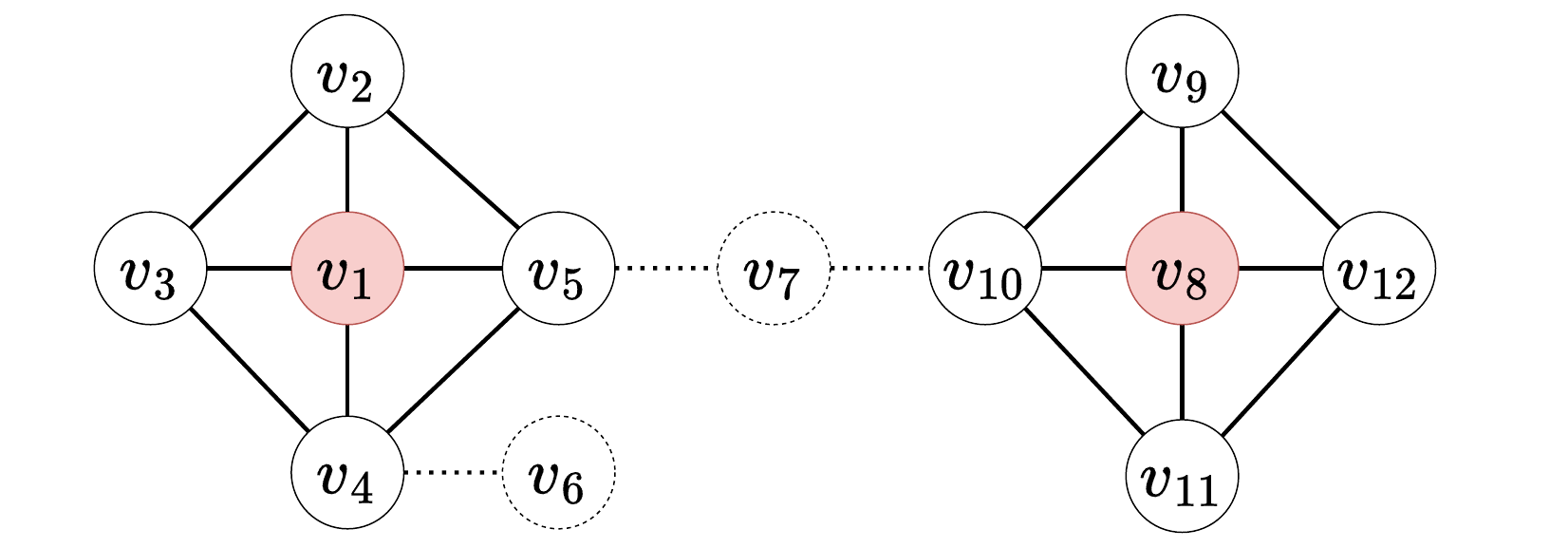}
    \caption{A Structural Clustering Example ($\eps = 0.5$ and $\mu = 5$)}
    \label{fig:clustering_exp}
\end{figure}

\begin{table*}[ttt]
\caption{Comparison with SOTA Methods, where $n$ is the number of vertices, $m$ is the number of edges, $d_{\max}$ is the maximum degree, and $m_{cr}$ is the number of edges in the clustering result graph}
\label{tab:relatedwork_compare}
\resizebox{\textwidth}{!}{%
\begin{tabular}{l|l|c||c|c|c}
\hline
\multicolumn{2}{c|}{Algorithm Features}   & \textbf{Our {\em VD-STAR}}   & BOTBIN~\cite{zhang2022effective} & DynELM~\cite{ruan2021dynamic} & GS*-Index~\cite{wen2017efficient}  \\ \hline
\multirow{2}{*}{Update running time} 

    & Arbitrary update & $O(\log n)$ amortized expected & $O(d_{\max}\log n)$ & $O(\log^2 n)$ amortized  & $O(d_{\max}^2\log n)$ \\ \cline{2-6}

    & Uniformly at random & $O(\log n)$ amortized expected &  $O(\log^2 n)$ expected   & $O(\log^2 n)$ amortized         & $O(d_{\max}^2\log n)$  \\ \hline
\multirow{3}{*}{Similarity measurement}  
    & Jaccard similarity                       & \textbf{\checkmark}  & \checkmark     & \checkmark   & \checkmark               \\ \cline{2-6}
    & Cosine similarity                       & \textbf{\checkmark}  &   \text{\sffamily X}   &    \checkmark      &  \checkmark          \\ \cline{2-6}
    & Dice similarity                        & \textbf{\checkmark}  &   \text{\sffamily X}   &    \checkmark      &   \checkmark        \\ \hline
\multicolumn{2}{l|}{Query running time for $\eps$ and $\mu$ given on the fly} & $O(m_{cr})$ & $ O(m_{cr}) $ & $O(m \log^2 n)$ running from scratch  & $O(m_{cr})$ \\ \hline
\end{tabular}%
}
\vspace{1mm}
\end{table*}

\vspace{1mm}
\noindent
{\bf Applications.} 
Structural clustering is widely applied in genomics and biomarker discovery~\cite{ding2012atbionet}, and over other biology data such as clinical data~\cite{liu2011translating} and protein-protein interaction networks~\cite{martha2011constructing}. Another important application is community detection over social networks~\cite{lim2014linkscan, papadopoulos2012community, fortunato2010community}, to unveil community structures, offering insights into the relationships and interactions among social network users. 
Similarly, in the context of web data~\cite{papadopoulos2009leveraging, papadopoulos2010graph}, structural clustering contributes to the identification and analysis of interconnected communities, enhancing the comprehension of web structures and user behaviors. 
Structural clustering has also been used for fraud detection~\cite{chawathe2019clustering, panigrahi2022ctb} since block-chain became a heat.
Recently, the AI community has turned its attention to structural clustering for improving model training~\cite{fajri2024structural, yang2022self}. Among these applications, dynamic scenarios are particularly significant, as graphs can rapidly expand over time, as seen in clinical data, social networks, web-page hyperlinks, and blockchain, to name a few.

\vspace{1mm}
\noindent
{\bf Related Work and Limitations.}
Structural clustering was first proposed by Xu~\shortauthors~\cite{xu2007scan}.
Since then, it has opened a line of studies. 
Among these, pSCAN~\cite{chang2017mathsf} is the state-of-the-art (SOTA) {\em exact} algorithm for  {\em static} graphs $G$, where no updates to $G$ are allowed.
The running time complexity of pSCAN is bounded by $O(m^{1.5})$, 
where $m$ is the number of edges.
As shown by Chang~\shortauthors~\cite{chang2017mathsf}, this $O(m^{1.5})$ bound is indeed worst-case optimal.

When the graphs are {\em dynamic}, where the graphs can be updated by insertions or deletions of edges, structural clustering 
becomes even more challenging. 
GS*-Index~~\cite{wen2017efficient} is the SOTA for {\em Dynamic Structural Clustering}, which can return
the exact clustering result with respect to parameters $\eps$ and $\mu$ given on the fly for each query.
However, it takes $O(d_{\max}^2 \cdot \log n)$ worst-case time to process each update, where $d_{\max}$ is the maximum degree and $n$ is the number of vertices in the current graph.

DynELM~\cite{ruan2021dynamic} and BOTBIN~\cite{zhang2022effective} are two SOTA {\em approximate} algorithms, yet BOTBIN can {\em only} work for Jaccard similarity.
DynELM can process each update in $O(\log^2 n)$ amortized time for pre-specified parameters $\eps$ and $\mu$. 
In contrast, BOTBIN supports queries with $\eps$ and $\mu$ given on the fly, 
and can process each update in $O(\log^2 n)$ time {\em in expectation} under the assumption that the updates are uniformly at random within each vertex's neighborhood.
This assumption, however, may not always hold for real-world applications, e.g., people tend to follow big names in social networks.
If this is the case, the per-update complexity of BOTBIN degenerates to $O(d_{\max} \cdot \log n)$.


\vspace{1mm}
\noindent
{\bf Our Method.} 
Given the importance of structure clustering, 
we propose an ultimate algorithm, called {\em \underline{V}ersatile \underline{D}ynamic \underline{St}ructur\underline{a}l Cluste\underline{r}ing} ({\em VD-STAR}),
which 
unifies the state of the art and addresses all their 
limitations.
Specifically, we use a sampling method to estimate the similarity which we show to work on all three similarity measurements (Jaccard, Cosine, and Dice) suggested by Xu~\shortauthors~\cite{xu2007scan} (Section \ref{sec:analysis}).
Inspired by the idea of update affordability (Section~\ref{sec:our-algo}) which is not affected by the update pattern, we propose a bucketing technique to handle multiple update affordability for a neighborhood in a more efficient way that improves the complexity bound. 
Here, the update affordability indicates how many updates can an edge afford before it can possibly affect the clustering results; this concept will be detailed in Section~\ref{subsec:update_afford}. 
Additionally, we propose an algorithm to track the update affordability for each edge, which is easy to implement yet efficient (Section \ref{subsec:edgesim}). 
Last but not least, 
we propose a unified framework (Algorithm~\ref{algo:framework}) for GS*-Index, BOTBIN, and our {\em VD-STAR}, with which users just need to implement the specified interfaces to obtain the algorithms.
This also provides flexibility for customization, that is, users 
can swap the implementations between different algorithms to make their own ``new'' solutions. 
As we will see in Section~\ref{sec:opt}, this is actually what we did to design the two variants of {\em VD-STAR}. 
To summarize, as shown in Table~\ref{tab:relatedwork_compare}, our {\em VD-STAR} advances the SOTA in the following aspects:
\vspace{-1mm}
\begin{itemize}[leftmargin = *]
\item It supports all three similarity measurements, whereas the SOTA method, BOTBIN~\cite{zhang2022effective}, is designed to support Jaccard similarity only.
\item It improves the per-update time complexity of the SOTA methods, BOTBIN and DynELM, from $O(\log^2 n)$ expected to $O(\log n)$ amortized in expectation.
\item It supports arbitrary update patterns lifting BOTBIN's strong assumption that the updates have to be uniformly random.
\end{itemize}
\vspace{-1mm}

\noindent
Overall, we make the following contributions: 
\vspace{-1mm}
\begin{itemize}[leftmargin = *]
\item We propose a novel algorithm, called {\em VD-STAR}, 
which addresses all the aforementioned   
challenges, and hence, overcomes all the limitations of the SOTA algorithms, and most importantly, is even more efficient in processing updates! 

%

\item While the theoretical analysis is technical, 
our {\em VD-STAR} can be easily implemented in practice, as our novel algorithm design consists mainly of fundamental data structures: maintaining and scanning a number of sorted lists and hash tables.

\item We conduct extensive experiments on nine real-world graph datasets with up to 117 million edges to compare our algorithms with GS*-Index and BOTBIN, in terms of update efficiency (with varying update distributions),  clustering quality (under different similarity measurements), and query efficiency. 
The experimental results show that our proposed algorithms outperform SOTA algorithms by up to $9,315\times$ regarding update efficiency.
\end{itemize}

\vspace{-3mm}
\section{Preliminaries}
\label{sec:preli}


\subsection{Problem Formulation}
\label{sec:prob}

Consider an undirected graph $G = \langle V, E \rangle$, 
where $V$ is a set of $n$ vertices and $E$ is a set of $m$ edges. 
Vertices $u \in V$ and $v \in V$ are {\em neighbors} if and only if there exists an edge $(u,v) \in E$. 
The {\em neighborhood} of $u$, denoted by $N(u)$, is 
the set of all $u$'s neighbors, namely, $N(u) = \{v \in V | (u,v) \in E\}$, and the {\em degree} of $u$ is defined 
to be $d_u = |N(u)|$. 
Moreover, we use
$N[u] = N(u) \cup \{u\}$ to denote the {\em inclusive neighborhood} of $u$ and let $n_u = |N[u]|$.

\noindent\textbf{Similarity Measurement.} 
The similarity between vertices $u$ and $v$ is denoted by $\sigma(u, v)$. 
Specifically, $\sigma(u, v) = 0$ if there is {\em no} edge between $u$ and $v$; 
otherwise, depending on the application needs, $\sigma(u,v)$ is calculated as one of the following three popular similarity measurements, where
$I(u,v) = |N[u]\cap N[v]|$ and $U(u,v) = |N[u]\cup N[v]| = n_u + n_v - I(u,v)$:
\begin{itemize}[leftmargin =*]
\item {\em Jaccard similarity}: 
$\sigma(u,v) = \frac{I(u,v)}{U(u,v)} = \frac{I(u,v)}{n_u + n_v - I(u,v)}$, or
\item {\em Cosine similarity}: 
$\sigma(u,v) = \frac{I(u,v)}{\sqrt{n_u \cdot n_v}}$, or
\item {\em Dice similarity}: 
$\sigma(u,v) = \frac{I(u,v)}{(n_u + n_v)/2}$.
\end{itemize}
And 
$\sigma(u,v)$ can be computed in $O(1)$ time with $I(u,v)$, $n_u$ and $n_v$.

\vspace{1mm}
\noindent\textbf{Similar Neighbors, Edge Labels and Core Vertices.} 
Given a {\em similarity threshold} $0 < \eps < 1$, vertices $u$ and $v$ are {\em $\eps$-similar neighbors} if 
$\sigma(u,v) \geq \eps$.
An edge $(u, v)$ is labelled as an {\em $\eps$-similar edge} if $u$ and $v$ are $\eps$-similar neighbors;
otherwise, it is labeled as a {\em $\eps$-dissimilar edge}. 
%
A vertex $u$ is a {\em $(\eps, \mu)$-core vertex} 
if $u$ has {\em at least} $\mu$ $\eps$-similar neighbors; 
otherwise, $u$ is a \emph{non-core vertex} with respect to $\eps$ and $\mu$. 

In the rest of this paper, when the context of the parameters $\eps$ and $\mu$ is clear, 
we use {\em sim-edges} to refer to $\eps$-similar edges, and {\em core vertices} to refer to $(\eps, \mu)$-core vertices, respectively.

\noindent\textbf{Core Sim-Graph.} 
An edge $(u, v)$ is a {\em core sim-edge} if $(u,v)$ is a sim-edge and both $u$ and $v$ are core vertices.
%
The {\em core sim-graph} of $G$ is defined as 
$\Gc= \langle \Vc, \Ec \rangle$,
where $\Vc$ is the set of all the core vertices and $\Ec$
is the set of all core sim-edges.

\noindent{\bf Structural Clusters and the Clustering Result.} 
Each connected component (CC) of $\Gc$ is defined as a {\em primitive (structural) cluster}. 
And each primitive cluster $C$, along with the set of all the {\em non-core} vertices that are similar neighbors of some core vertex in $C$, is defined as a {\em structural cluster} (``cluster'' for short). 
The collection of all these clusters represents the {\em Structural Clustering Result} (``clustering result'' for short) on $G$ with respect to the parameters $\eps$ and $\mu$.
\noindent{\bf Clustering Result Graph.}
Given $\eps$ and $\mu$,
let 
$E_{cr}$ be the set of all the sim-edges that are incident on at least one {\em core} vertex. 
Denote by $G_{cr} = \langle V_{cr}, E_{cr} \rangle$ the induced sub-graph of $G$ by $E_{cr}$, where $V_{cr}$ is the set of all the end-vertices of the edges in $E_{cr}$.
$G_{cr}$ is called the {\em Clustering Result Graph} of $G$ with respect to $\eps$ and $\mu$.  
Moreover, we define $n_{cr} = |V_{cr}|$ and $m_{cr} = |E_{cr}|$.
\vspace{-2mm}
\begin{observation}\label{fact:clustering-graph}
Given the clustering result graph $G_{cr}$ with respect to the given parameters $\eps$ and $\mu$, 
the structural clustering result on $G$ can be computed in $O(m_{cr})$ time.
\end{observation}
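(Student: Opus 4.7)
The plan is to show that, given $G_{cr}=\langle V_{cr}, E_{cr}\rangle$, the clustering result can be assembled in three passes, each of which runs in $O(m_{cr})$ time. The key structural observation I will rely on is that for any vertex $u \in V_{cr}$, its degree in $G_{cr}$ already encodes whether $u$ is a core vertex: if $u$ is core, then \emph{every} $\eps$-similar edge incident to $u$ belongs to $E_{cr}$ by definition, so $\deg_{G_{cr}}(u) \geq \mu$; if $u$ is non-core, then only its sim-edges to core vertices are in $E_{cr}$, and since its total number of sim-neighbors is already less than $\mu$, we still have $\deg_{G_{cr}}(u) < \mu$. Hence a single scan of the adjacency lists of $G_{cr}$, computing degrees, suffices to label every vertex in $V_{cr}$ as core or non-core in $O(m_{cr})$ time.

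The second step is to build the core sim-graph $\Gc$ and compute its connected components. Given the labels from Step~1, I would filter $E_{cr}$ in one more pass to retain only those edges with both endpoints core, obtaining $\Ec$ in $O(m_{cr})$ time and simultaneously recording $\Vc$. Running BFS or DFS on $\Gc$ then produces the primitive clusters as its connected components in $O(|\Vc|+|\Ec|) = O(m_{cr})$ time; singleton primitive clusters (core vertices without core sim-neighbors) are handled automatically by the traversal. Note that any vertex of $V \setminus V_{cr}$ is, by definition of $E_{cr}$, incident to no sim-edge that touches a core vertex, so it cannot participate in any cluster and can safely be ignored at this stage.

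The third step attaches non-core vertices to clusters. For every non-core $u \in V_{cr}$, I would walk $u$'s adjacency list in $G_{cr}$, and for each neighbor $v$ that was labeled core in Step~1, add $u$ to the cluster identified by the component id of $v$ (duplicates within a single cluster are avoided by a marker bit reset per vertex). Because each edge of $E_{cr}$ is examined at most a constant number of times across this assignment phase, the total work is again $O(m_{cr})$, yielding $O(m_{cr})$ overall as claimed. The only part that requires any care is the degree-versus-core argument in Step~1, which is the lynchpin of the whole proof; everything else is standard graph bookkeeping, so I do not anticipate a real obstacle beyond writing that equivalence cleanly.
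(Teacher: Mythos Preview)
Your proof is correct and follows essentially the same three-step structure as the paper: extract the core sim-graph $\Gc$ from $G_{cr}$, compute its connected components to obtain the primitive clusters, then assign non-core vertices via the remaining edges of $E_{cr}$. Your degree-based criterion for recognizing core vertices inside $G_{cr}$ (namely, $u$ is core iff $\deg_{G_{cr}}(u)\ge\mu$) is a clean explicit justification of a step the paper's proof glosses over---it simply asserts that $\Gc$ can be obtained ``by scanning $G_{cr}$'' without spelling out how cores are identified.
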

\vspace{-4mm}
\begin{proof}
By scanning $G_{cr}$, the core sim-graph $\Gc$ can be obtained, as $\Gc$ is a sub-graph of $G_{cr}$.
As a result, all the primitive clusters (i.e., the connected components of $\Gc$) can be computed in $O( |\Vc| + |\Ec|)$ time.
Finally, for each edge $(u,v) \in E_{cr}$ that is incident on a non-core vertex $v$, assign $v$ to the cluster of the core vertex $u$.  
The overall running time is bounded by $O(m_{cr})$.
\end{proof}
\vspace{-2mm}

\noindent
{\bf Problem Definition.}
We consider structural clustering on graph $G$ which 
{\em evolves over time}.
The problem is defined as follows.

\begin{definition}
Consider a pre-specified similarity measurement $\sigma(\cdot, \cdot)$ (either Jaccard, Cosine or Dice); 
given an undirected graph $G = \langle V, E \rangle$ that can be updated by {\em arbitrary} insertions or deletions of edges, 
the {\bf \em Dynamic Structural Clustering for All Parameters} ({\bf DynStrClu-AllPara}) problem asks to: 
\begin{itemize}[leftmargin = *]
\item (i) support updates {\em efficiently}, 
and
\item (ii) return the clustering result upon request in $O(m_{cr})$ time with respect to the parameters $\eps \in (0,1)$ and $\mu \geq 1$ \underline{\em given on the fly}.
\end{itemize}
\end{definition}

\vspace{-2mm}
\noindent
{\bf Affecting Updates and Affected Edges.}
Observe that an update (either an insertion or a deletion) of an edge $(u,v)$ changes the degrees of both $u$ and $v$, 
and hence, the similarities of all the edges incident on $u$ or $v$ are {\em affected}. 
These edges incident on $u$ or $v$ are called {\em affected edges} of the update $(u,v)$,
and this update $(u,v)$ is an {\em affecting update} to these edges.
A main challenge in {\em DynStrClu-AllPara}
is that for an udpate $(u,v)$, there can be $O(d_u + d_v) \subseteq O(n)$ affected edges.
Thus, maintaining the similarities of these affected edges can be expensive.
As we shall see,
how to overcome this technical difficulty is the main distinction among  
the state-of-the-art
solutions and our proposed algorithms.

\noindent{\bf $\rho$-Absolute-Approximation.}
We exploit the notion of $\rho$-absolute approximation for {\em DynStrClu-AllPara}.

\begin{definition}[$\rho$-Absolute-Approximation]
Given a {\em constant} parameter $\rho \in (0, 1)$ and the similarity threshold  
parameter $\eps$,
the label of an edge $(u,v)$ is decided as follows:
\begin{enumerate}[leftmargin=*]
    \item if $\sigma(u, v) > \eps + \rho$, 
$(u,v)$ must be considered as similar;
    \item if $\sigma(u, v) < \eps - \rho$, 
$(u,v)$ must be considered as dissimilar;
\item otherwise, i.e., $\eps - \rho \leq \sigma(u, v) \leq \eps + \rho$, 
$(u,v)$ can be considered as either similar or dissimilar.
\end{enumerate}
\end{definition}
\vspace{-1mm}

Once the edge labels are decided according to the above $\rho$-absolute-approximation,
all the other definitions introduced in 
this section,
immediately follow.
Moreover, it is shown~\cite{zhang2022effective} that clustering result under the notion of $\rho$-absolute-approximation provides
a ``sandwich'' guarantee on the result quality compared to the exact clustering result with the same parameters $\eps$ and $\mu$. 

\begin{fact}[\cite{ruan2021dynamic,zhang2022effective}]
Given parameters $\eps$, $\mu$, and $\rho$, let $\mathcal{C}_{\eps,\mu}$ denote the exact clustering result, and let $\mathcal{C}^\rho_{\eps,\mu}$ denote the clustering result satisfying the $\rho$-absolute-approximation. We have the following properties:
    \begin{itemize}[leftmargin = *]
        \item for every cluster $C_+ \in \mathcal{C}_{\eps+\rho,\mu}$, there is a cluster $\Tilde{C} \in \mathcal{C}^\rho_{\eps,\mu}$ such that $C_+ \subseteq \Tilde{C}$;
        \item for every cluster $\Tilde{C} \in \mathcal{C}^\rho_{\eps,\mu}$, there is a cluster $C_- \in \mathcal{C}_{\eps-\rho,\mu}$ such that $\Tilde{C} \subseteq C_-$.
    \end{itemize}
\end{fact}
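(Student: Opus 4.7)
The plan is to reduce the cluster-inclusion statement to two monotonicity observations: (i) the set of $\eps$-similar edges under the exact rule is monotone in $\eps$, and (ii) the set of core vertices is monotone in the set of similar edges (while $\mu$ is fixed). The $\rho$-absolute-approximation rule, by definition, produces an edge-label set that is sandwiched between the exact labellings at thresholds $\eps+\rho$ and $\eps-\rho$; once this is established, the cluster inclusions follow by tracing through the successive definitions (similar edges $\to$ core vertices $\to$ core sim-graph $\to$ primitive clusters $\to$ full clusters).

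First I would fix notation. Let $S_\theta$ be the set of edges with $\sigma(u,v) \geq \theta$, let $\tilde S$ be an arbitrary edge-label set that is $\rho$-absolute-approximation-valid with respect to $\eps$, and let $K(S)$ denote the set of vertices having at least $\mu$ neighbors via edges in $S$. Then the three definitions in the paper give directly $S_{\eps+\rho}\subseteq \tilde S \subseteq S_{\eps-\rho}$: an edge with $\sigma>\eps+\rho$ is forced similar, and an edge with $\sigma<\eps-\rho$ is forced dissimilar, so every edge in $S_{\eps+\rho}$ lies in $\tilde S$ and every edge in $\tilde S$ lies in $S_{\eps-\rho}$ (boundary cases $\sigma=\eps\pm\rho$ are handled by the ``either way'' clause, but in both directions the inclusion is preserved). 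Monotonicity of $K(\cdot)$ then yields $K(S_{\eps+\rho}) \subseteq K(\tilde S) \subseteq K(S_{\eps-\rho})$, i.e., core vertices at parameter $\eps+\rho$ are also core under $\tilde S$, and core vertices under $\tilde S$ are core at parameter $\eps-\rho$.

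For Part (i), take any cluster $C_+\in \mathcal{C}_{\eps+\rho,\mu}$. Its primitive part is a connected component of the core sim-graph built from $S_{\eps+\rho}\cap (K(S_{\eps+\rho})\times K(S_{\eps+\rho}))$. By the two inclusions above, every vertex and every edge of this primitive component is also present in the core sim-graph of $\mathcal{C}^\rho_{\eps,\mu}$, so it lies entirely inside a single connected component and thus inside a single primitive cluster $\tilde P\subseteq \tilde C$. For the non-core vertices of $C_+$, each such vertex $v$ is attached because it has a similar neighbor $u\in C_+$ that is core at $\eps+\rho$; then $u$ is also core under $\tilde S$, and the edge $(u,v)\in S_{\eps+\rho}\subseteq \tilde S$, so $v$ is attached to $\tilde C$ as well. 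Hence $C_+\subseteq \tilde C$.

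Part (ii) is symmetric with the roles of the two inclusions reversed, using $\tilde S\subseteq S_{\eps-\rho}$ and $K(\tilde S)\subseteq K(S_{\eps-\rho})$: primitive components in $\mathcal{C}^\rho_{\eps,\mu}$ inject into components in $\mathcal{C}_{\eps-\rho,\mu}$, and attached non-core vertices carry over by the same edge-inclusion argument. The only subtle point, which I would flag as the main obstacle, is handling non-core vertices cleanly: a vertex may be a core vertex in $\mathcal{C}_{\eps-\rho,\mu}$ but a non-core in $\mathcal{C}^\rho_{\eps,\mu}$, and hubs/outliers in one labelling may be cluster members in another. The statement only asks for set inclusion of cluster memberships, so I would argue that the role-assignment differences do not matter once the primitive-cluster injection and the edge-inclusion are in hand; every vertex appearing in $\tilde C$ (resp.\ $C_+$) gets a witnessing core neighbor whose primitive cluster embeds in the target clustering, which is exactly what the inclusion requires.
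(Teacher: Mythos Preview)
Your proof is correct and follows essentially the same approach as the paper: both establish the edge-label inclusions $S_{\eps+\rho}\subseteq\tilde S\subseteq S_{\eps-\rho}$, derive the corresponding core-vertex inclusions by monotonicity, and then propagate these through the cluster definitions. The paper phrases the last step via the clustering result graph (showing $E_{cr+}\subseteq E_{cr_\rho}\subseteq E_{cr-}$ and concluding the cluster containment from there), whereas you work directly with the core sim-graph and the attached non-core vertices; your treatment is slightly more explicit about why the primitive components inject into a single component on the other side, but the underlying argument is identical.
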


\begin{proof}
    Let $G_{cr+} = \langle V_{cr+}, E_{cr+} \rangle$ , $G_{cr-} = \langle V_{cr-}, E_{cr-}\rangle$, and $G_{cr_\rho} = \langle V_{cr_\rho}, E_{cr_\rho}\rangle$ be the clustering results graphs $\mathcal{C}_{\eps+\rho,\mu}$, $\mathcal{C}_{\eps-\rho,\mu}$, and $\mathcal{C}^\rho_{\eps,\mu}$, respectively for given $\mu$, $\eps$, and $\rho$. For an edge $(u,v) \in E_{cr+}$, we have $\sigma(u,v) \geq \eps + \rho$, hence $u$ and $v$ are considered similar under the $\rho$-absolute-approximation. Additionally, as $(u,v) \in E_{cr+}$, at least one of $u$ and $v$ must be a core vertex in $G_{cr+}$. Without loss of generality, we assume $u$ to be a core vertex in $G_{cr+}$. Then, $u$ is also a core vertex in $G_{cr_\rho}$ based on the $\rho$-absolute-approximation. This is because $u$ can only have more similar neighbors under a more relaxed threshold. Therefore, $(u,v)$ must be in $E_{cr_\rho}$, and hence $E_{cr+} \subseteq E_{cr_\rho}$. Symmetrically, for an edge $(u,v) \in E_{cr_\rho}$, it has $\sigma(u,v) \geq \eps - \rho$, hence $u$ and $v$ are considered similar in $G_{cr-}$. Similarly, a core vertex in $E_{cr_\rho}$ must be a core vertex in $G_{cr-}$, and we have $E_{cr_\rho} \subseteq E_{cr-}$.

    Let $V_+$ and $E_+$ be the vertex and edge sets of $C_+ \in \mathcal{C}_{\eps+\rho,\mu}$, respectively. We have $E_+ \subseteq E_{cr+} \subseteq E_{cr_\rho}$ as verified above. Therefore, there is a cluster $\Tilde{C} \in \mathcal{C}^\rho_{\eps,\mu}$ that contains all the vertex in $V_+$, and hence $C_+\subseteq\Tilde{C}$. This proves the first bullet point. For the second bullet point, let $V_\rho$ and $E_\rho$ be the vertex and edge sets of $\Tilde{C} \in \mathcal{C}^\rho_{\eps,\mu}$, respectively. We have $E_\rho \subseteq E_{cr_\rho} \subseteq E_{cr-}$. Therefore, there exists a cluster $C_- \in \mathcal{C}_{\eps-\rho,\mu}$ that contains all the vertex in $V_\rho$, and hence $\Tilde{C}\subseteq C_-$.

\end{proof}

\vspace{-2mm}
\subsection{A Unified Algorithm Framework}

For ease of presentation, 
we 
introduce 
a {\em unified algorithm framework} for solving {\em DynStrClu-AllPara}, 
which is shown in Algorithm~\ref{algo:framework}.
The SOTA exact and approximate algorithms discussed in this paper, as well as our solutions,
can work under this framework.
%
At a high level, these algorithms implement the following data structures:
\begin{itemize}[leftmargin = *]
\item {\bf \em Sorted Neighbor Lists}: for each vertex $u \in V$, a {\em non-increasing sorted} list of $u$'s neighbors by their similarities to $u$; with a slight abuse of notation, we simply use $N(u)$ to refer to this sorted list, and each neighbor $v$ of $u$ in this list is stored along with its similarity to $u$. 
\item {\bf \em EdgeSimStr}: a data structure for maintaining the (approximate) similarities for all the edges;
there are five functions:
\begin{itemize}
\item update($(u,v)$, $op$): given an update of edge $(u,v)$, where $op$ indicates whether this is an insertion or a deletion, update the information maintained in {\em EdgeSimStr} accordingly;
\item insert($(x,y)$): insert an edge $(x, y)$ to {\em EdgeSimStr};
\item delete($(x,y)$): delete an edge $(x, y)$ from {\em EdgeSimStr}; 
\item find($(u,v)$, $op$): return a set $F$ of all the affected edges whose similarities are considered ``invalid'' and thus need to be re-computed;
\item cal-sim($(x, y)$): given an edge $(x, y)$, return $\sigma(x,y)$;  
\end{itemize} 
\item {\bf \em CoreFindStr}: a data structure for finding core vertices; it has two functions:
\begin{itemize}
\item update($u$): given a vertex $u \in V$, update {\em CoreFindStr};
\item find-core($\eps$, $\mu$): return the set $\Vc$ of all the core vertices with respect to the given parameters $\eps$ and $\mu$; 
\end{itemize}
\end{itemize}
%

\noindent
{\bf Running Time Analysis.}
Let
$\costeu$, $\costei$, $\costed$, $\costef$ and $\costec$ denote the running time cost of each invocation of the functions {\em update}, {\em insert}, {\em delete}, {\em find} and {\em cal-sim} in {\em EdgeSimStr}, respectively;
and $\costcf$ and $\costcu$ denote the running time cost of the functions {\em find-core}
and {\em update} in {\em CoreFindStr}, respectively.

\noindent
\underline{\em Query Running Time.}
By Observation~\ref{fact:clustering-graph}, the running time cost for each query is bounded by 
$O(\costcf + m_{cr})$.

\noindent
\underline{\em Per-Update Running Time.}
In the Update Procedure in Algorithm~\ref{algo:framework},
Line~2 takes $O(\costeu)$ time and Lines~3-9 takes $O(\costec + \costei + \costed + \log n)$ time, where $O(\log n)$ is the maintenance cost for the sorted neighbor lists.
Furthermore, the running time cost of Line~10 
is bounded $O(\costef)$ while that of Lines 11-15  is bounded by $O(|F| \cdot (\costec +  \costei + \costed + \log n))$. 
Finally, Lines 17 - 18 can be performed in $O(|F| \cdot \costcu)$ because $|S| \in O(|F|)$. 
Summing all these up, 
the overall running time of each update is bounded by 
\begin{small}
\begin{equation*}\label{eq:query-bound}
O\left(\costeu + \costef +  
(|F| + 1) \cdot (\costec + \costei + \costed + \log n + \costcu)\right)\,.
\end{equation*}
\end{small}

With this algorithm framework, one can just focus on the implementations for {\em EdgeSimStr} and {\em CoreFindStr} of different algorithms.
Substituting the corresponding costs to the above analysis, the query and per-update running time bounds follow.
 
\begin{small}
\begin{algorithm}[t]
\caption{A Unified Algorithm Framework}\label{algo:framework}
\SetKwComment{Comment}{// }{}

\SetKwBlock{update}{Update Procedure:}{end}
\SetKwFunction{esm}{EdgeSimStr}
\SetKwFunction{esf}{EdgeSimStr.find}

\SetKwFunction{esd}{EdgeSimStr.delete}
\SetKwFunction{esi}{EdgeSimStr.insert}
\SetKwFunction{esu}{EdgeSimStr.update}
\SetKwFunction{esc}{EdgeSimStr.cal-sim}

\SetKwFunction{csu}{CoreFindStr.update}

\SetKwBlock{query}{Query Procedure:}{end}
\SetKwFunction{icv}{CoreFindStr.find-core}

\update{
\KwIn{an update of $(u,v)$ flagged by $op \in \{\text{ins}, \text{del}\}$}

\esu{$(u,v)$, $op$}\;
\If{$op == \text{ins}$}    
{
\esc{$(u,v)$}\;
\esi{$(u,v)$}\;
insert $(u,v)$ to $E$ and maintain $N(u)$ and $N(v)$\;
}
\Else{
\esd{$(u,v)$}\;
remove $(u,v)$ from $E$ and maintain $N(u)$ and $N(v)$\;
}
\Comment{identify all the ``invalid'' edges}
$F \leftarrow$ \esf{$(u,v)$, $op$}\;
\For{each $(x,y)\in F$}{
\esd{$(x,y)$}\;
\esc{$(x,y)$}\;
\esi{$(x,y)$}\;
maintain $N(x)$ and $N(y)$\;
}
$S \leftarrow$ $\{\text{end-vertices of all the edges in}~F \} \cup \{u,v\}$\;
	
\For{each $x \in S$}{    
\csu{$x$}\;
}
}

\query{
\KwIn{parameters $0 < \eps < 1$ and $\mu \geq 1$}
\KwOut{a clustering result with respect to $\eps$ and $\mu$}
    $\Vc \leftarrow \emptyset$, $E_{cr} \leftarrow  \emptyset$\;
    
    $\Vc \leftarrow$ \icv{$\eps$, $\mu$}\;
	\For{each $u\in \Vc$}{
	\For{each $v\in N(u)$}{ 
			if $\sigma(u,v) \geq \eps$, add $(u,v)$ to $E_{cr}$; otherwise, break\;
	}
	} 
    \KwRet{the clustering result from $G_{cr}$ induced by $E_{cr}$;}
}
\end{algorithm}
\end{small}

\vspace{-2mm}
\subsection{A SOTA Exact Algorithm}
\label{sec:gsindex}

\begin{figure*}[t]
\begin{tabular}{c|c|c}
    \begin{subfigure}{.2\linewidth}
        \centering
        \includegraphics[width=\textwidth]{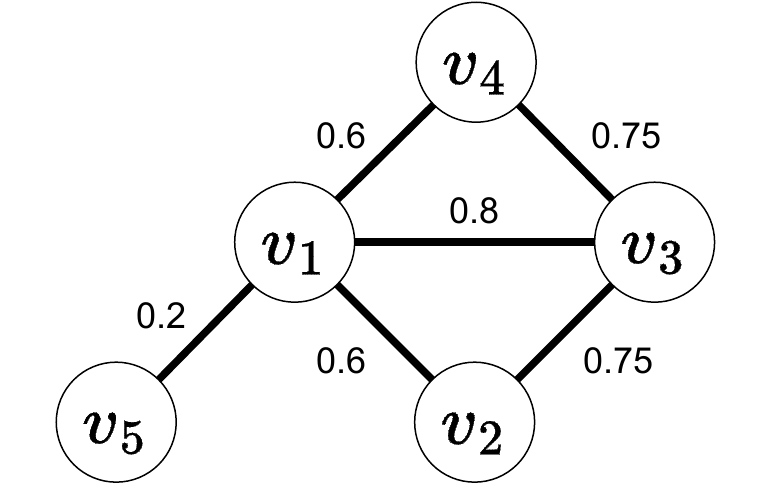}
        \caption{A graph example}
        \label{fig:index_graph}
    \end{subfigure} &
    \begin{subfigure}{.4\linewidth}
        \centering
        \includegraphics[width=\textwidth]{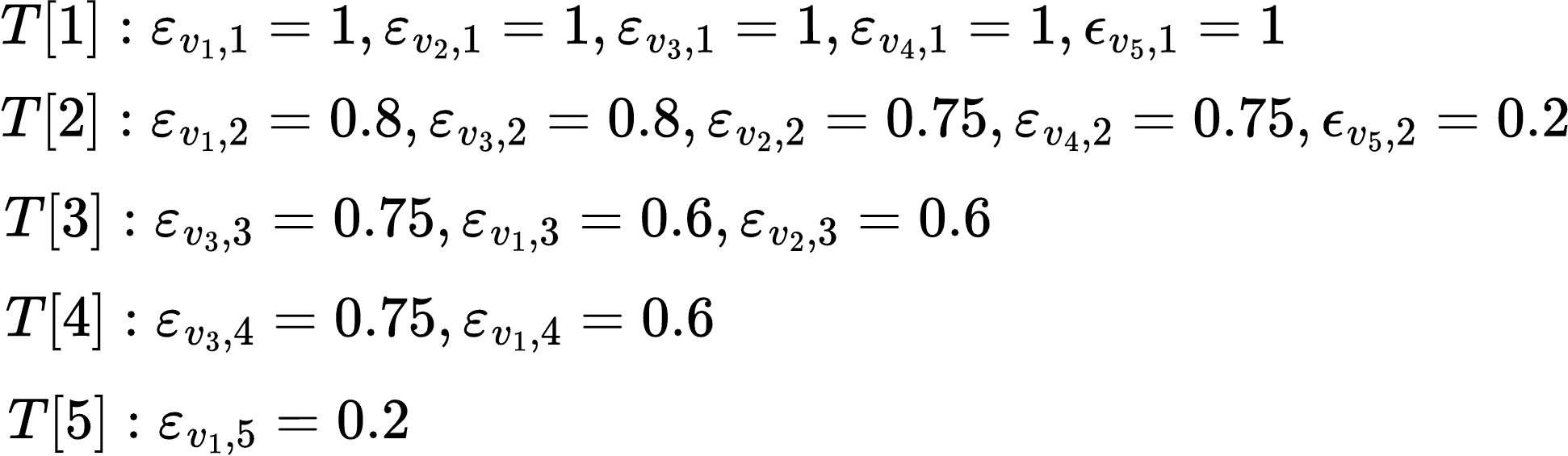}
        \caption{GS*-Index example}
        \label{fig:gs_index}
    \end{subfigure} &
    \begin{subfigure}{.35\linewidth}
        \centering
        \includegraphics[width=\textwidth]{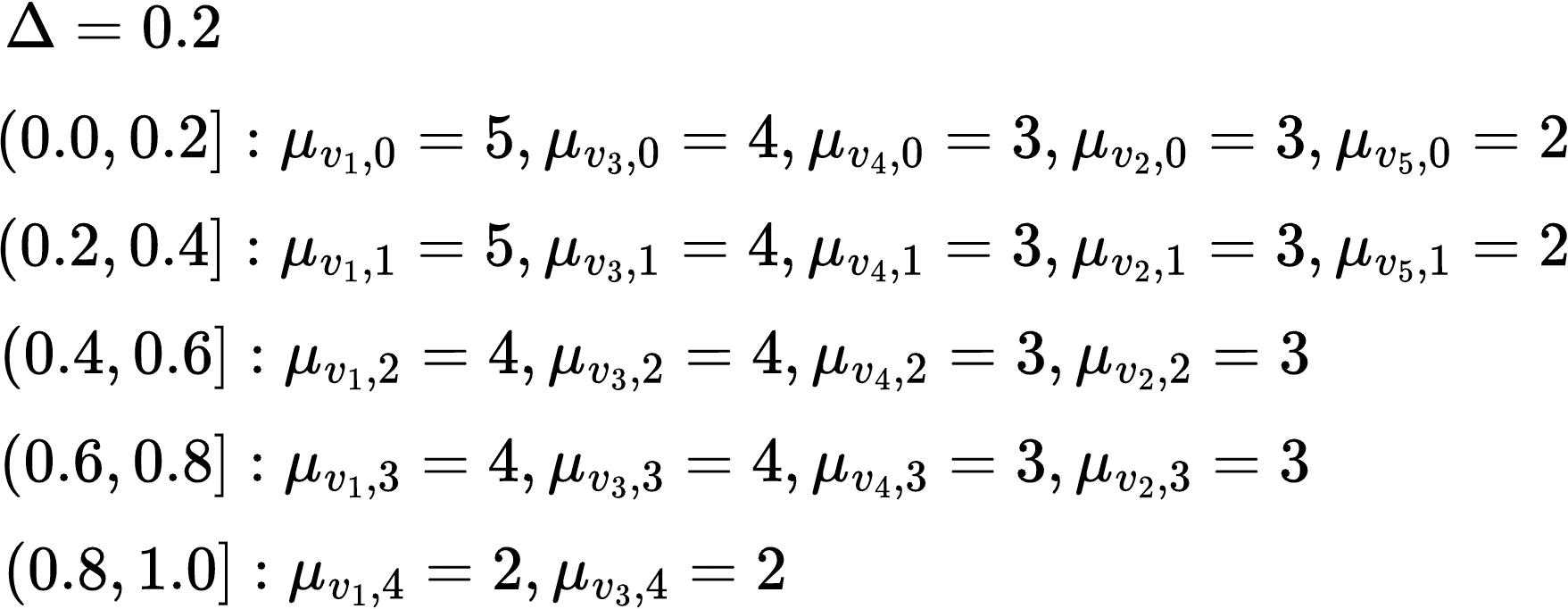}
        \caption{BOTBIN example}
        \label{fig:botbin}
\end{subfigure}
\end{tabular}
\caption{Index schema examples}
\end{figure*}

The {\em GS*-Index}~\cite{wen2017efficient} 
is a state-of-the-art exact algorithm for {\em DynStrClu-AllPara}. 
It implements {\em EdgeSimStr} and {\em CoreFindStr} as follows.

\noindent
{\bf The Implementation of {\em EdgeSimStr}.}
For each vertex $u \in V$, the {\em EdgeSimStr}, maintain $d_u$, the degree of $u$, and $I(u,x)$, the {\em intersection size} of $N[u]$ and $N[x]$, for each neighbor $x \in N(u)$. 
And the functions 
are implemented as follows:
\begin{itemize}[leftmargin = *]
\item update($(u,v)$, $op$): maintain the counters $d_u$ and $I(u,x)$ for each $x \in N(u)$ according to the given update. Perform the same maintenance symmetrically for the end-vertex $v$.
Therefore, $\costeu \in O(d_u + d_v) \subseteq O(d_{\max})$, where $d_{\max}$ is the largest degree of $G$.

\item find($(u,v)$, $op$): 
return the set of all the affected edges of the update $(u,v)$.
Hence, $|F| \in O(d_u + d_v)$ and hence, $\costef \in O(d_{\max})$.

\item cal-sim($(x, y)$): compute $\sigma(u,v)$ with $I(u,v)$, $d_u$ and $d_v$.
Thus, $\costec \in O(1)$. 

\item neither insert($(x,y)$) nor delete($(x,y)$) is used in GS*-Index; hence, $\costei = 0$ and $\costed = 0$. 
\end{itemize}

\noindent
{\bf The Implementation of {\em CoreFindStr}}: The GS*-Index implements the 
{\em CoreFindStr} as a {\em $\mu$-Table}, denoted by $T$, which is essentially an array of $d_{\max}$ sorted lists.
Specifically, for integer $1 \leq i \leq d_{\max}$, $T[i]$ stores a {\em sorted} list of all the vertices $u$ (with degrees $d_u \geq i$) in a {\em non-increasing} order by 
the $i^\text{th}$ largest similarity of $u$ to its neighbors, denoted by $\eps_{u,i}$.
\begin{itemize}[leftmargin = *]
\item find-core($\eps$, $\mu$): 
retrieve all the core vertices 
by scanning the vertices in the sorted list $T[\mu]$
until the first vertex $x$ such that $\eps_{x, \mu} < \eps$ is met or the entire list has been retrieved.
$\costcf \in O(|\Vc| + 1)$.

\item update($x$): maintain the sorted lists $T[1], \cdots, T[d_x]$ for $x$ accordingly. This takes $\costcu \in O(d_{\max}\cdot \log n)$ time because the maintenance on each of these sorted lists takes $O(\log n)$ time. 
\end{itemize}
%
\vspace{-1mm}
\begin{fact}
The GS*-Index can answer each query in $O(m_{cr})$ time and can handle each update in $O(d^2_{\max} \cdot \log n)$ time with space consumption bounded by $O(n + m)$ at all times.
\end{fact}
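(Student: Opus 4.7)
The plan is to derive all three claims (query time, update time, space) by substituting the per-operation costs of GS*-Index's \emph{EdgeSimStr} and \emph{CoreFindStr} (already established in this section) into the generic running-time and space bounds derived from Algorithm~\ref{algo:framework}.

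For the query bound, I would start from the framework's query running time $O(\costcf + m_{cr})$. Since GS*-Index implements \emph{find-core} via the $\mu$-Table, I already have $\costcf \in O(|\Vc| + 1)$. The key observation to complete the argument is that every core vertex $u \in \Vc$ has at least $\mu \geq 1$ similar neighbors in $G$, so $u$ is an end-vertex of at least one edge in $E_{cr}$, and therefore $|\Vc| \leq |V_{cr}| \leq 2 m_{cr}$. This yields the claimed $O(m_{cr})$ query time.

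For the update bound, I would substitute the tabulated per-operation costs of GS*-Index into the per-update expression
\[
O\bigl(\costeu + \costef + (|F|+1)\cdot(\costec + \costei + \costed + \log n + \costcu)\bigr).
\]
Here $\costeu, \costef, |F| \in O(d_{\max})$, $\costec \in O(1)$, $\costei = \costed = 0$ (since these functions are unused by GS*-Index), and $\costcu \in O(d_{\max}\cdot \log n)$, because \emph{update} maintains $O(d_{\max})$ sorted lists $T[1],\dots,T[d_x]$ and each list update is an $O(\log n)$ operation. Plugging in gives $O(d_{\max} + d_{\max}\cdot d_{\max}\log n) = O(d_{\max}^2 \cdot \log n)$, as stated.

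For the space bound, I would argue component by component. The sorted neighbor lists consume $O(n + m)$ space. The \emph{EdgeSimStr} keeps one degree counter per vertex and one intersection counter $I(u,x)$ per edge endpoint, totaling $O(n + m)$. The $\mu$-Table contains $d_{\max}$ sorted lists, but a vertex $u$ appears in exactly $d_u$ of them (namely $T[1],\dots,T[d_u]$), so the total storage is $\sum_{u \in V} d_u = 2m$, i.e., $O(n + m)$. Summing the three contributions gives the claimed $O(n+m)$ space at all times. The main (minor) obstacle is ensuring that the bookkeeping argument for the $\mu$-Table is tight --- a naive analysis would blow up to $O(n\cdot d_{\max})$ --- but the column-vs-row counting above avoids that pitfall.
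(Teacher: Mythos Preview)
Your proposal is correct and matches the paper's approach exactly: the paper does not give a standalone proof of this Fact but instead derives it implicitly by instructing the reader to substitute the per-operation costs of GS*-Index into the generic framework bounds (``Substituting the corresponding costs to the above analysis, the query and per-update running time bounds follow''). Your write-up in fact fills in two small details the paper leaves tacit---the bound $|\Vc| \le 2m_{cr}$ needed to absorb $\costcf$ into $O(m_{cr})$, and the per-vertex counting argument showing the $\mu$-Table occupies $\sum_u d_u = 2m$ entries rather than the naive $n\cdot d_{\max}$.
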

\vspace{-3mm}
\subsection{A SOTA  Approximate Algorithm}
\label{sec:botbin}

While the space consumption and the query time complexity of GS*-Index are good,
Unfortunately, the $O(d_{\max}^2 \cdot \log n)$ per-update time of GS*-Index is prohibitive as $d_{\max}$ can be as large as $n$.
To remedy this, 
BOTBIN~\cite{zhang2022effective} adopts the notion of {\em $\rho$-absolute-approximation}
for {\em DynStrClu-AllPara}.
It improves the per-update cost from $O(d_{\max}^2 \cdot \log n)$ to roughly $O(\log^2 n)$ {\em in expectation} assuming that the updates are uniformly at random within the neighborhood of each vertex.
However, BOTBIN only works for Jaccard similarity measurement and this expected update bound only holds under an assumption that the updates on $G$ are {\em uniformly at random}.

\noindent
{\bf The Implementation of {\em EdgeSimStr}.}
BOTBIN maintains a {\em bottom-$k$ signature}, denoted by $s(u)$, for each vertex $u \in V$ for some integer parameter $k$.
An $\rho$-absolute-approximate Jaccard similarity, denoted by $\tilde{\sigma}(u,v)$, between any two vertices $u$ and $v$
can be computed with their signatures $s(u)$ and $s(v)$. 
Specifically, it first generates and stores a fixed random permutation $\pi$ of $V$.
For each vertex $u\in V$, 
if $d_u \geq k$, then the {\em signature} of $u$, denoted by $s(u)$, is the set of the $k$ smallest neighbors in $N[u]$ according to 
the permutation order $\pi$; 
otherwise, $s(u) = N[u]$.
With the bottom-$k$ signatures, an $\rho$-absolute-approximate Jaccard similarity, denoted by $\tilde{\sigma}(u,v)$, between any two vertices $u$ and $v$
can be computed with $s(u)$ and $s(v)$. 
is computed as 
$\tilde{\sigma}(u,v) = \frac{|s(u) \cap s(v) \cap s(\{u,v\})|}{k}$, where $s(\{u,v\})$ is the $k$ smallest vertices in $s(u) \cup s(v)$ according to the permutation $\pi$; if $|s(u) \cup s(v)| < k$, then $s(\{u,v\}) = s(u) \cup s(v)$.

\begin{itemize}[leftmargin = *]
\item update($(u,v)$, $op$): update the signatures $s(u)$ and $s(v)$ with respect to the update of $(u,v)$ accordingly; it is known that this can be achieved in $O(\log n)$ time.
Thus, $\costeu \in O(\log n)$.
 
\item find($(u,v)$, $op$): if the signature $s(u)$ is changed due to this given update of $(u,v)$, add all the edges incident on $u$ to $F$; and perform the same symmetrically for $v$.
As a result, either $s(u)$ or $s(v)$ changes, then $\costef \in O(d_{\max})$; otherwise, $\costef \in O(1)$.

\item cal-sim($(x, y)$): return $\tilde{\sigma}(x,y)$ as the similarity of $x$ and $y$. This can be done in $O(k)$ time.

\item neither insert($(x,y)$) or delete($(x,y)$) is used in BOTBIN; hence, $\costei = 0$ and $\costed = 0$.

\end{itemize}

\noindent
{\bf The Implementation of {\em CoreFindStr}.}
BOTBIN implements {\em CoreFindStr} as an array, called {\em $\Delta$-Table} and denoted by $T_{\Delta}$, of $\lceil \frac{1}{\Delta} \rceil$ sorted list of vertices, where $0 < \Delta < 1$ is a {\em constant}. 
Specifically, 
the parameter $\Delta$ partitions value range of $\eps$ into $\lceil \frac{1}{\Delta} \rceil$ intervals, where the $i^\text{th}$ interval is $[i \Delta, (i + 1) \Delta)$ for $i = 0, \cdots, \lceil \frac{1}{\Delta}\rceil-1$.
$T_{\Delta}[i]$ 
is a sorted list of all vertices $u \in V$ in {\em non-increasing} order by
$\mu_{u,i}$, where $\mu_{u,i}$ is 
 the number of neighbors of $u$ have similarities to $u$ at least $i \Delta$.
\begin{itemize}[leftmargin = *]
\item find-core($\eps$, $\mu$): identify $i^* = \lfloor \eps / \Delta \rfloor$; retrieve all the core vertices by scanning and reporting the vertices in the sorted list $T_{\Delta}[i^*]$ until the first vertex $x$ such  
that $\mu_{x,i^*} > \mu$ or the entire list has been retrieved.
Thus, $\costcf \in O(|\Vc| + 1)$.
Note that $\Delta$-Table introduces an additive $\Delta$ error to the overall approximation.

\item update($x$): maintain the $\lceil \frac{1}{\Delta} \rceil \in O(1)$ sorted lists in $T_{\Delta}$ for $x$. 
This takes $\costcu \in O(\lceil \frac{1}{\Delta} \rceil \cdot \log n) = O(\log n)$ time.

\end{itemize}

\noindent\underline{A Running Example.} Figure~\ref{fig:botbin} shows a running example of $\Delta$-Table with $\Delta = 0.2$. Take $[0.2,0.4)$ for instance, where $\mu_{v_1,1} = 5$ meaning that $v_1$ has 5 similar neighbors when $0.2 \leq \eps < 0.4$. Given a query with $\eps = 0.3$ and $\mu = 4$, we know that $v_1$ and $v_3$ are core vertices as only $\mu_{v_1,1}$ and $\mu_{v_3,1}$ are greater than or equal to $4$ in the corresponding sorted linked list.

\noindent
{\bf Theoretical Analysis.}
\cite{zhang2022effective} showed that 
by setting $k \in O(\frac{1}{\rho^2} \cdot \log (n\cdot M) ) = O(\log (n \cdot M))$ and $\Delta = \frac{1}{2}\rho$, BOTBIN guarantees to return a valid $\rho$-absolute-approximate clustering result with high probability, specially at least $1 - \frac{1}{n}$, for every query. This guarantee holds for up to $M$ updates.   

\noindent
\underline{\em Query Running Time.} Since $\costcf$ in bounded by $O(|\Vc| +1)$, the running time of each query is bounded by $O(m_{cr})$. 

\noindent
\underline{\em Per-Update Running Time.}
Substitute the running time cost of each function in the above implementation to Expression~\eqref{eq:query-bound},
the per-update running time of BOTBIN is thus bounded by $O(|F| \cdot \log (n\cdot M)) \subseteq O(d_{\max} \cdot \log (n\cdot M))$.
While this per-update bound is still prohibitive, \cite{zhang2022effective} proved that, as long as the $M$ updates happen {\em uniformly at random} in the neighborhood of each vertex,
then the signature of a vertex $u$ changes with probability $\frac{k}{d_u}$.
Therefore, the expected size of the ``invalid'' affected edge set $F$ is bounded by $O(\frac{k}{d_u} \cdot d_u) = O(\log (n\cdot M))$.
As a result, the expected per-update cost is bounded by $O(\log^2 (n\cdot M))$. 

\vspace{1mm}
\noindent
\underline{\em Space Consumption.} It can be verified that the space consumption of BOTBIN is bounded by $O(n + m)$ at all times.
\vspace{-1mm}
\begin{fact}
BOTBIN can return a valid $\rho$-absolute-approximate clustering result (under Jaccard similarity {\em only}) in $O(m_{cr})$ time with high probability, at least $1 - \frac{1}{n}$, for each query, and this holds for up to $M$ updates.
Furthermore, it can handle each update in $O(d_{\max} \cdot \log (n\cdot M))$ time and the space consumption is bounded by $O(n + m)$ at all times. 

When the $M$ updates occur uniformly at random in the neighborhood of each vertex, then the per-update time is bounded by $O(\log^2 (n \cdot M))$ in {\em expectation}.  
\end{fact}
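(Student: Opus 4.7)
The plan is to decompose the claim into five parts: (i) the $\rho$-absolute-approximation correctness holding with probability $\geq 1 - 1/n$ across all queries over $M$ updates; (ii) the $O(m_{cr})$ query time; (iii) the worst-case per-update time $O(d_{\max} \cdot \log(n M))$; (iv) the expected per-update time $O(\log^2(n M))$ under uniformly random updates; and (v) the $O(n + m)$ space bound. Parts (ii)--(v) reduce to substituting BOTBIN's per-function costs into the unified framework analysis already carried out in the preceding section; part (i) is where the real probabilistic work lies.

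For part (i), I would first establish per-edge concentration. For any fixed pair $(u,v)$, the estimator $\tilde{\sigma}(u,v) = |s(u) \cap s(v) \cap s(\{u,v\})|/k$ equals the fraction of the $k$ smallest elements of $N[u] \cup N[v]$ under the uniformly random permutation $\pi$ that lie in $N[u] \cap N[v]$. Because those $k$ elements form a uniform sample without replacement from $N[u] \cup N[v]$, the estimator has expectation $\sigma(u,v)$, and a Hoeffding-type bound for sampling without replacement yields $\Pr[\,|\tilde{\sigma}(u,v) - \sigma(u,v)| > \rho/2\,] \leq 2\exp(-\Omega(\rho^2 k))$. Setting $k = \Theta(\tfrac{1}{\rho^2}\log(n M))$ drives this failure probability below $1/(n^{c} M)$ for any desired constant $c$. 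A union bound over the at most $O(n^2)$ pairs and $M$ update steps ensures that every similarity estimate used anywhere in the algorithm's execution lies within $\rho/2$ of the true value, except with probability at most $1/n$. The remaining $\rho/2$ budget is absorbed by the $\Delta$-Table granularity $\Delta = \rho/2$: the bucket index $i^\ast = \lfloor \eps/\Delta \rfloor$ satisfies $i^\ast \Delta \in [\eps - \rho/2,\, \eps]$, so every vertex reported as core has at least $\mu$ neighbors of true similarity $\geq \eps - \rho$, while every vertex of true similarity $\geq \eps + \rho$ to at least $\mu$ neighbors is certainly reported. These two implications are exactly the three cases of the $\rho$-absolute-approximation definition.

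For parts (ii)--(iv), I would plug in $\costeu \in O(\log n)$ (standard bottom-$k$ sketch maintenance), $\costef \in O(d_{\max})$ in the worst case and $O(1)$ when neither $s(u)$ nor $s(v)$ changes, $\costec \in O(k) = O(\log(n M))$, $\costei = \costed = 0$, $\costcf \in O(|\Vc|+1)$ and $\costcu \in O(\log n)$ into the per-update formula derived in the framework analysis. Observation~\ref{fact:clustering-graph} immediately converts $\costcf \in O(|\Vc|+1)$ into the $O(m_{cr})$ query time. The worst-case per-update time is then dominated by $|F| \cdot \costec \in O(d_{\max} \log(n M))$. For the expected per-update time, I would argue that when updates are drawn uniformly at random within $N[u]$ (resp.\ $N[v]$) and $\pi$ is an independent random permutation, the event ``$s(u)$ changes'' occurs only when the endpoint of the update lands in the bottom-$k$ prefix of $N[u]$ under $\pi$, which has probability at most $k/d_u$. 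Hence $\mathbb{E}[|F|] \leq 2k$, and the per-update cost in expectation becomes $O(k \cdot \log(n M)) = O(\log^2(n M))$.

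Part (v) is immediate: each signature $s(u)$ has size $\min(k, n_u) \leq n_u$, so the total signature storage is $\sum_u n_u = O(n + m)$; the $\Delta$-Table contributes $\lceil 1/\Delta \rceil \cdot O(n) = O(n)$ space since $\Delta$ is constant, and the sorted neighbor lists contribute $O(n + m)$. The main obstacle I anticipate is the union bound in part (i): the random permutation $\pi$ is sampled once and then reused against the entire (possibly adversarial) update sequence, so all potential estimator evaluations share correlated randomness. The union bound must therefore be taken over a universe fixed in advance --- namely all pairs of vertices and all update indices up to $M$ --- rather than reacting to observed updates, and it is this requirement that forces $k$ to scale logarithmically in both $n$ and $M$.
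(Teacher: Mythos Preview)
Your proposal is correct and follows essentially the same decomposition the paper uses in the ``Theoretical Analysis'' paragraph preceding the Fact: substitute BOTBIN's per-function costs ($\costeu$, $\costef$, $\costec$, $\costcf$, $\costcu$) into the unified framework's query and per-update formulas to obtain parts (ii)--(iv), verify space directly for (v), and invoke the bottom-$k$ concentration plus $\Delta$-Table slack for (i). The only difference is one of detail rather than method: the paper treats this statement as a \emph{Fact} cited from~\cite{zhang2022effective} and does not spell out the Hoeffding/union-bound argument for the bottom-$k$ estimator or the $k/d_u$ signature-change probability, whereas you supply sketches of both; your sketches are sound and are precisely what the cited reference establishes.
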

\vspace{-3mm}

\noindent
{\bf Remark.}
As discussed earlier, BOTBIN has two main limitations:
\begin{itemize}[leftmargin = *]
\item BOTBIN can handle Jaccard similarity only. 

\item The $O(\log^2 (n\cdot M))$ per-update expected running time bound of BOTBIN holds only for random updates. 
As a result, for repeated insertion and deletion of a {\em critical} edge which changes the signatures of its two end-vertices, BOTBIN has to pay $O(d_{\max}\cdot \log (n\cdot M))$ cost for each such update.

\end{itemize}

\section{Our Versatile DynStrClu Algorithm}
\label{sec:our-algo}

Next, we introduce our solution, called {\em \underline{V}ersatile \underline{D}ynamic \underline{St}ructur\underline{a}l Cluste\underline{r}ing} ({\em VD-STAR}), which not only overcomes {\em all} the aforementioned limitations of BOTBIN, but also improves the per-update running time cost to $O(\log n + \log M)$ amortized in expectation. 

Let $n_0$ and $m_0$ be the number of vertices and edges in the graph at the current moment. 
Without loss of generality, we assume that the number of updates $M \leq n_0^2$ since now, because, otherwise, when $M = n_0^2$, we can rebuild everything from scratch 
in $O((n_0 + m_0 + M) \cdot  \log n) = O(M \cdot \log n)$ expected time.
Hence, each of such $M$ updates is charged a $O(\log n)$ amortized expected cost which does not affect the per-update running time bound.   
Moreover, 
it is worth mentioning that the randomness in the running time of {\em VD-STAR} only comes from the use of hash tables.
In this and the next section, we prove this theorem:
\vspace{-1mm}
\begin{theorem}\label{thm:ver-dynstrclu}
Our {\em VD-STAR} algorithm supports all three similarity measurements (Jaccard, Cosine, Dice).
It can return an $\rho$-absolute-approximate clustering result with high probability, at least $1 - \frac{1}{n}$, for each query, and can handle each update in $O(\log n)$ amortized expected time.  
The space consumption of {\em VD-STAR} is bounded by $O(n + m)$ at all times.
\end{theorem}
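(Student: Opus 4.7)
The plan is to instantiate the unified framework of Algorithm~\ref{algo:framework} with a VD-STAR-specific \emph{EdgeSimStr} and \emph{CoreFindStr}, then bound each cost term and apply an amortized argument that, crucially, never invokes a uniformly-random update assumption. I would separate the theorem into three claims---(a) the unified sampling estimator yields a valid $\rho$-absolute-approximation with probability at least $1-1/n$ under all three similarity measurements, (b) each update costs $O(\log n)$ amortized in expectation, and (c) the total space is $O(n+m)$---and prove them in that order.

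For claim (a), I would first show a single unified estimator for $I(u,v)$ based on a bottom-$k$ signature sampled from $N[u]\cup N[v]$ via a random permutation $\pi$. Given $I(u,v)$ and the exact degrees $n_u, n_v$ (which the algorithm tracks trivially), all three similarities (Jaccard, Cosine, Dice) can be written as $I(u,v)$ divided by a function of $n_u,n_v,I(u,v)$, so an additive approximation of $I(u,v)/(n_u+n_v)$ up to $\Theta(\rho)$ propagates to an additive $\rho$ approximation of $\sigma(u,v)$ after routine algebra (this is the content of Section~\ref{sec:analysis}, which the theorem explicitly defers to). A standard Chernoff bound on a bottom-$k$ estimator with $k = \Theta(\rho^{-2}\log(nM))$ gives an additive error of $\rho/2$ with failure probability at most $1/(n^2 M)$ per edge-similarity computation; union-bounding over the at most $O(nM)$ similarity computations that can ever occur during the lifetime of the data structure (since $M\leq n_0^2$) yields the $1-1/n$ global guarantee. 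The $\Delta$-table in \emph{CoreFindStr} absorbs another $\rho/2$ error exactly as in BOTBIN, completing the $\rho$-absolute-approximation.

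For claim (b), the key idea is \emph{update affordability}. For each edge $(x,y)$ and each end-vertex, one tracks a counter that records how many updates since the last similarity recomputation have touched that neighborhood; by the definition of $\rho$-absolute-approximation, the edge's stored label is still valid as long as the accumulated counter is below a threshold of order $\rho \cdot (n_x + n_y)$, because a single update changes $I(x,y)/(n_x+n_y)$ by $O(1/(n_x+n_y))$. Therefore \texttt{find}$((u,v),op)$ returns only those edges whose counters have actually exceeded the affordability threshold. For the amortized analysis I would set up a potential/token argument: charge each of the $M$ updates $O(\log n)$ tokens, and pay $O(\log n)$ tokens at every edge-label-flipping event; since each edge can be ``invalidated'' only after $\Omega(\rho \cdot \max(n_x,n_y))$ updates have touched it, and each such touch can be attributed to exactly one update at a neighboring vertex, the total work across all updates telescopes to $O(M\log n)$. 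The bucketing structure of Section~\ref{subsec:update_afford} lets \texttt{find} enumerate exactly the invalidated edges in $O(|F|+1)$ time rather than scanning the whole neighborhood, so $\costef = O(|F|+1)$; the remaining per-edge costs $\costec,\costei,\costed,\costcu$ are each $O(\log n)$ (signature maintenance plus the $O(1/\Delta)$ sorted-list updates), giving the claimed per-update cost via the framework bound once divided by $M$.

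Claim (c) follows because each vertex stores a signature of size $k = O(\log n)$, each sorted neighbor list is linear in $d_u$, the $\Delta$-table stores $O(n/\Delta) = O(n)$ entries, and the affordability counters and bucket structures add only $O(d_u)$ per vertex, summing to $O(n+m)$. The rebuild-from-scratch trick when $M$ reaches $n_0^2$ preserves the amortized bound as the theorem statement explicitly notes. \textbf{The main obstacle} I expect is the amortized argument for claim (b): specifically, proving that the token accounting works \emph{without} the random-update assumption BOTBIN relied on, which requires showing that the update-affordability threshold for an edge genuinely reflects a worst-case adversarial sequence---i.e., that even if an adversary repeatedly inserts and deletes the same ``critical'' edge, the counters at other affected edges do not grow faster than one unit per adversarial update. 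The bucketing technique must be shown to support both threshold-crossing detection and lazy counter updates in $O(1)$ amortized time; handling the subtle interaction between bucket reassignments when $n_u$ itself changes will likely be the most delicate part of the formal proof.
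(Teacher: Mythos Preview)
Your high-level decomposition into claims (a)--(c) matches the paper, and the amortized token argument you sketch for (b) is essentially what the paper does. But there is a genuine gap in how you handle Cosine similarity, and it shows up in two places.

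First, for claim (a) you propose a bottom-$k$ signature on $N[u]\cup N[v]$, i.e., the BOTBIN estimator. The paper deliberately abandons signatures: its \texttt{cal-sim} (Algorithm~\ref{algo:calsim}) draws \emph{fresh} Bernoulli samples each time it is invoked, picking a random vertex from $N[x]$ or $N[y]$ (with coin bias $n_x/(n_x+n_y)$) and testing membership in the intersection. This matters because bottom-$k$ signatures do not extend cleanly to Cosine or Dice---that limitation is precisely why BOTBIN is Jaccard-only---and because maintained signatures are exactly what forces BOTBIN's random-update assumption. Your hybrid of ``signatures for estimation, affordability for invalidation'' inherits the first problem even if it dodges the second.

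Second, your ``routine algebra'' for propagating an additive error on $I(u,v)/(n_u+n_v)$ to an additive error on $\sigma(u,v)$ fails for Cosine. Writing $\mathrm{Cosine}(u,v)=\frac{n_u+n_v}{2\sqrt{n_u n_v}}\cdot \frac{2I(u,v)}{n_u+n_v}$, the multiplicative factor $\frac{n_u+n_v}{2\sqrt{n_u n_v}}$ is unbounded when $n_u\ll n_v$. The paper handles this with a case split (Observation~\ref{claim:degree-ratio}): if $n_u\le\frac{1}{4}\rho^2 n_v$ it returns $\tilde\sigma=0$ outright, and otherwise the factor is at most $1/\rho$, which forces the sample size for Cosine to be $L=\Theta(\rho^{-4}\log n)$ rather than your $\Theta(\rho^{-2}\log(nM))$. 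The same issue bites your affordability threshold: a single affecting update can change $\mathrm{Cosine}(u,v)$ by $\Theta(1/\sqrt{n_u n_v})$, not $\Theta(1/(n_u+n_v))$, so the correct threshold is $\tau(u,v)=\Theta(\rho^2 n_v)$ (Claim~\ref{claim:update-affordability}), not $\Theta(\rho(n_x+n_y))$ as you wrote. Proving this bound for Cosine under arbitrary adversarial updates---including the degree-ratio case split---is the paper's Section~\ref{sec:update-affordability} and is the non-trivial piece you are missing.
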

\vspace{-1mm}

Our {\em VD-STAR} also works under the unified algorithm framework (Algorithm~\ref{algo:framework}). 
%
Specifically, the implementation of {\em VD-STAR} for {\em CoreFindStr} follows that of BOTBIN, 
i.e., the $\Delta$-Table.
Therefore, we will focus on our implementation for {\em EdgeSimStr}.

\vspace{-2mm}
\subsection{Update Affordability and Background}
\label{subsec:update_afford}

We adopt the notion of $\rho$-absolute-approximation.
Thanks to the approximation, 
{\em VD-STAR} is allowed to just maintain {\em approximate} rather than exact similarities. 
It thus creates room for efficiency improvements.
First, the similarity can now be estimated (within an $\rho$-absolute error) via certain sampling techniques efficiently.
Second, each edge can now afford a certain number of affecting updates before its estimated similarity exceeds the $\rho$-absolute-error range from the last estimation.
Such a number of affecting updates is called the {\em update affordability} of the edge. 
\vspace{-1mm}
\begin{definition}[Update Affordability]
\label{def:update-affordability}
For any edge $(u,v)$, 
consider the moment when an $\frac{1}{2} \rho$-absolute-approximate similarity $\tilde{\sigma}(u,v)$ is just computed;
the {\em update affordability} of $(u,v)$ is 
a {\em lower bound} on the number of 
affecting updates, denoted by $\tau(u,v)$,
such that 
$\tilde{\sigma}(u,v)$ remains a {\em valid} $\rho$-absolute approximation to the exact similarity,
in the sense that $|\tilde{\sigma}(u,v) - \sigma(u,v)| \leq \rho$,
at any moment within $\tau(u,v)$ affecting updates. 
\end{definition}
\vspace{-1mm}
The concept of update affordability was first proposed and 
exploited in~\cite{ruan2021dynamic} for solving the {\em DynStrClu} problem with {\em pre-specified} parameters $\eps$ and $\mu$. 
We borrow this concept and extend it to work for {\em versatile} similarity measurements (Jaccard, Cosine, and Dice).
As we show in Section~\ref{sec:update-affordability}, 
the extension of the concept of update affordability from Jaccard to Cosine similarity is challenging and non-trivial, mainly because of the {\em non-linear} denominator $\sqrt{n_u\cdot n_v}$ in Cosine similarity.
%
To avoid distraction,
we defer the proof of the following claim to Section~\ref{sec:update-affordability}.
\vspace{-1mm}
\begin{claim}
\label{claim:update-affordability}
For any edge $(u,v)$ with $n_u \leq n_v$, 
the update affordability $\tau(u,v) \geq \frac{1}{4} \rho^2 n_v \in \Omega(d_{\max}(u,v))$ holds for any of Jaccard, Cosine and Dice similarity measurements,
where $d_{\max}(u,v) = \max\{d_u, d_v\}$.
\end{claim}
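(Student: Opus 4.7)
The plan is to establish that within $\tau \le \tfrac{1}{4}\rho^2 n_v$ affecting updates the exact similarity $\sigma(u,v)$ drifts by at most $\tfrac{1}{2}\rho$; combined with the $\tfrac{1}{2}\rho$ approximation error of the initial $\tilde\sigma(u,v)$ via the triangle inequality, this delivers the claimed $\rho$-absolute validity. The starting observation is that each affecting update alters each of $I(u,v)$, $n_u$, and $n_v$ by at most one in absolute value, so after $\tau$ updates each counter has drifted by at most $\tau$ (the degenerate self-update of $(u,v)$ itself only absorbs a harmless constant factor).

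For Jaccard and Dice, the denominators $U = n_u+n_v-I$ and $(n_u+n_v)/2$ are linear in the counters and at least $n_v$; a routine manipulation of $\sigma_J' - \sigma_J = (U\Delta I - I \Delta U)/(UU')$ (with the Dice analogue identical up to constants) gives $|\Delta\sigma| = O(\tau/n_v)$, so substituting $\tau \le \tfrac{1}{4}\rho^2 n_v$ (which is strictly stronger than needed, since $\rho^2 \le \rho$) yields $|\Delta\sigma| \le \tfrac{1}{2}\rho$ with constants to spare.

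The main obstacle is Cosine similarity, where $\sqrt{n_u n_v}$ is non-linear and very sensitive to changes in $n_u$ when $n_u \ll n_v$. I would bound the decrease and increase directions separately. For the decrease, case-split on $n_u$: when $n_u \le \tfrac{1}{4}\rho^2 n_v$ the bound $\sigma_C \le \sqrt{n_u/n_v} \le \tfrac{1}{2}\rho$ already gives $\sigma_C - \sigma_C' \le \sigma_C \le \tfrac{1}{2}\rho$ trivially; when $n_u > \tfrac{1}{4}\rho^2 n_v$ the quantity $\sqrt{n_u n_v} \ge \tfrac{1}{2}\rho n_v$ is large enough that a first-order expansion combined with $\sigma_C \le \sqrt{n_u/n_v}$ yields $\sigma_C - \sigma_C' \le \tau/\sqrt{n_u n_v} \le \tfrac{1}{2}\rho$. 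For the increase, I case-split on whether $\tau \le n_u - I$: in the former regime the extremal update pattern is ``add all $\tau$ common neighbours to $v$'', giving $\sigma_C' = (I+\tau)/\sqrt{n_u(n_v+\tau)}$ and $\sigma_C' - \sigma_C \le \tau/\sqrt{n_u(n_v+\tau)} \le \sqrt{\tau/n_v} = \tfrac{1}{2}\rho$ (the last step using $\tau \le n_u$ which holds in this regime); in the latter regime the extremal pattern saturates $I' \le \min(n_u',n_v')$ with a mix of ``add common to $u$'' and ``add common to $v$'', giving $\sigma_C' \le \sqrt{(I+\tau)/(n_v+n_u-I)}$, and a boundary argument reduces verification to $I = 0$ and $I = n_u$, each of which collapses to either $\sqrt{\tau/n_v} \le \tfrac{1}{2}\rho$ or the monotonicity fact that $x \mapsto \sqrt{x+c}-\sqrt{x}$ is decreasing with value $\sqrt{c}$ at $x = 0$.

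The hardest step is establishing the extremal upper bound $\sigma_C' \le \sqrt{(I+\tau)/(n_v+n_u-I)}$ over all valid update sequences, which I plan to do via an explicit minimisation of $(n_u+\tau_1)(n_v+\tau_2)$ subject to $\tau_1+\tau_2=\tau$ and the structural constraint $\tau_2 \le n_u - I$ (coming from $I' \le n_u'$); since the product is concave in $\tau_1$, its minimum on the feasible interval lies at the boundary $\tau_2 = n_u - I$, which upon substitution yields exactly the claimed form and closes the argument.
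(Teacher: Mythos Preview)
Your overall plan (bound the drift of the exact similarity by $\tfrac12\rho$, then invoke the triangle inequality with the initial $\tfrac12\rho$-approximation) is exactly the paper's strategy, and your treatment of Jaccard and Dice as well as your Cosine \emph{decrease} case match the paper closely. The substantive difference is in the Cosine \emph{increase} direction, where you take a genuinely different and harder route.

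The paper uses the \emph{same} case split for increase as for decrease: it first asks whether $n_u\le\tfrac14\rho^2 n_v$. If so, Algorithm~\ref{algo:calsim} has set $\tilde\sigma(u,v)=0$, and one only needs $\sigma'(u,v)\le\rho$; the paper gets this by tracking the ratio $\beta'=n_u'/n_v'$ and showing $\beta'\le\tfrac12\rho^2$ after $t$ updates, whence $\sigma_C'\le\sqrt{\beta'}<\rho$. If instead $n_u>\tfrac14\rho^2 n_v$, then $\sqrt{n_u n_v}\ge\tfrac12\rho\,n_v$, and after $t$ ``add-common'' updates one has the one-line bound
\[
\sigma_C'-\sigma_C\;\le\;\frac{t}{\sqrt{n_u n_v}}\;\le\;\frac{\tfrac14\rho^2 n_v}{\tfrac12\rho\,n_v}\;=\;\tfrac12\rho.
\]
You already have both ingredients (the $\sqrt{\beta}$ bound and the $\sqrt{n_u n_v}\ge\tfrac12\rho n_v$ bound) from your decrease analysis, so reusing this split for increase would eliminate your entire $\tau$-vs-$(n_u-I)$ casework.

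Your alternative route also has a genuine gap. In the sub-case $\tau>n_u-I$ you assert that the difference $g(I)=\sqrt{(I+\tau)/(n_u+n_v-I)}-I/\sqrt{n_u n_v}$ is maximised at $I\in\{0,n_u\}$ by a ``boundary argument,'' but $g$ is not convex in $I$ (indeed it is concave-then-convex), so an interior maximum is not ruled out and the reduction to the endpoints is unjustified as stated. Moreover, your minimisation of $(n_u+\tau_1)(n_v+\tau_2)$ establishes the bound $\sigma_C'\le\sqrt{(I+\tau)/(n_u+n_v-I)}$ only over ``add-common'' sequences; you still need the (true but unstated) extremality of such sequences over arbitrary mixes of insertions and deletions. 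Adopting the paper's $n_u$-threshold split avoids both issues entirely.
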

\vspace{-1mm}
By the definition of update affordability,
when an approximate similarity $\tilde{\sigma}(u,v)$ of an edge $(u,v)$ is just computed,
$\tilde{\sigma}(u,v)$ will remain valid for the next at least $\tau(u,v)-1$ affecting updates.
And hence, 
in order to ensure a valid $\rho$-absolute-approximate similarity for every edge, 
one may need to {\em re-compute} $\tilde{\sigma}(u,v)$ {\em no later} than the arrival of the $\tau(u,v)^\text{th}$ affecting update for each $(u,v) \in E$. 

However, observe that (i) the update affordability can be different for different edges, and 
(ii) an update of edge $(u,v)$
would ``consume'' one affordability for each of its $d_u + d_v \in O(d_{\max})$ affected edges.
Therefore, simply tracking the ``remaining'' update affordability for each affected edge 
can be as expensive as $\Omega(d_{\max})$.
It is challenging to identify the set $F$ of all invalid edges when an update arrives, without touching each of the affected edges.

Ruan~\shortauthors~\cite{ruan2021dynamic} overcome this technical challenge by adopting the {\em Distributed Tracking} technique~\cite{cormode2011algorithms,huang2012randomized,keralapura2006communication} to track the {\em exact moment} when the $\tau(u,v)^\text{th}$ affecting update for each edge occurs. 
They proved that their algorithm can achieve an $O(\log^2 n)$ amortized time for 
processing each update.
Next, we show a {\em simpler yet more efficient} solution 
to identify
the invalid edge set $F$ just in $O(1)$ amortized expected time for each update.
\vspace{-2mm}
\subsection{Our Implementation of {\em EdgeSimStr}}
\label{subsec:edgesim}

\noindent
{\bf Rationale of Our Algorithm.}
The basic idea of our solution for identifying invalid edges is as follows.
For each edge $(u,v) \in E$, once $\tilde{\sigma}(u,v)$ is just computed, we compute 
its update affordability $\tau(u,v)$.
Instead of tracking the exact moment when the $\tau(u,v)^\text{th}$ affecting update arrives,
our algorithm aims to just identify an {\em arbitrary} moment when there have been at least $\frac{1}{4} \lfloor \tau(u,v) \rfloor_2$ affecting updates, where $\lfloor \tau(u,v) \rfloor_2$ is the {\em largest power-of-two} integer that is no more than $\tau(u,v)$, namely, $\lfloor \tau(u,v) \rfloor_2 = 2^{\lfloor \log_2 \tau(u,v) \rfloor}$. 
And such a moment is called a {\em checkpoint moment} of edge $(u,v)$.
Clearly, $\lfloor \tau(u,v) \rfloor_2 \geq \frac{1}{2} \tau(u,v)$.
When a checkpoint moment of $(u,v)$ is identified, there must have been at least $\frac{1}{8} \tau(u,v) \in \Omega(\tau(u,v))$ affecting updates, which
is already ``good enough'' for our theoretical analysis.

To capture the checkpoint moments, 
our algorithm, for each edge $(u,v)$, allocates an {\em affordability quota}, denoted by $q(u,v) = \frac{1}{4}\lfloor \tau(u,v) \rfloor_2$, to the vertices $u$ and $v$.
Once 
an arbitrary moment when at least $q(u,v)$ affecting updates incident on either $u$ or $v$ are ``observed'' since the quota is allocated,
$(u,v)$ is then reported as an invalid edge.
The challenge is how to capture a checkpoint moment for each edge $(u,v)$ before its $\rho$-absolute-approximate $\tilde{\sigma}(u,v)$ becomes invalid, without touching each of the affected edges for every affecting update. 

\noindent
{\bf The Data Structure for {\em EdgeSimStr}.}
For each $u\in V$, {\em VD-STAR} maintains the following information for {\em EdgeSimStr}:
\vspace{-1mm}
\begin{itemize}[leftmargin = *]
\item a counter $c_u$ that records the number of affecting updates incident on $u$ up to date;
initially, $c_u \leftarrow 0$;
\item a {\em sorted bucket linked list} $\mathcal{B}(u)$, where:
\begin{itemize}
\item each bucket $B_i$ has a {\em unique index} $i$ (for $0 \leq i \leq \lceil \log_2 n \rceil$);
\item bucket $B_i$ stores all the neighbors $w \in N(u)$ such that the affordability quota  $q(u, w) = 2^i$;
\item all the {\em non-empty} buckets $B_i$ (which contain at least one neighbor $w \in N(u)$) are materialized in the sorted linked list $\mathcal{B}(u)$ in an {\em increasing order} by their indices $i$. 
\item each non-empty bucket $B_i \in \mathcal{B}(u)$ maintains a counter $\bar{c}_u(B_i)$ that records the counter value $c_u$ of $u$ when $B_i$ is last visited; initially, $\bar{c}_u(B_i)$ is set as the value of $c_u$ when $B_i$ is materialized and added to $\mathcal{B}(u)$; 
\end{itemize} 
\end{itemize}
\vspace{-1mm}

\noindent
{\bf Implementation of {\em EdgeSimStr}.update.}
In our {\em VD-STAR}, the update function of {\em EdgeSimStr} just increases
the counters $c_u$ and $c_v$ by one, respectively, i.e.,
$c_u \leftarrow c_u +1$ and $c_v \leftarrow c_v + 1$, 
recording that there is one more affecting update on them.

\vspace{1mm}
\noindent
{\bf Implementation of {\em EdgeSimStr}.insert.}
The detailed implementation is shown in Algorithm~\ref{algo:insert}.
To insert an edge $(u,v)$ to {\em EdgeSimStr}, 
our algorithm first computes the affordability quota $q(u,v)$,
and then inserts $v$ (resp., $u$) into the corresponding bucket in $\mathcal{B}(u)$ (resp., $\mathcal{B}(v)$).
If the bucket does not exist, then a bucket is created and inserted into the sorted bucket linked list accordingly.

\begin{small}
\begin{algorithm}[t]
\caption{Our Implementation of {\em EdgeSimStr}.insert}\label{algo:insert}
\SetKwComment{Comment}{/* }{ */}

\KwIn{an insertion of edge $(u,v)$ to {\em EdgeSimStr}}  
$\tau(u,v) \leftarrow \frac{1}{4} \rho^2 \max\{n_u, n_v\}$\; 
$q(u,v) \leftarrow \frac{1}{4} \cdot \lfloor \tau(u,v) \rfloor_2$\;
$i \leftarrow \log_2 (q(u,v))$\;
\If{$B_i$ does not exist in $\mathcal{B}(u)$}{
	create $B_i$ and insert $B_i$ to $\mathcal{B}(u)$\;
	set $\bar{c}_u(B_i) \leftarrow c_u$\;
}
insert $v$ to $B_i$\;
perform the above steps for $v$ symmetrically\; 
\end{algorithm}
\end{small}

\vspace{1mm}
\noindent
{\bf Implementation of {\em EdgeSimStr}.delete.}
This function removes $v$ (resp., $u$) from its corresponding bucket in $\mathcal{B}(u)$ (resp., $\mathcal{B}(v)$). If the bucket becomes empty, then the bucket is removed from the bucket list.
The pseudo code is shown in Algorithm~\ref{algo:delete}.

\begin{small}
\begin{algorithm}[t]
\caption{Our Implementation of {\em EdgeSimStr}.delete}\label{algo:delete}
\SetKwComment{Comment}{// }{}

\KwIn{a deletion of edge $(u,v)$ from {\em EdgeSimStr}} 

remove $v$ from its corresponding bucket $B_i$\;
\If {$B_i$ becomes empty}
{remove $B_i$ from $\mathcal{B}(u)$\;
}

perform the above steps for $v$ symmetrically\; 

\end{algorithm}
\end{small}

\vspace{1mm}
\noindent
{\bf Implementation of {\em EdgeSimStr}.find.}
Algorithm~\ref{algo:find} gives 
implementation details.
Observe that all the neighbors $w$ of $u$ are stored in a sorted list of {\em power-of-two} buckets of their corresponding affordability quotas.
When the counter $c_u$ is increased by one, 
it suffices to scan the sorted bucket list to check all the non-empty buckets $B_i$ such that 
the current $c_u$ has passed across their corresponding power-of-two values $2^i$, 
because they were last visited when $c_u = \bar{c}_u(B_i)$ (see Line~4 in Algorithm~\ref{algo:find}).
For each of such buckets $B_i$,
our algorithm reports and adds the edge $(u,w)$ to $F$ for each $w \in B_i$ such that 
$w$ is visited in $B_i$ for the {\em second time}.
The same process is performed for $v$ symmetrically. 
The correctness of this implementation is proved in Section~\ref{sec:correctness}.

\begin{small}
\begin{algorithm}[t]
\caption{Our Implementation of {\em EdgeSimStr}.find}\label{algo:find}
\SetKwComment{Comment}{// }{}
        
\KwIn{either an insertion or a deletion of edge $(u,v)$} 
\KwOut{a set $F$ of potentially invalid edges}

$F \leftarrow \emptyset$\;
$B_i \leftarrow$ the first bucket in $\mathcal{B}(u)$, where $i$ is the index of $B$\;
\While{$B_i$ is not NULL}{
\If{$\lfloor \frac{c_u}{2^i} \rfloor > \lfloor \frac{\bar{c}_u(B_i)}{2^i} \rfloor$}{
\Comment{check this bucket $B_i$}
\For{each $w \in B_i$}{
\If{$w$ is visited in $B_i$ for the second time}{
add $(u,w)$ to $F$\;
}
\Else{
flag $w$ as it has been visited for once\;
}
}
$\bar{c}_u(B_i) \leftarrow c_u$;
$B_i \leftarrow B_i.\text{next}$\;
}
\Else{
stop the scan of $\mathcal{B}(u)$ and break\;
}
}

perform the steps from Line 2 for $v$ symmetrically\; 
\KwRet{$F$ as the set of invalid edges}

\end{algorithm}
\end{small}

\begin{small}
\begin{algorithm}[t]
\caption{Our Implementation of {\em EdgeSimStr}.cal-sim}\label{algo:calsim}
\SetKwComment{Comment}{// }{}

\KwIn{an edge $(x,y)$} 
\KwOut{an $\frac{1}{2} \rho$-absolute-approximate similarity $\tilde{\sigma}(x,y)$}

\If{$n_x \leq \frac{1}{4} \rho^2 n_y$ or $n_y \leq \frac{1}{4} \rho^2 n_x$}
{
\KwRet{$\tilde{\sigma}(x,y) = 0$\;}
}
$L\leftarrow $ the number of samples as required by Lemma~\ref{lmm:correct-sim}\; 
$X \leftarrow 0$\;

\For{$ i = 1, 2, \cdots, L$}{
flip a coin $z$ such that $\text{Pr}[z = 1] = \frac{n_x}{n_x + n_y}$ and $\text{Pr}[z = 0] = \frac{n_x}{n_x + n_y}$\;

\If{z = 1}{
uniformly at random pick a vertex $w \in N[x]$\;
}
\Else{
uniformly at random pick a vertex $w \in N[y]$\;
}
\If{$w \in N[x] \cap N[y]$}{ $X \leftarrow X + 1$\;}
}

$\bar{X} \leftarrow X / L$\;
\text{\bf return } 
\begin{small}
\begin{align*}
\tilde{\sigma}(x,y)=
\begin{cases}
\frac{\bar{X}}{2 - \bar{X}} & \text{for Jaccard similarity} \vspace{1mm}\\
\frac{n_x + n_y}{2 \sqrt{n_x \cdot n_y}} \cdot \bar{X}  & \text{for Cosine similarity} \vspace{1mm}\\
\bar{X} & \text{for Dice similarity} 
\end{cases}
\end{align*}
\end{small}
\end{algorithm}
\end{small}

\vspace{1mm}
\noindent
{\bf Implementation of {\em EdgeSimStr}.cal-sim.}
See Algorithm~\ref{algo:calsim}.

\begin{figure}
    \centering
    \includegraphics[width=0.85\linewidth]{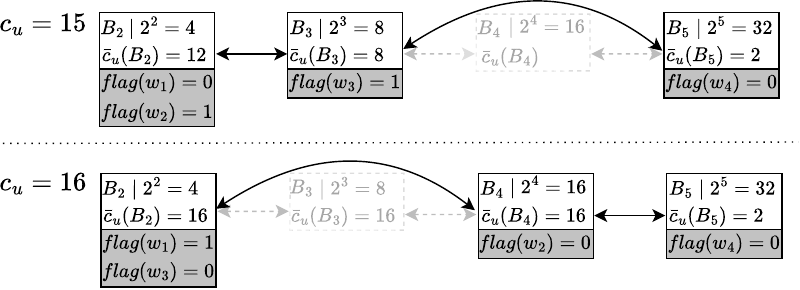}
	\vspace{-2mm}
    \caption{A Running Example of Our {\em EdgeSimStr}}
    \label{fig:running_example}
\end{figure}

\vspace{1mm}
\noindent
{\bf A Running Example.}
Figure~\ref{fig:running_example}
shows a running example of the maintenance of our implementation for {\em EdgeSimStr}. 
At the current status, the affecting update counter of $u$, $c_u = 15$ and there are three non-empty buckets $B_2$, $B_3$ and $B_5$ in the sorted bucket list $\mathcal{B}(u)$, 
where $\bar{c}_u(B_2) = 12$ indicates that when $B_2$ was  
last visited, the value of the counter $c_u$ was $12$.
Moreover, there are two neighbors $w_1$ and $w_2$ in $B_2$, where $w_1$ has not yet been visited while $w_2$ has been visited for once.
When an update incident on $u$ occurs,
$c_u$ is increased by one to $c_u = 16$ and then 
{\em EdgeSimStr}.find (Algorithm~\ref{algo:find}) is invoked and it scans $\mathcal{B}(u)$ from the first bucket $B_2$. 
Since $\lfloor\frac{16}{4}\rfloor > \lfloor\frac{12}{4}\rfloor$ (Line 4), the contents of $B_2$ are checked, where the flag of $w_1$ is set to $1$ indicating that now $w_1$ has been visited once, while the edge $(u, w_2)$ is added to $F$ because $w_2$ is visited for the second time now.
Finally, $\bar{c}_u(B_2) \leftarrow 16$ recording that the ``time'' when $B_2$ was last visited.
This completes the process for $B_2$.
As $\lfloor\frac{16}{8}\rfloor > \lfloor\frac{8}{8}\rfloor$, similarly, $(u, w_3)$ is added to $F$ and $\Bar{c}_u(B_3) \leftarrow 16$. 
The algorithm stops the scanning at $B_5$ because $\lfloor\frac{16}{32}\rfloor = \lfloor\frac{2}{32}\rfloor$. Next,
{\em EdgeSimStr}.delete (Algorithm~\ref{algo:delete}) is invoked for $(u, w_2)$ and $(u, w_3)$ in $F$, and it removes $w_1$ and $w_2$ from $B_2$ and $B_3$, respectively (Line 1). 
As $B_3$ becomes empty, it is then removed from $\mathcal{B}(u)$ (Lines 2-3). 
After the re-calculation of the similarities for $(u, w_2)$ and $(u, w_3)$ with 
{\em EdgeSimStr}.cal-sim (Algorithm~\ref{algo:calsim}), 
{\em  EdgeSimStr}.insert (Algorithm~\ref{algo:insert}) is invoked to insert $w_2$ and $w_3$ to buckets $B_4$ and $B_2$ in $\mathcal{B}(u)$, respectively.

\section{Theoretical Analysis}
\label{sec:analysis}

In this section, we prove the correctness of {\em VD-STAR}, analyze 
the amortized per-update running time and the space consumption.
Putting these results together constitutes a proof for Theorem~\ref{thm:ver-dynstrclu}.

\vspace{-2mm}
\subsection{Correctness}
\label{sec:correctness}
\begin{theorem}\label{thm:correct-sim}
Before and after any update, {\em VD-STAR} maintains a {\em proper} $\rho$-absolute-approximate
similarity $\tilde{\sigma}(u,v)$ for every edge $(u,v) \in E$ with high probability at least $1 - \frac{1}{n}$.
\end{theorem}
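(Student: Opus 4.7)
The plan is to establish two complementary guarantees and combine them via induction on the update sequence. The first is probabilistic: every stored $\tilde{\sigma}(u,v)$ was produced by Algorithm~\ref{algo:calsim} (cal-sim), which yields a $\frac{1}{2}\rho$-absolute approximation to the true similarity with high probability. The second is deterministic: the bucketing mechanism guarantees that $\tilde{\sigma}(u,v)$ is refreshed before the number of affecting updates on $(u,v)$ reaches $\tau(u,v)$, so by the update-affordability property (Claim~\ref{claim:update-affordability}, applied with approximation parameter $\frac{1}{2}\rho$) the true similarity has drifted by at most $\frac{1}{2}\rho$ since that computation. Adding the two $\frac{1}{2}\rho$ contributions yields the $\rho$-absolute-approximation guarantee required by the theorem.

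For the probabilistic part, I will invoke the (deferred) Lemma~\ref{lmm:correct-sim} governing Algorithm~\ref{algo:calsim}: choosing sample size $L = \Theta(\rho^{-2}\log n)$ makes each cal-sim invocation fail with probability at most $n^{-c}$ for a large constant $c$. The total number of cal-sim calls over the entire execution is at most one per inserted edge plus one per edge in each $F$; since $M \leq n^2$ and the amortized analysis in the following section shows $\sum |F| \in O(M\log n)$, the total number of calls is polynomial in $n$. Setting $c$ appropriately and applying a union bound drives the overall failure probability below $1/n$. I then condition on the event that every cal-sim invocation succeeds.

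The deterministic core is the bucket invariant: for every edge $(u,v) \in E$, the number of affecting updates since the most recent cal-sim on $(u,v)$ is strictly less than $\tau(u,v)$. To prove it, I will analyze the bucket $B_i$ in $\mathcal{B}(u)$ with $i = \log_2 q(u,v)$ into which $v$ is inserted at $c_u = C_0$. Because the check $\lfloor c_u/2^i\rfloor > \lfloor \bar c_u(B_i)/2^i\rfloor$ is re-armed after each visit and $c_u$ increments by one per affecting update on $u$, the next boundary crossing of $2^i$ occurs within $2^i$ increments of $c_u$, and the one after that within another $2^i$ increments. Hence $v$'s flag reaches its second value (and $(u,v)$ is added to $F$) after at most $2\cdot 2^i = 2q(u,v)$ affecting updates on $u$ since $v$'s insertion. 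Symmetrically on $v$'s side. If neither side has yet triggered a second visit, the counts on $u$ and on $v$ are each strictly less than $2q(u,v) = \tfrac{1}{2}\lfloor \tau(u,v)\rfloor_2 \leq \tfrac{1}{2}\tau(u,v)$, so their sum---the total number of affecting updates on $(u,v)$---is strictly less than $\tau(u,v)$, as required.

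With the bucket invariant, correctness follows by induction over updates. The base case (empty graph) is vacuous; in the inductive step, the update procedure (Algorithm~\ref{algo:framework}) refreshes via cal-sim the updated edge itself and every edge in $F$, while for every other edge $(x,y)$ incident on $u$ or $v$ that survives the scan, the invariant shows its affecting-update count is still below $\tau(x,y)$, so Claim~\ref{claim:update-affordability} preserves its $\rho$-absolute-approximation; edges disjoint from $\{u,v\}$ are untouched. The main obstacle I anticipate is the boundary analysis of the bucket mechanism: the $\bar c_u(B_i)$ counter is shared across all vertices in $B_i$ and is reset only on creation or on a visit, so a freshly inserted $v$ can find itself in a bucket whose counter is stale, causing its first-visit flag to be raised within the same find call as the insertion. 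Verifying that in every such edge case the $2q(u,v)$ worst-case bound still holds will require a careful case split on whether $B_i$ was newly created (Case X of Algorithm~\ref{algo:insert}) or already existed (Case Y), and on whether the find in the same update step advances the bucket's counter.
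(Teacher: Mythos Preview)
Your proposal follows essentially the same approach as the paper: decompose into a probabilistic guarantee (Lemma~\ref{lmm:correct-sim}, each cal-sim call is a $\frac{1}{2}\rho$-approximation with failure probability $O(n^{-4})$) and a deterministic guarantee (Lemma~\ref{lmm:correct-check}, the bucket mechanism forces a recompute before $\tau(u,v)$ affecting updates accrue), then sum the two $\frac{1}{2}\rho$ errors and union-bound the failures.

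Two minor remarks. First, your union-bound count detours through the amortized analysis of Section~\ref{sec:amortized} to bound $\sum |F|$; this is a forward reference and the stated bound is not quite what that section proves. The paper avoids this entirely with the crude observation that each update affects at most $2n$ edges and $M \le n^2$, giving at most $2n^3$ cal-sim invocations---simpler and self-contained. Second, the edge-case worry you flag at the end is not actually an obstacle: if $v$ is inserted into a pre-existing $B_i$ with a stale $\bar c_u(B_i) < C_0$, the first boundary crossing can only happen \emph{sooner}, which tightens (not loosens) the bound on affecting updates before the second visit. The worst case is precisely when $B_i$ is freshly created with $\bar c_u(B_i) = C_0$, which is the case your $2q(u,v)$-per-side analysis already handles. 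The paper packages the same count as ``at most three total checks, each triggered by at most $q(u,v)$ updates,'' arriving at $4q(u,v)-1 < \lfloor\tau(u,v)\rfloor_2 \le \tau(u,v)$, which is your bound.
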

\noindent
To prove Theorem~\ref{thm:correct-sim}, it suffices to show these two lemmas:
\begin{lemma}\label{lmm:correct-sim}
By setting $L = \frac{1}{2r^2}\ln (4n^4)$, 
where $r = \frac{1}{4} \rho$ for Jaccard, $r = \frac{1}{4} \rho^2$ for Cosine and $r = \frac{1}{2} \rho$ for Dice similarity,
the approximate similarity $\tilde{\sigma}(u,v)$ returned by Algorithm~\ref{algo:calsim} satisfies  
$|\tilde{\sigma}(u,v) - \sigma(u,v)|\leq \frac{1}{2} \rho$ with probability at least $1 - \frac{1}{2n^4}$.
\end{lemma}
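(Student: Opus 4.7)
\medskip

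\noindent\textbf{Proof proposal.}
The plan is to first set up a single underlying unbiased estimator for a common quantity, then derive each of the three similarity estimators from it, and finally convert a concentration bound on the common estimator into the claimed $\tfrac{1}{2}\rho$-absolute error in each case.

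First I would observe that the random variable $X$ generated in Algorithm~\ref{algo:calsim} satisfies
\[
\Pr[X=1] \;=\; \frac{n_x}{n_x+n_y}\cdot\frac{I(x,y)}{n_x} \;+\; \frac{n_y}{n_x+n_y}\cdot\frac{I(x,y)}{n_y} \;=\; \frac{2\,I(x,y)}{n_x+n_y},
\]
so $p := E[X] = \frac{2I(x,y)}{n_x+n_y}$. Each Dice/Cosine/Jaccard similarity can be written as an explicit function of $p$ (and of $n_x,n_y$): $\sigma_{\text{Dice}}=p$, $\sigma_{\text{Jac}}=p/(2-p)$, $\sigma_{\text{Cos}}=\frac{n_x+n_y}{2\sqrt{n_x n_y}}\,p$. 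The estimator $\tilde\sigma(x,y)$ in Algorithm~\ref{algo:calsim} is obtained by plugging $\bar X$ into exactly these formulas, so the entire analysis reduces to controlling $|\bar X - p|$ and then tracking how the function amplifies this error.

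Next I would apply Hoeffding's inequality: since $X_1,\dots,X_L$ are i.i.d.\ in $[0,1]$, $\Pr[|\bar X - p|>r]\le 2\exp(-2Lr^2)$. Plugging in $L=\frac{1}{2r^2}\ln(4n^4)$ gives $\Pr[|\bar X - p|>r]\le \frac{1}{2n^4}$. It then remains to choose $r$ so that $|\bar X - p|\le r$ forces $|\tilde\sigma-\sigma|\le \tfrac{1}{2}\rho$. For Dice this is immediate with $r=\tfrac{1}{2}\rho$. For Jaccard, a short algebraic manipulation
\[
\tilde\sigma - \sigma \;=\; \frac{\bar X}{2-\bar X} - \frac{p}{2-p} \;=\; \frac{2(\bar X - p)}{(2-\bar X)(2-p)},
\]
together with $\bar X, p\in[0,1]$ (so the denominator is $\ge 1$), yields $|\tilde\sigma-\sigma|\le 2|\bar X-p|$; thus $r=\tfrac{1}{4}\rho$ suffices.

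The main obstacle is the Cosine case, where the multiplicative factor $\frac{n_x+n_y}{2\sqrt{n_x n_y}}$ is unbounded in general (it blows up when $n_x/n_y$ is very skewed). This is precisely what the early-return check on Line~1 of Algorithm~\ref{algo:calsim} is for. I would split into two cases. \emph{Case 1 (early return).} If, WLOG, $n_x\le \tfrac{1}{4}\rho^2 n_y$, then $I(x,y)\le n_x$ and hence $\sigma_{\text{Cos}}(x,y)\le n_x/\sqrt{n_x n_y}=\sqrt{n_x/n_y}\le \rho/2$; the same short calculation handles Jaccard ($\sigma\le n_x/n_y\le \rho^2/4\le\rho/2$) and Dice ($\sigma\le 2n_x/(n_x+n_y)\le 2n_x/n_y\le \rho^2/2\le\rho/2$). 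So returning $\tilde\sigma=0$ is deterministically within $\rho/2$ of the truth. \emph{Case 2 (sampling path).} Here both $n_x> \tfrac{1}{4}\rho^2 n_y$ and $n_y> \tfrac{1}{4}\rho^2 n_x$, so WLOG with $n_x\le n_y$ we get $n_y/n_x<4/\rho^2$, and therefore
\[
\frac{n_x+n_y}{2\sqrt{n_x n_y}} \;\le\; \frac{2 n_y}{2\sqrt{n_x n_y}} \;=\; \sqrt{n_y/n_x} \;\le\; \frac{2}{\rho}.
\]
Consequently $|\tilde\sigma_{\text{Cos}}-\sigma_{\text{Cos}}|\le (2/\rho)\,|\bar X-p|$, so choosing $r=\tfrac{1}{4}\rho^2$ guarantees the desired $\tfrac{1}{2}\rho$ bound.

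Finally I would combine the three values of $r$ with the Hoeffding bound above: in each case the stated choice of $L$ matches the value needed, and the failure probability is at most $\frac{1}{2n^4}$, completing the lemma. The only delicate step, as noted, is the Cosine amplification argument, which is why Algorithm~\ref{algo:calsim} prunes extreme degree ratios up front.
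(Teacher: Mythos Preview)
Your proposal is correct and follows essentially the same approach as the paper: both identify $E[\bar X]=\tfrac{2I}{n_x+n_y}$, handle the early-return case via the degree-ratio observation (so $\tilde\sigma=0$ is within $\tfrac12\rho$ for all three measures), and in the sampling case reduce each similarity error to a bound on $|\bar X-E[\bar X]|$ with the stated values of $r$, invoking Hoeffding with $L=\tfrac{1}{2r^2}\ln(4n^4)$ to get the $\tfrac{1}{2n^4}$ failure probability. Your Cosine amplification bound $\tfrac{n_x+n_y}{2\sqrt{n_xn_y}}\le\sqrt{n_y/n_x}\le 2/\rho$ is exactly the paper's computation written contrapositively.
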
 

\begin{lemma}\label{lmm:correct-check}
For any $(u,v) \in E$, its approximate similarity $\tilde{\sigma}(u,v)$ must be recomputed by Algorithm~\ref{algo:calsim} before its update affordability $\tau(u,v)$ is fully consumed, that is, before the arrival of its $\tau(u,v)^\text{th}$ affecting update, since $\tilde{\sigma}(u,v)$ was last computed. 
\end{lemma}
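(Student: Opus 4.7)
The plan is to show that the bucketing scheme correctly reports $(u,v)$ as invalid before $\tau(u,v)$ affecting updates have been processed since $\tilde{\sigma}(u,v)$ was computed. The core idea is to bound (a) the $c_u$-increments between consecutive scans of the bucket containing $v$ in $\mathcal{B}(u)$, and (b) the total affecting updates by combining the firing thresholds of the two symmetric trackers (on sides $u$ and $v$).

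First, I would establish the bucket placement. When $(u,v)$ is inserted into \emph{EdgeSimStr}, Algorithm~\ref{algo:insert} computes $q(u,v) = \frac{1}{4}\lfloor \tau(u,v)\rfloor_2$ and places $v$ into bucket $B_i$ of $\mathcal{B}(u)$ with $i = \log_2 q(u,v)$. Symmetrically, $u$ is placed into the bucket of $\mathcal{B}(v)$ with the same index $i$. Each counter $c_u$ is incremented by exactly $1$ for every affecting update incident on $u$ (and analogously for $c_v$), and for the edge $(u,v)$ being reported invalid it suffices that $v$ be visited twice in its bucket in $\mathcal{B}(u)$, or symmetrically, that $u$ be visited twice in its bucket in $\mathcal{B}(v)$.

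Next, I would prove the ``gap lemma'': between two consecutive scans of $B_i$ in $\mathcal{B}(u)$, $c_u$ increases by at most $2^i = q(u,v)$. This follows from the scan trigger in Algorithm~\ref{algo:find}, Line~4: a scan of $B_i$ occurs precisely when $\lfloor c_u / 2^i \rfloor$ strictly exceeds $\lfloor \bar{c}_u(B_i) / 2^i \rfloor$, and the assignment $\bar{c}_u(B_i) \leftarrow c_u$ after each scan ensures the next scan occurs no later than when $c_u$ reaches the next multiple of $2^i$, which is at most $2^i$ increments away. Combining the gap lemma with the ``visit-twice'' reporting rule, I would show that $v$ is visited twice in $B_i$ within at most $2 \cdot 2^i = 2q(u,v)$ increments of $c_u$ after insertion. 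The tightest case is when $B_i$ was just created at the moment of insertion (so $\bar{c}_u(B_i) = c_u$, preventing an ``in-place'' first visit during that update's find call); then the first scan costs up to $2^i$ increments, and the second costs another $2^i$. In every other case the bound is tighter because the first visit may be absorbed ``for free'' by the insertion's own find call. Symmetrically, the tracker on $\mathcal{B}(v)$ fires within $2q(u,v)$ increments of $c_v$.

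Finally, I would assemble the total-updates bound. Within one computation cycle of $(u,v)$ (from similarity computation until the next reporting as invalid), let $t_u$ and $t_v$ denote the number of increments of $c_u$ and $c_v$, respectively; the number of affecting updates in the cycle is $t_u + t_v$, since updates on $(u,v)$ itself terminate the cycle by removing the edge from \emph{EdgeSimStr} (Algorithm~\ref{algo:delete}) and hence do not appear mid-cycle. The edge is reported as soon as either tracker fires first, so at the reporting time both $t_u \le 2q(u,v)$ and $t_v \le 2q(u,v)$ must hold (otherwise the slower side would have fired earlier). Hence
\[
t_u + t_v \;\le\; 4\,q(u,v) \;=\; \lfloor \tau(u,v) \rfloor_2 \;\le\; \tau(u,v),
\]
i.e.\ $\tilde{\sigma}(u,v)$ is recomputed strictly before the $\tau(u,v)$-th affecting update arrives, as required. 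The main subtlety I expect to handle carefully is the asymmetric ``first-visit-at-insertion'' scenario, where the first of the two visits may coincide with the insertion itself; verifying the $2q(u,v)$ per-side bound in both the just-created and pre-existing bucket cases pins down the argument.
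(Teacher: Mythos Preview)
Your proposal is correct and follows essentially the same approach as the paper's proof: both bound the number of counter increments via the bucket-scan period $q(u,v)=2^i$, use the ``second visit'' trigger, and combine the two sides to obtain at most $4q(u,v)=\lfloor\tau(u,v)\rfloor_2\le\tau(u,v)$ affecting updates. The paper phrases the count as ``at most three visits total plus a residual'' yielding $3q(u,v)+q(u,v)-1$, whereas you bound each side by $2q(u,v)$ and sum; these are the same argument accounted differently. One small point: your displayed bound $t_u+t_v\le 4q(u,v)$ should be strict (the non-firing side satisfies $t_v<T_v\le 2q(u,v)$), which is needed for ``strictly before'' when $\tau(u,v)$ happens to be a power of two---you assert this in the text but the inequality chain does not show it.
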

\vspace{-1mm}
\noindent
{\bf Proof of Theorem~\ref{thm:correct-sim}.}
Suppose 
Lemmas~\ref{lmm:correct-sim} and~\ref{lmm:correct-check} hold;
by the definition of update affordability, 
we have that $\tilde{\sigma}(u,v)$ is a correct $\rho$-absolute-approximation of $\sigma(u,v)$ before and after any update for all edges $(u,v) \in E$. 
To see the success probability, as each update can affect at most $2n$ edges, it can trigger at most $2n$ invocations of Algorithm~\ref{algo:calsim}. 
Moreover, there are at most $M \leq n^2$ updates. Therefore, Algorithm~\ref{algo:calsim} is invoked for at most $2n^3$ times. 
According to Lemma~\ref{lmm:correct-sim}, each invocation fails with probability at most $\frac{1}{2n^4}$.
Thus, the whole process succeeds with probability at least $1 - \frac{1}{n}$.
\vspace{-1.5em}
\myqed

%
%


\noindent
{\bf Proof of Lemma~\ref{lmm:correct-sim}.}
%
Consider an edge $(u, v)\in E$; without loss of generality, we assume that $n_u \leq n_v$. 
\vspace{-1mm}
\begin{observation}\label{claim:degree-ratio}
For any edge $(u, v)$ with 
$n_u = \beta \cdot n_v$, where $0 <  \beta \leq 1$, 
we have:
\begin{itemize}[leftmargin = *] 
\item $Jaccard(u,v) = \frac{I(u,v)}{n_u + n_v - I(u,v)} \leq \frac{\beta\cdot  n_v}{n_v}  = \beta$;
\item $Cosine(u,v) = \frac{I(u,v)}{\sqrt{n_u \cdot n_v}} \leq \frac{n_u}{\sqrt{1/\beta}\cdot n_u} = \sqrt{\beta}$; 
\item $Dice(u,v) = \frac{I(u,v)}{(n_u + n_v)/2} \leq \frac{\beta \cdot n_v}{ n_v / 2} = 2 \beta$.
\end{itemize}
\end{observation}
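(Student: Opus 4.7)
The plan is to derive all three bounds from a single elementary fact: since $I(u,v) = |N[u] \cap N[v]|$ is the cardinality of the intersection of $N[u]$ and $N[v]$, it is bounded above by the smaller of the two set sizes, that is, $I(u,v) \leq \min\{n_u, n_v\} = n_u = \beta n_v$. This will serve as the numerator upper bound in all three cases. For each similarity measure, I then only need a matching lower bound on its denominator, and substitute $n_u = \beta n_v$ to obtain the claimed constant.

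For Jaccard similarity, I would lower-bound the denominator $n_u + n_v - I(u,v)$ by $n_v$; this follows immediately from $I(u,v) \leq n_u$, which rearranges to $n_u - I(u,v) \geq 0$. Combining with the numerator bound $I(u,v) \leq \beta n_v$ yields $Jaccard(u,v) \leq \beta n_v / n_v = \beta$. For Cosine similarity, I would rewrite $\sqrt{n_u \cdot n_v}$ using $n_v = n_u/\beta$ as $n_u \sqrt{1/\beta}$, and then apply $I(u,v) \leq n_u$ to obtain $Cosine(u,v) \leq n_u / (n_u \sqrt{1/\beta}) = \sqrt{\beta}$. For Dice similarity, the denominator $(n_u + n_v)/2$ is at least $n_v/2$ (since $n_u \geq 0$), so combining with $I(u,v) \leq \beta n_v$ gives $Dice(u,v) \leq \beta n_v / (n_v/2) = 2\beta$.

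The argument is essentially routine bookkeeping, so I do not anticipate any real obstacle. The only subtlety worth flagging is ensuring that in each case the denominator lower bound is tight enough to absorb the $\beta$-scaling correctly; for Jaccard and Dice this amounts to dropping the nonnegative quantity $n_u - I(u,v)$ or $n_u$ respectively, while for Cosine the denominator is factored rather than summed, which is precisely why the bound there improves from $\beta$ to $\sqrt{\beta}$.
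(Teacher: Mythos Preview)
Your proposal is correct and matches the paper's approach exactly: the observation in the paper is stated with the bounding chain written inline, and your argument simply spells out the two ingredients behind each inequality, namely $I(u,v)\le n_u=\beta n_v$ for the numerator and the corresponding elementary lower bound on each denominator. There is nothing to add.
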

\vspace{-1mm}
Substituting $\beta = \frac{1}{4} \rho^2$ to Observation~\ref{claim:degree-ratio}, that is, $n_u \leq \frac{1}{4}\rho^2 n_v$, 
then the Jaccard, Cosine and Dice similarities of $(u,v)$ are all $\leq \frac{1}{2} \rho$ for any constant $0 \leq \rho \leq 1$. 
Therefore, for any of the above similarity measurements,
$\tilde{\sigma}(u,v) = 0$ is a correct $\frac{1}{2} \rho$-absolute-approximate similarity,
thus, Lines 1-2 in Algorithm~\ref{algo:calsim} are correct. 

Next, we consider the case that $n_v \geq n_u > \frac{1}{4}\rho^2 n_v$ holds. 
In fact, a proof of this lemma for Jaccard similarity is given in~\cite{ruan2021dynamic} by Ruan~\shortauthors. 
We extend their proof to Cosine and Dice similarity. For completeness, 
we prove all of them in the following.

Let $X_i \in \{0, 1\}$ be a random variable for the $i^\text{th}$ iteration in the for-loop in Lines 5 - 12 in Algorithm~\ref{algo:calsim}. Specifically,
$X_i = 1$ if $X$ is increased by one in Line 12; otherwise, $X_i = 0$.
Therefore, 
$\text{Pr}[X_i = 1] = \text{Pr}[X_i = 1 \land z= 1] + \text{Pr}[X_i = 0 \land z = 0] = \frac{n_u}{n_u + n_v} \cdot \frac{I(u,v)}{n_u} + \frac{n_v}{n_u + n_v} \cdot \frac{I(u,v)}{n_v} = \frac{2 I(u,v)}{n_u + n_v}$.
Furthermore, 
since 
$\bar{X} = \frac{X}{L} = \frac{\sum_{i = 1}^L X_i}{L}$, 
we have the expectation $E[\bar{X}] = E[X_i] = \frac{2\cdot I(u,v)}{n_u + n_v}$.
Thus, we have the following for each of the similarity measurements.

\vspace{1mm}
\noindent
\underline{For Jaccard similarity}, 
we have $Jaccard(u,v) = \frac{E[\bar{X}]}{2 - E[\bar{X}]}$, and by Line~15
, $\tilde{\sigma}(u,v) = \frac{\bar{X}}{2 - \bar{X}}$.
Thus,
$\text{Pr}[| \tilde{\sigma}(u,v) - Jaccard(u,v) | > \frac{1}{2} \rho]  = \text{Pr}[\frac{2 \cdot |\bar{X} - E[\bar{X}]|}{(2 - \bar{X}) (2 - E[\bar{X}])} > \frac{1}{2} \rho ] \leq \text{Pr}[|\bar{X} - E[\bar{X}]| > \frac{1}{4} \rho]$, 
where the last inequality is by both $\bar{X}$ and $E[\bar{X}]$ are values in $[0,1]$. 

\vspace{1mm}
\noindent
\underline{For Cosine similarity}, 
we have $Cosine(u,v) = \frac{n_u + n_v}{2 \sqrt{n_u \cdot n_v}} \cdot E[\bar{X}]$, and by Line~16,
$\tilde{\sigma}(u,v) = \frac{n_u + n_v}{2 \sqrt{n_u \cdot n_v}} \cdot \bar{X}$.
Thus, 
$\text{Pr}[| \tilde{\sigma}(u,v) - Cosine(u,v) | > \frac{1}{2} \rho]  
= \text{Pr}[\frac{n_u + n_v}{2 \sqrt{n_u \cdot n_v}} \cdot |\bar{X} - E[\bar{X}]| >\frac{1}{2} \rho ]  = \text{Pr}[|\bar{X} - E[\bar{X}]| > \frac{2 \sqrt{n_u \cdot n_v}}{n_u + n_v} \cdot \frac{1}{2} \rho] \leq \text{Pr}[|\bar{X} - E[\bar{X}]|> \frac{2 \sqrt{1/4 \rho^2 n_v \cdot n_v}}{2 \cdot n_v} \cdot \frac{1}{2} \rho] = \text{Pr}[|\bar{X} - E[\bar{X}]|> \frac{1}{4} \rho^2] $,
where the last inequality is by the fact that $n_v \geq n_u > \frac{1}{4} \rho^2 n_v$.

\vspace{1mm}
\noindent
\underline{For Dice similarity},
we have $Dice(u,v) = E[\bar{X}]$ and by Line~17, $\tilde{\sigma}(u,v) = \bar{X}$.
Therefore, $\text{Pr}[|\tilde{\sigma}(u,v) - Dice(u,v)| > \frac{1}{2} \rho ] = \text{Pr}[|\bar{X} - E[\bar{X}]| > \frac{1}{2} \rho]$.

According to the Hoeffiding Bound~\cite{hoeffding1994probability}, 
by setting $L = \frac{1}{2 \cdot r^2} \ln \frac{2}{\delta}$, 
we have $\text{Pr}[|\bar{X} - E[\bar{X}] > r] \leq \delta$.
As a result, by setting 
$\delta = \frac{1}{2n^4}$, 
$r_j= \frac{1}{4} \rho$, 
$r_c = \frac{1}{4} \rho^2$, and $r_d = \frac{1}{2} \rho$, respectively for Jaccard, Cosine and Dice similarities, 
we can get the corresponding number of samples $L$ to achieve  
$\tilde{\sigma}(u,v)$ being a correct $\frac{1}{2} \rho$-absolute approximation to $\sigma(u,v)$ with high probability at least $1 - \frac{1}{2n^4}$.
%
\vspace{-1.5em}
\myqed

\noindent
{\bf Proof of Lemma~\ref{lmm:correct-check}.}
Recall that for each edge $(u,v)$ right after $\tilde{\sigma}(u,v)$ is computed, 
$(u,v)$ allocates an affordability quota $q(u,v) = \frac{1}{4} \lfloor \tau(u,v) \rfloor_2$ to an entry in a bucket $B_i$ with index $i = \log_2 q(u,v)$ in both the sorted linked bucket lists $\mathcal{B}(u)$ and $\mathcal{B}(v)$.
According to Algorithm~\ref{algo:find}, $(u,v)$ is reported
as an invalid edge in $F$ when the entry in either the bucket in $\mathcal{B}(u)$ or $\mathcal{B}(v)$ 
is visited for the second time.
As a result, the entry of edge $(u,v)$ can be checked for at most three times in total, because at that time, the entry in either bucket must be checked for twice. 
Moreover, since each checking of the bucket $B_i$ is triggered by at most $q(u,v)$ affecting updates,
there can be at most $3 \cdot q(u,v) + q(u,v) - 1 < \lfloor \tau(u,v) \rfloor_2 \leq \tau(u,v)$ affecting updates happened. 
Lemma~\ref{lmm:correct-check} thus follows. 
\vspace{-1.5em}
\myqed

%

By Theorem~\ref{thm:correct-sim} and the fact that {\em VD-STAR} adopts the $\Delta$-Table for {\em CoreFindStr}, 
 this theorem immediately follows, which completes the correctness proof for {\em VD-STAR}.
\vspace{-1mm}
\begin{theorem}
{\em VD-STAR} returns a $(\rho + \Delta)$-absolute-approximate clustering result, with high probability at least $1 - \frac{1}{n}$, 
for any query with respect to the given parameters $\eps$ and $\mu$. 
\end{theorem}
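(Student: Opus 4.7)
The plan is to combine the two sources of error already analyzed separately, namely the $\rho$-absolute-approximation on the edge similarities (established by the preceding Theorem on the correctness of the edge similarity estimates) and the additive $\Delta$ error introduced by quantizing the similarity threshold with the $\Delta$-Table. First, I would condition on the global ``good event'' $\mathcal{E}$ that, before and after every update in the update sequence, every edge $(u,v) \in E$ has $|\tilde{\sigma}(u,v) - \sigma(u,v)| \leq \rho$. By the previous Theorem, this event holds with probability at least $1 - \tfrac{1}{n}$, and all subsequent claims will be deterministic statements under $\mathcal{E}$.

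Under $\mathcal{E}$, I would show that the clustering result produced by \emph{VD-STAR} on a query $(\eps,\mu)$ is a valid $(\rho+\Delta)$-absolute-approximate clustering by checking the two decisions the algorithm makes: edge labeling and core identification. For edge labeling, recall that the Query Procedure of Algorithm~\ref{algo:framework} compares $\tilde{\sigma}(u,v)$ against $\eps$; since $|\tilde{\sigma}(u,v) - \sigma(u,v)| \leq \rho$, any edge the algorithm labels similar has $\sigma(u,v) \geq \eps - \rho$ (hence it is not forced-dissimilar under $(\rho+\Delta)$-absolute-approximation), and any edge it omits has $\sigma(u,v) \leq \eps + \rho$ (hence it is not forced-similar). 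For core identification, I would unfold the $\Delta$-Table definition: $T_\Delta[i^*]$ with $i^* = \lfloor \eps/\Delta \rfloor$ orders vertices by $\mu_{u,i^*}$, the number of neighbors whose \emph{estimated} similarity is at least $i^*\Delta$. Because $i^*\Delta \leq \eps$ and $\eps - i^*\Delta < \Delta$, combined with the $\rho$-absolute error on the estimates, any vertex declared a core has at least $\mu$ neighbors of true similarity $\geq \eps - (\rho+\Delta)$, and any vertex declared a non-core fails to have $\mu$ neighbors of true similarity $\geq \eps + (\rho+\Delta)$. This is exactly what the $(\rho+\Delta)$-absolute-approximation allows.

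Once both decisions are shown compatible with $(\rho+\Delta)$-absolute-approximation, the final step is purely structural: the primitive clusters are the connected components of the core sim-graph, and non-core vertices are attached to neighboring core clusters; this deterministic post-processing preserves the approximation guarantee, so the output is a valid $(\rho+\Delta)$-absolute-approximate clustering result, and combining with the high-probability bound on $\mathcal{E}$ gives the overall $1 - \tfrac{1}{n}$ success probability.

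The main obstacle, as I see it, is the core-identification step: one must be careful that the ordering used by $T_\Delta[i^*]$ is with respect to estimated similarities, so the boundary case where a vertex has exactly $\mu$ neighbors with $\tilde{\sigma} \geq i^*\Delta$ but some of those have true similarity only $\eps - \rho - \Delta$ (or vice versa) must still fall inside the ``free zone'' $[\eps - (\rho+\Delta),\, \eps + (\rho+\Delta)]$ of the $(\rho+\Delta)$-absolute-approximation definition. Once this interval arithmetic is spelled out, the rest is bookkeeping, and the theorem follows.
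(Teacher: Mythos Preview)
Your proposal is correct and takes essentially the same approach as the paper. The paper's own proof is in fact much terser than your plan: it simply states that the theorem ``immediately follows'' from Theorem~\ref{thm:correct-sim} (the $\rho$-absolute-approximation on every edge similarity, holding with probability at least $1-\tfrac{1}{n}$) together with the fact that \emph{VD-STAR} adopts the $\Delta$-Table for \emph{CoreFindStr}, which was already noted in Section~\ref{sec:botbin} to introduce an additive $\Delta$ error to the overall approximation. Your conditioning on the good event, followed by the separate interval-arithmetic checks for edge labeling (threshold $\eps$ on $\tilde{\sigma}$) and core identification (threshold $i^*\Delta$ on $\tilde{\sigma}$), is precisely the content the paper leaves implicit.
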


\noindent
{\bf Remark.}
Given any constant {\em target overall approximation} parameter $\rho^*$, by setting $\rho = \Delta = \frac{1}{2} \rho^*$, 
{\em VD-STAR} can achieve $\rho^*$-absolute approximation
without affecting its theoretical bounds.

\vspace{-1mm}
\subsection{Running Time Analysis}
\label{sec:amortized}

\noindent
{\bf Query Running Time.}
As our {\em VD-STAR} adopts the $\Delta$-Table technique for {\em CoreFindStr}, the query running time bound follows immediately from the analysis in Section~\ref{sec:preli}. Thus, we have:
\begin{lemma}
{\em VD-STAR} can answer each query in $O(m_{cr})$ time.
\end{lemma}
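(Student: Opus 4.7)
The plan is to invoke the generic query-time analysis established earlier for the unified framework (Algorithm~\ref{algo:framework}) and substitute into it the specific cost of the $\Delta$-Table implementation of \emph{CoreFindStr} that \emph{VD-STAR} inherits from BOTBIN. Since the query procedure uses no function of \emph{EdgeSimStr} other than reading the pre-maintained sorted neighbor lists and the correctness of the approximate similarities has already been established (Theorem~\ref{thm:correct-sim}), the argument will be essentially a bookkeeping of the two sources of work: identifying core vertices and scanning their sim-neighbors.

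First, I would recall that in the Query Procedure of Algorithm~\ref{algo:framework}, the only work performed is (i) a single call to \emph{CoreFindStr.find-core}$(\eps,\mu)$ to obtain $\Vc$, and (ii) for each $u\in\Vc$, a walk along the sorted neighbor list $N(u)$, appending $(u,v)$ to $E_{cr}$ while $\sigma(u,v)\ge \eps$ and breaking at the first dissimilar neighbor. By Observation~\ref{fact:clustering-graph}, once $G_{cr}$ is produced, the clustering result is obtained in $O(m_{cr})$ additional time, and the framework analysis in Section~\ref{sec:preli} bounds the total query cost by $O(\costcf + m_{cr})$.

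Second, I would argue that $\costcf \in O(|\Vc|+1)$ for \emph{VD-STAR}, exactly as in BOTBIN: for $i^{*}=\lfloor \eps/\Delta\rfloor$, \emph{find-core} sequentially reports a prefix of the sorted list $T_{\Delta}[i^{*}]$ until encountering a vertex with $\mu_{x,i^{*}}<\mu$, costing time proportional to the number of reported core vertices plus one. Similarly, the scan of $N(u)$ for each $u\in\Vc$ costs $O(1+|\{v\in N(u):\sigma(u,v)\ge \eps\}|)$ since the sorted neighbor list enables early termination at the first dissimilar neighbor.

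Third, the main (and admittedly mild) obstacle is absorbing the additive $|\Vc|$ term into $O(m_{cr})$. The hard part is making sure that \emph{every} reported core vertex is chargeable to an edge of $E_{cr}$, which I would handle by observing that since $\mu \ge 1$, every core vertex has at least one $\eps$-similar neighbor and is hence incident on at least one edge of $E_{cr}$; therefore $\Vc\subseteq V_{cr}$ and $|\Vc|\le n_{cr}\le 2m_{cr}$. Summing $O(1+|\{v\in N(u):\sigma(u,v)\ge\eps\}|)$ over all $u\in \Vc$ yields $O(|\Vc|+m_{cr})=O(m_{cr})$, which combined with $\costcf\in O(|\Vc|+1)=O(m_{cr})$ gives the claimed $O(m_{cr})$ query-time bound.
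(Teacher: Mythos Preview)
Your proposal is correct and follows essentially the same approach as the paper: invoke the framework's generic query bound $O(\costcf+m_{cr})$ and plug in $\costcf\in O(|\Vc|+1)$ from the $\Delta$-Table that \emph{VD-STAR} inherits from BOTBIN. The paper's own proof is a one-liner that defers entirely to the analysis in Section~\ref{sec:preli}; your version is more explicit in justifying why $|\Vc|$ can be absorbed into $O(m_{cr})$ via the observation that every core vertex (with $\mu\ge 1$) is incident on at least one edge of $E_{cr}$, which the paper leaves implicit.
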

\vspace{-1mm}
\noindent
{\bf The Maintenance Cost of an Edge.}
We first analyze the {\em maintenance cost} of each edge $(u,v)$, denoted by $\ell(u,v)$,  between two {\em consecutive} approximation similarity calculations for $(u,v)$. 
Consider the moment when the similarity of an edge $(u,v)$ 
needs to be computed;
according to Algorithm~\ref{algo:framework}, 
the maintenance for $(u,v)$ involves the following operations:
\begin{itemize}[leftmargin = *]
\item a similarity calculation (Algorithm~\ref{algo:calsim}) which takes $\costec$;
\item an invocation of Algorithm~\ref{algo:delete}
to remove the ``old'' quota entries of $(u,v)$ from the buckets in $\mathcal{B}(u)$ and $\mathcal{B}(v)$; this takes $\costed$;
\item an invocation of Algorithm~\ref{algo:insert} to insert the ``updated'' quota entries of $(u,v)$ to buckets in $\mathcal{B}(u)$ and $\mathcal{B}(v)$; this takes $\costei$; 
\item the maintenance of the sorted neighbor lists of $u$ and $v$ due to the change of $\tilde{\sigma}(u,v)$;
this maintenance takes $O(\log n)$ time;
\item the maintenance of $\Delta$-Table for $u$ and $v$ due to the change of their sorted neighbor lists;  
as discussed in Section~\ref{sec:preli}, this cost is bounded by $O(\frac{1}{\Delta} \cdot \log n) = O(\log n)$ since $\Delta$ is a constant;
\item at most three times of visits of the entries of $(u,v)$ in the corresponding buckets before getting reported as an invalid edge; this cost is just $O(1)$. 
\end{itemize}
Summing these costs up, the maintenance cost of $(u,v)$ is:
\begin{equation}\vspace{-1mm}
\label{eq:maintenance-cost}
\ell(u,v) \in O(\costec + \costed + \costei + \log n)\,.
\end{equation}
\vspace{-1mm}
Next, we analyse $\costec$, $\costed$ and $\costei$, respectively. For the cost of similarity calculation, $\costec$, by Algorithm~\ref{algo:calsim},
by Lemma~\ref{lmm:correct-sim}, we know that $L \in O(\log n)$ samples suffice.
Each sample checks if a neighbor $w$ is in $N[u] \cap N[v]$. 
By maintaining a hash table of $N[u]$ and $N[v]$, each of this checking can be performed in $O(1)$ expected time. Therefore, $\costec$ is bounded by $O(L) = O(\log n)$ in expectation. 

To bound the costs $\costei$ and $\costed$ of Algorithms~\ref{algo:insert} and~\ref{algo:delete}, observe that 
inserting and removing an entry from a bucket can be done in $O(1)$ time. 
This can be achieved simply by recording the locations (e.g., the indices in arrays) of the entries in the corresponding buckets. 
The remaining cost are from the operations on the sorted bucket lists $\mathcal{B}(u)$ and $\mathcal{B}(v)$ which include: (i) 
checking if a bucket exists or not, 
(ii) inserting a new bucket,
and (iii) removing an existing bucket.
According to the following Fact~\ref{fact:sorted-linked-list}, each of this operation can be performed in $O(1)$ expected time. And therefore, $\costei + \costed$ is bounded by $O(1)$ in expectation.

\begin{fact}[\cite{gan2024optimal}]\label{fact:sorted-linked-list}
The sorted linked list $\mathcal{B}(u)$ can be maintained with $O(|\mathcal{B}(u)|)$ space and support the following in $O(1)$ expected time: 
\begin{itemize}[leftmargin = *]
\item an insertion or deletion of a bucket to or from $\mathcal{B}(u)$, and  
\item return the pointer of 
the largest bucket $B_i \in \mathcal{B}(u)$ with index $i \leq j$, for 
any given integer index $0 \leq j \leq \lceil \log_2 n \rceil$. 
\end{itemize}
\end{fact}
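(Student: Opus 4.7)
The plan is to maintain three coordinated substructures for $\mathcal{B}(u)$: (i) a doubly-linked list $L$ that threads the non-empty buckets in increasing index order (each node storing its bucket index $i$); (ii) a dynamic hash table $H$ mapping each present index $i$ to its node in $L$; and (iii) a bitmap $W$ of $\lceil \log_2 n \rceil + 1$ bits with bit $i$ set if and only if $B_i \in \mathcal{B}(u)$. Because the index universe has size $O(\log n)$, the bitmap $W$ fits in $O(1)$ machine words under the standard word-RAM model. The linked list uses $O(|\mathcal{B}(u)|)$ space by construction, and the hash table can be kept at constant load factor via standard doubling/halving, so the total space is $O(|\mathcal{B}(u)|)$ as required.

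For the predecessor query---returning the pointer of the largest $B_i \in \mathcal{B}(u)$ with $i \leq j$---I would take the length-$(j+1)$ prefix of $W$ (mask out positions $> j$ with a constant-time bitwise AND against a precomputed mask), then extract the position of its most-significant set bit in $O(1)$ time, and finally perform one $O(1)$-expected lookup in $H$ to return the corresponding node pointer. Insertion of a new bucket $B_i$ proceeds symmetrically: use the bitmap-based predecessor query to locate the splice position in $L$ in $O(1)$ time, splice the new node into $L$, set bit $i$ in $W$, and insert the pair $(i, \text{node})$ into $H$ in $O(1)$ expected time. Deletion reverses these steps: lookup in $H$, unlink from $L$, clear bit $i$ in $W$, and remove from $H$, again in $O(1)$ expected time.

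The main obstacle is justifying the $O(1)$ time for the bitmap predecessor operation in a way that is robust and platform-agnostic. I would handle this by appealing to the standard word-RAM assumption that the machine word has $\Omega(\log n)$ bits and that most-significant-bit extraction on a single word can be performed in $O(1)$ time, either as a unit-cost primitive or via an auxiliary lookup table of size $o(n)$ built once and shared across \emph{all} vertices (so it contributes only $O(1)$ amortized space per instance of $\mathcal{B}(u)$ and does not violate the $O(|\mathcal{B}(u)|)$ bound). A secondary subtlety is preserving the $O(1)$ expected time per hash-table operation under dynamic growth and shrinkage; this is handled by the usual amortized doubling/halving argument combined with universal hashing, which ensures that insertions and deletions on $H$ incur $O(1)$ expected amortized cost while keeping $|H|$ within a constant factor of $|\mathcal{B}(u)|$.
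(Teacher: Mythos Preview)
The paper does not prove this statement; it is presented as a fact cited from an external reference~\cite{gan2024optimal}, with no proof given in the paper itself. Your construction is correct and constitutes a valid self-contained argument: the crucial observation that the bucket-index universe has size $O(\log n)$, so that a bitmap fits in $O(1)$ machine words and supports constant-time predecessor via most-significant-bit extraction, is exactly the right idea. Combining this with a hash table for index-to-node lookup and a doubly linked list for ordered splicing yields the stated bounds.

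Two minor remarks. First, your hash-table operations are $O(1)$ expected \emph{amortized} (because of resizing), which is nominally weaker than the stated $O(1)$ expected; this distinction is immaterial in context since the paper's per-update bound is already amortized, and indeed the only source of randomness the paper acknowledges is hashing. Second, the bitmap contributes $O(1)$ words regardless of $|\mathcal{B}(u)|$, so strictly the space is $O(|\mathcal{B}(u)| + 1)$; this is harmless because the paper sums over all vertices and the $O(n)$ additive term is absorbed into the global $O(n+m)$ bound.
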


\noindent
Putting all the above cost bounds to Expression~\eqref{eq:maintenance-cost}, 
we thus have:
\begin{lemma}\label{lmm:maintenance-cost}
The maintenance cost of each edge $(u,v)$ between two consecutive 
similarity calculations of it, 
$\ell(u,v)$, is bounded 
by $O(\log n)$ in expectation.
\end{lemma}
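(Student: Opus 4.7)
The plan is to derive the bound directly from Expression~\eqref{eq:maintenance-cost}, which already decomposes $\ell(u,v)$ into the sum $O(\costec + \costed + \costei + \log n)$. Since the additive $O(\log n)$ term (from sorted-neighbor-list and $\Delta$-Table maintenance) is already unconditional, what remains is to bound each of the three \emph{EdgeSimStr} function-call costs $\costec$, $\costei$, $\costed$ in expectation and sum them.

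For $\costec$, I would analyze Algorithm~\ref{algo:calsim} step by step. By Lemma~\ref{lmm:correct-sim}, it suffices to set $L = \frac{1}{2r^2}\ln(4n^4)\in O(\log n)$ for each similarity measurement since $r$ is a constant depending only on $\rho$. Inside each of the $L$ iterations, the algorithm performs a biased coin flip, draws a uniform random vertex from $N[x]$ or $N[y]$, and tests whether the sample lies in $N[x]\cap N[y]$. If we additionally maintain hash tables of $N[x]$ and $N[y]$ (which can be updated in $O(1)$ expected time per edge update alongside the sorted neighbor lists), then the intersection membership test is $O(1)$ in expectation, while the coin flip and random draw are $O(1)$ worst-case. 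This yields $\costec \in O(\log n)$ in expectation.

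For $\costei$ and $\costed$, the work in Algorithms~\ref{algo:insert} and~\ref{algo:delete} splits into two parts: (i) operations on the sorted bucket list $\mathcal{B}(u)$ (and symmetrically $\mathcal{B}(v)$), namely locating a bucket with a given index, possibly creating a new bucket, or removing a bucket that has just become empty; and (ii) inserting or removing a single entry from a bucket $B_i$. For part (ii), if each bucket is represented by a doubly-linked list and each edge records pointers to its own entries in the two buckets it occupies, then insertion and deletion of an entry are $O(1)$ worst-case. For part (i), Fact~\ref{fact:sorted-linked-list} gives $O(1)$ expected time per sorted-list operation. Summing, $\costei + \costed \in O(1)$ in expectation.

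Plugging these bounds back into Expression~\eqref{eq:maintenance-cost} yields $\ell(u,v)\in O(\log n)$ in expectation, proving the lemma. The main point of care is purely bookkeeping: I must ensure that the hash tables for $N[u]$ and the per-edge pointers into the buckets of $\mathcal{B}(u)$ are maintained under every insertion, deletion, and re-computation so that the $O(1)$ expected per-operation bounds are preserved; this bookkeeping is local to the endpoints of the edge being processed and does not affect the overall $O(\log n)$ expected maintenance cost.
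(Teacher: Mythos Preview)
Your proposal is correct and follows essentially the same approach as the paper: bound $\costec$ by $O(\log n)$ in expectation via Lemma~\ref{lmm:correct-sim} and hash-table membership tests, bound $\costei+\costed$ by $O(1)$ in expectation via Fact~\ref{fact:sorted-linked-list} plus stored entry locations, and substitute into Expression~\eqref{eq:maintenance-cost}. The only cosmetic difference is that you phrase bucket-entry maintenance in terms of doubly-linked lists with per-edge pointers, whereas the paper says ``recording the locations (e.g., the indices in arrays)''; these are equivalent $O(1)$ mechanisms.
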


\vspace{-1mm}
\noindent
{\bf Amortized Per-Update Cost.}
Next, we analyze the amortized running time for each update. 
Observe that, for an update of edge $(u,v)$,
according to Algorithm~\ref{algo:framework},
the running time cost of processing this update consists of:
\begin{itemize}[leftmargin = *]
\item a maintenance cost of $\ell(u,v)$ for the update;
\item a cost of Algorithm~\ref{algo:find}, $\costef$, to find a set $F$ of invalid edges; 
\item a maintenance cost of $\ell(x,y)$, for each edge $(x,y) \in F$.
\end{itemize}
By Lemma~\ref{lmm:maintenance-cost}, the update cost of an edge $(u,v)$ is bounded by $O(\log n + \costef + |F|\cdot \log n)$ in expectation. 
As in the worst case, the number of invalid edges, $|F|$, can be as large as $O(n)$, and the update cost can be as expensive as $O(n \log n + \costef)$ in expectation.

Fortunately, by update affordability, there must have been a certain number of affecting updates to trigger an edge$(x,y)$ being reported 
as invalid. 
Therefore, we can {\em charge} the costs of $O(\costef)$  and $O(|F|\cdot \log n)$ respectively to those updates which had contributed to 
them.
The key question is how to make the {\em charging argument} for these costs, specifically, which update is charged at what cost.

For simplicity, for the current update of edge $(u,v)$, we only analyze the part of $u$, because the analysis for the part of $v$ is symmetric.

\vspace{1mm}
\noindent
\underline{\em Amortize $\costef$ to Updates.}
According to Algorithm~\ref{algo:find}, 
we know that $\costef$ consists of two parts: (i) the bucket scanning cost, and (ii) the invalid edges reporting cost which is bounded by $O(|F|)$.
Let $K$ be the number of buckets that are checked (satisfying the if-condition in Line~4 of Algorithm~\ref{algo:find}).
Clearly, the scanning cost is $O(K + 1)$, where the ``$+1$'' term comes from the last bucket which does not satisfy the if-condition.
We thus charge this ``$+1$'' cost to the current update $(u,v)$.
%
Since, for each of the $K$ checked buckets, it must have at least one 
neighbor $w \in N(u)$ visited.
If $w$ is visited for the first time, this bucket checking cost can be charged to the maintenance cost $\ell(u,w)$.  
Otherwise, if $w$ is visited for the second time, this bucket checking cost can be charged to
the reporting cost $O(|F|)$, 
which, in turn, can also be further charged to the
maintenance cost of the edges in $F$, as we analyze next.

\vspace{1mm}
\noindent
\underline{\em Amortize the Maintenance Cost of an Edge to Updates.}
For each edge $(u,w)$ reported from a bucket $B_i$ in $\mathcal{B}(u)$,
according to Line~6 in Algorithm~\ref{algo:find}, 
$w$ is visited for the second time in $B_i$. 
Hence, there must have been at least $q(u,w)$ affecting updates incident on $u$ since $w$ was inserted to bucket $B_i$. 
Therefore, $\ell(u,w)$, the maintenance cost of edge $(u,w)$, can be charged to those at least $q(u,w)$ affecting updates, each of which is charged by a cost at most $\frac{\ell(u,w)}{q(u,w)}$.

Consider the current moment when an update of edge $(u,v)$ arrives;
this update $(u,v)$ 
is then charged (from the part of $u$) by 
at most $\sum_{w \in N(u)} \frac{\ell(u,w)}{q(u,w)}$.

Let $q(u, w^*)$ be the update affordability quota value in the {\em smallest} non-empty bucket $B^*$ in  $\mathcal{B}(u)$ at the current moment, and $w^* \in N(u)$. 
Consider the {\em retrospective degree} of $u$, 
denoted by $d_u^\text{ret}$, 
when $w^*$ was inserted to $B^*$, 
that is, when $q(u, w^*)$ was allocated.
The degree, $d_u$, of $u$ at the current moment satisfies:
$d_u \leq d_u^{\text{ret}} + 2 \cdot q(u, w^*)$.  
This is because, otherwise, 
$w^*$ must have been visited twice in $B^*$, and hence, the edge $(u,w^*)$ must have been reported as invalid.
This is contradictory to the fact that $w^*$ is still in $B^*$, and that $(u, w^*)$ is still considered  
as valid since $q(u, w^*)$ was allocated.

Furthermore, since $q(u,w^*) \geq \frac{1}{8} \tau(u,w^*)$  (see Line~2 in Algorithm~\ref{algo:insert}) and by Claim~\ref{claim:update-affordability}, 
we have $q(u,w^*) \in \Omega(\max\{d_u^{\text{ret}} , d_w^{\text{ret}}\})$.
Thus, $d_u^\text{ret} \in O(q(u, w^*))$; and it turns out that:
$d_u \leq d_u^\text{ret} + 2 \cdot q(u, w^*) \in O(q(u,w^*))$.
Therefore, 
the current update $(u,v)$ is charged by at most 
\begin{equation*}
\sum_{w \in N(u)} \frac{\ell(u,w)}{q(u,w)} \leq \frac{d_u}{q(u, w^*)} \cdot O(\log n) = O(\log n)\, \text{in expectation.}
\end{equation*}
%

\noindent
Putting the above-charged costs and the maintenance cost of the update $(u,v)$ itself together, 
we have:
\begin{lemma}\label{lmm:amortized-cost}
The amortized cost of each update is bounded by $O(\log n)$ in expectation.
\end{lemma}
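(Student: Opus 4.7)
The plan is to use an amortization/charging argument that pays for each update's full cost by distributing it across (i) the update itself and (ii) prior updates that consumed affordability quotas. Concretely, by Algorithm~\ref{algo:framework}, processing an update of $(u,v)$ incurs: its own maintenance cost $\ell(u,v)$, the cost $\costef$ of identifying the invalid set $F$, and the maintenance cost $\ell(x,y)$ for every $(x,y)\in F$. By Lemma~\ref{lmm:maintenance-cost}, each individual edge maintenance cost is $O(\log n)$ in expectation, so the remaining task is to show that both $\costef$ and the sum $\sum_{(x,y)\in F} \ell(x,y)$ amortize to $O(\log n)$ per update.

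First I would handle $\costef$ exactly as sketched immediately before the lemma statement: decompose it into the scanning cost over checked buckets plus the $O(|F|)$ reporting cost. The single ``extra'' bucket that fails the if-condition in Line~4 of Algorithm~\ref{algo:find} is charged to the current update $(u,v)$ (a constant cost). Each remaining checked bucket has at least one neighbor $w\in N(u)$ visited; a first-time visit is charged to the future maintenance cost $\ell(u,w)$ (to be paid when $(u,w)$ eventually gets reported), while a second-time visit falls inside the $O(|F|)$ term, which is absorbed into the maintenance costs of edges in $F$. Symmetric accounting handles $v$. Thus the whole of $\costef$ is funneled into either the current update or the maintenance costs of edges that are about to be, or were already inserted into, some bucket.

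Second I would distribute each $\ell(u,w)$ for a reported $(u,w)$ over the $\geq q(u,w)$ affecting updates that occurred since $w$ entered bucket $B_i$ in $\mathcal{B}(u)$ (guaranteed because $w$ was visited twice in $B_i$ by Line~6 of Algorithm~\ref{algo:find}). Each such earlier affecting update then bears a fractional charge $\ell(u,w)/q(u,w)$. The total charge received by the current update $(u,v)$ from the $u$-side is therefore at most
\begin{equation*}
\sum_{w\in N(u)} \frac{\ell(u,w)}{q(u,w)} \;\leq\; \frac{d_u}{q(u,w^*)}\cdot O(\log n),
\end{equation*}
where $w^*$ is the neighbor sitting in the smallest non-empty bucket $B^*\in\mathcal{B}(u)$.

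The main obstacle, and the crux of the argument, is bounding $d_u$ in terms of $q(u,w^*)$. I would argue by invariant: since $w^*$ has not yet been reported, it has not been visited twice in $B^*$, which caps the number of affecting updates incident on $u$ since $w^*$ was placed into $B^*$ by $2q(u,w^*)$. Hence $d_u \leq d_u^{\text{ret}} + 2q(u,w^*)$, where $d_u^{\text{ret}}$ is the retrospective degree of $u$ at the time of $w^*$'s insertion. Combining this with $q(u,w^*)\geq \tfrac{1}{8}\tau(u,w^*)\in \Omega(d_u^{\text{ret}})$ from Claim~\ref{claim:update-affordability} and Line~2 of Algorithm~\ref{algo:insert} yields $d_u\in O(q(u,w^*))$, which collapses the sum to $O(\log n)$ in expectation. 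Symmetrically applying this to $v$ and adding the $O(\log n)$ cost of $\ell(u,v)$ itself completes the proof. The delicate point is the inductive invariant that no $w\in N(u)$ currently in $\mathcal{B}(u)$ has exceeded its allowance, which justifies using the smallest-bucket quota $q(u,w^*)$ as the uniform lower bound driving the amortization.
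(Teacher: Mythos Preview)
Your proposal is correct and follows essentially the same approach as the paper: decompose $\costef$ into the scanning and reporting parts, funnel all costs into per-edge maintenance charges $\ell(u,w)/q(u,w)$, and then bound $\sum_{w\in N(u)} \ell(u,w)/q(u,w)$ by taking $w^*$ in the smallest non-empty bucket and showing $d_u\in O(q(u,w^*))$ via the retrospective-degree argument combined with Claim~\ref{claim:update-affordability}. The structure, the charging scheme, and the key inequality $d_u\le d_u^{\text{ret}}+2q(u,w^*)$ all match the paper's proof.
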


\vspace{-4mm}
\subsection{Space Consumption}

For each  $u\in V$, the space consumption of (i) the date structures in {\em EdgeSimStr} and {\em CoreFindStr} with respect to $u$, (ii) the hash table of $N[u]$ for similarity calculation, and (iii) the auxiliary data structure for maintaining $\mathcal{B}(u)$ are all bounded by $O(n_u)$.
Hence, we have: 
\begin{lemma}\label{lmm:space-consumption}
The overall space consumption of {\em VD-STAR} is bounded by $O(n + m)$ at all times. 
\end{lemma}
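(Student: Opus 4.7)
The plan is to bound the space used by every data structure that \emph{VD-STAR} maintains on a per-vertex basis, and then sum over all vertices, exploiting the standard identity $\sum_{u \in V} n_u = n + 2m$.

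First, I would account for the \emph{EdgeSimStr} component attached to each vertex $u$. This consists of the counter $c_u$, which is $O(1)$ space, together with the sorted bucket list $\mathcal{B}(u)$. Since each neighbor of $u$ lives in exactly one bucket entry (the bucket whose index equals $\log_2 q(u,w)$ for that neighbor $w$), the total number of entries across all buckets of $\mathcal{B}(u)$ equals $d_u$. Each bucket also stores a single counter $\bar{c}_u(B_i)$ of $O(1)$ size, and because only non-empty buckets are materialized, the number of buckets is at most $d_u$. Hence $\mathcal{B}(u)$ takes $O(d_u)$ space, and by Fact~\ref{fact:sorted-linked-list} the auxiliary structure supporting the sorted linked list operations also occupies $O(|\mathcal{B}(u)|) = O(d_u)$ space.

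Next, I would bound the \emph{CoreFindStr} component, which is the $\Delta$-Table. The table is an array of $\lceil 1/\Delta \rceil = O(1)$ sorted lists, and each vertex $u$ appears in at most one entry per list, contributing $O(1)$ space per vertex for its records. I would also verify that the sorted neighbor list $N(u)$ takes $O(d_u)$ space, and that the hash table used during \emph{cal-sim} to represent $N[u]$ occupies $O(n_u)$ space.

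Finally, summing all contributions: for each $u \in V$, the total space is $O(n_u) = O(d_u + 1)$, so
\begin{equation*}
\sum_{u \in V} O(n_u) \;=\; O\!\left(n + \sum_{u \in V} d_u\right) \;=\; O(n + m),
\end{equation*}
since $\sum_u d_u = 2m$. Because every update only performs local insertions and deletions in these structures (no auxiliary structures grow unboundedly between updates), the bound holds at all times, not only after a rebuild. The main obstacle, if any, is merely the bookkeeping: carefully verifying that no component introduces a hidden $\log n$ or $d_{\max}$ factor per vertex — in particular that the $O(\log n)$ bucket index range does not materialize as $O(\log n)$ space per vertex because only non-empty buckets are stored.
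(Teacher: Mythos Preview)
Your proposal is correct and follows essentially the same approach as the paper: bound every data structure attached to a vertex $u$ by $O(n_u)$ and then sum over all vertices to obtain $O(n+m)$. The paper's own proof is a one-sentence summary of exactly this per-vertex accounting (listing the \emph{EdgeSimStr}/\emph{CoreFindStr} structures, the hash table of $N[u]$, and the auxiliary structure for $\mathcal{B}(u)$), whereas you spell out each component and the handshake sum explicitly; the underlying argument is the same.
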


\vspace{-4mm}
\subsection{Proof of the Last Missing Piece: 
Claim~\ref{claim:update-affordability}}
\label{sec:update-affordability}

Next, we give proof for Claim~\ref{claim:update-affordability} to complete our theoretical analysis. 
To show this claim, 
it suffices to prove that the update affordability satisfies $\tau(u,v) \geq t = \frac{1}{4} \rho^2 n_v \in \Omega(d_{\max}(u,v))$, for any edge $(u,v)$ with $n_u \leq n_v$. 
More specifically, in the following, we prove that $\tilde{\sigma}(u,v)$ remains a valid $\rho$-absolute approximation to the exact similarity $\sigma(u,v)$ at any moment within $t$ {\em arbitrary} affecting updates since the last moment when $\tilde{\sigma}(u,v)$ was computed.

In fact, Ruan~\shortauthors~\cite{ruan2021dynamic} give proof for a lemma similar to our Claim~\ref{claim:update-affordability} for Jaccard similarity only. 
Unfortunately, their proof is not immediately applicable to Cosine similarity.
As we show below, overcoming this technical difficulty of proving Claim~\ref{claim:update-affordability} for Cosine similarity
requires a more sophisticated analysis. 

First, we identify the cases when the similarity has the largest increment or decrement on the exact similarity for an affected update. Consider an update of edge $(u, w)$ and an affected edge $(u,v)$, there are four cases for each similarity measurement:

\noindent\underline{For Jaccard similarity}, 
\begin{itemize}[leftmargin = *]
    \item $(u,w)$ is an insertion, \begin{itemize}
        \item if $w \in N(v)$, $\sigma(u,v)$ is increased to $\frac{I(u,v) + 1}{n_u + n_v - I(u,v)}$ 
        \item if $w \notin N(v)$, $\sigma(u,v)$ is decreased to $\frac{I(u,v)}{n_u + n_v - I(u,v) + 1}$
    \end{itemize}
    \item $(u,w)$ is a deletion, \begin{itemize}
        \item if $w \in N(v)$, $\sigma(u,v)$ is decreased to $\frac{I(u,v) - 1}{n_u + n_v - I(u,v)}$ 
        \item if $w \notin N(v)$, $\sigma(u,v)$ is increased to $\frac{I(u,v)}{n_u + n_v - I(u,v) - 1}$
    \end{itemize}
\end{itemize}

\noindent\underline{For Cosine similarity},
\begin{itemize}[leftmargin = *]
    \item $(u,w)$ is an insertion, \begin{itemize}
        \item if $w \in N(v)$, $\sigma(u,v)$ is increased to $\frac{I(u,v) + 1}{\sqrt{(n_u+1) \cdot n_v}}$ 
        \item if $w \notin N(v)$, $\sigma(u,v)$ is decreased to $\frac{I(u,v)}{\sqrt{(n_u+1) \cdot n_v}}$
    \end{itemize}
    \item $(u,w)$ is a deletion, \begin{itemize}
        \item if $w \in N(v)$, $\sigma(u,v)$ is decreased to $\frac{I(u,v) - 1}{\sqrt{(n_u-1) \cdot n_v}}$ 
        \item if $w \notin N(v)$, $\sigma(u,v)$ is increased to $\frac{I(u,v)}{\sqrt{(n_u-1) \cdot n_v}}$
    \end{itemize}
\end{itemize}

\noindent\underline{For Dice similarity},
\begin{itemize}[leftmargin = *]
    \item $(u,w)$ is an insertion, \begin{itemize}
        \item if $w \in N(v)$, $\sigma(u,v)$ is increased to $\frac{I(u,v) + 1}{(n_u + n_v + 1)/2}$ 
        \item if $w \notin N(v)$, $\sigma(u,v)$ is decreased to $\frac{I(u,v)}{(n_u + n_v + 1)/2}$
    \end{itemize}
    \item $(u,w)$ is a deletion, \begin{itemize}
        \item if $w \in N(v)$, $\sigma(u,v)$ is decreased to $\frac{I(u,v) - 1}{(n_u + n_v - 1)/2}$ 
        \item if $w \notin N(v)$, $\sigma(u,v)$ is increased to $\frac{I(u,v)}{(n_u + n_v - 1)/2}$
    \end{itemize}
\end{itemize}
Through factorization, it is not difficult to prove that the first case has the largest increment and the third case has the largest decrement for all three similarity measurements. 
With this, we now prove Claim~\ref{claim:update-affordability} for the following two cases separately.

\vspace{1mm}
\noindent
{\bf Case 1: $n_u \leq \frac{1}{4} \rho^2 n_v$.} 
According to Algorithm~\ref{algo:calsim}, 
we set $\tilde{\sigma}(u,v) = 0$ for all three similarity measurements in this case.
By Observation~\ref{claim:degree-ratio} in the proof of Lemma~\ref{lmm:correct-sim},
we know that the exact similarities can be upper bounded by a function of 
$\beta = \frac{n_u}{n_v}$ for the three measurements.
Specifically, $Jaccard(u,v) \leq \beta$, $Cosine(u,v) \leq \sqrt{\beta}$ and $Dice(u,v) \leq 2\beta$.
At the moment when $\tilde{\sigma}(u,v)$ is set to $0$, we know that the value of $\beta \leq \frac{1}{4} \rho^2$.
Next, we show that after $t = \frac{1}{4} \rho^2 n_v$ arbitrary affecting updates, the value of $\beta$ cannot be greater than $\frac{1}{2}\rho^2$.
And thus, by Observation~\ref{claim:degree-ratio}, the exact similarities are still no more than $\rho$ for all the three similarity measurements, and therefore, $\tilde{\sigma}(u,v) = 0$ is still a valid $\rho$-absolute approximation.

It suffices to consider those affecting updates that increase the value of $\beta$ only.
Since $n_u \leq n_v$, without loss of generality, we assume that there are $0 \leq b \leq t$ decrements on $n_v$ while $t - b$ increments on $n_u$.
Let $n_u' = n_u + (t - b)$ and $n_v' = n_v - b$.
After such $t$ updates, we have: 
$\rho^2 n_v' = \frac{2}{4} \rho^2 n_v + \frac{2}{4} \rho^2 n_v - \rho^2 b \geq 2 n_u + 2 t - 2 b = 2 n_u'\,.$
Therefore, after these $t = \frac{1}{4}\rho^2 n_v$ updates, 
the value of $\beta = \frac{n_u'}{n_v'} \leq \frac{1}{2} \rho^2$ holds.
%
This completes the proof of Claim~\ref{claim:update-affordability} for Case 1.

\vspace{1mm}
\noindent
{\bf Case 2: $n_v \geq n_u > \frac{1}{4} \rho^2 n_v$.}
Again consider the moment when a $\frac{1}{2}\rho$-absolute-approximate $\tilde{\sigma}(u,v)$ is computed and the exact similarity at this moment denoted by $\sigma^*(u,v)$.
Next, we examine 
after $t = \frac{1}{4} \rho^2 n_v$ affecting updates,
the value of $\sigma(u,v)$ cannot be 
increased nor decreased by more than $\frac{1}{2}\rho$.
And therefore, $\tilde{\sigma}(u,v)$ remains a valid $\rho$-absolute approximation to the exact similarity at the current moment.

We first show the increment case. As verified above, affecting updates of edges that increase the intersection size $I(u,v)$ 
of $N[u]$ and $N[v]$ is the most effective way to increase the exact similarity $\sigma(u,v)$.
Without loss of generality, suppose that $n_u$ and $n_v$ are increased by $t - b$ and $b$, respectively, after $t$ affecting updates. 

\vspace{1mm}
\noindent
\underline{For Jaccard similarity}, 
with $t = \frac{1}{4} \rho^2 n_v \leq \frac{1}{2}\rho n_v$, 
the increased exact similarity becomes $\sigma(u,v) = \frac{I(u,v)+t}{(n_u + t - b)  + (n_v + b)  - (I(u,v) +t)} \leq {\sigma^*}(u,v) + \frac{t}{n_u + n_v - I(u,v)} \leq {\sigma^*}(u,v) + \frac{\frac{1}{2}\rho n_v}{n_u + n_v - I(u,v)} \leq {\sigma^*}(x,y) + \frac{1}{2}\rho$.

\vspace{1mm}
\noindent
\underline{For Cosine similarity}, with $t = \frac{1}{4} \rho^2 n_v$,
the increased exact similarity becomes
$\sigma(u,v) = \frac{I(u,v) + t}{\sqrt{(n_u + t - b) \cdot (n_v + b)}} < \sigma^*(u,v) + \frac{t}{\sqrt{n_u \cdot n_v}} <  \sigma^*(u,v) + \frac{1/4 \cdot \rho^2 n_v}{\sqrt{1/4 \cdot \rho^2 n_v \cdot n_v}} = \sigma^*(u,v) + \frac{1}{2} \rho 
$.

\vspace{1mm}
\noindent
\underline{For Dice similarity}, 
with $t = \frac{1}{4} \rho^2 n_v \leq \frac{1}{4}\rho n_v$, 
the increased exact similarity becomes
$\sigma(u,v) = \frac{I(u,v) + t}{(n_u + n_v + t) / 2} \leq \sigma^*(u,v) + \frac{t}{(n_u + n_v) / 2} 
\leq \sigma^*(u,v) + \frac{1/4 \rho n_v}{n_v / 2} = \sigma^*(u,v) + \frac{1}{2} \rho$.

For the decrement case, affecting updates of edges that decrease the intersection size $I(u,v)$ 
of $N[u]$ and $N[v]$ is the most effective way to decrease the exact similarity $\sigma(u,v)$.
Without loss of generality, suppose that $n_u$ and $n_v$ are decreased by $t - b$ and $b$, respectively, after $t$ affecting updates.

\vspace{1mm}
\noindent
\underline{For Jaccard similarity}, 
with $t = \frac{1}{4} \rho^2 n_v \leq \frac{1}{2}\rho n_v$, 
the decreased exact similarity becomes $\sigma(u,v) = \frac{I(u,v)-t}{(n_u - t + b)  + (n_v - b)  - (I(u,v) -t)} \geq {\sigma^*}(u,v) - \frac{t}{n_u + n_v - I(u,v)} \geq {\sigma^*}(u,v) - \frac{\frac{1}{2}\rho n_v}{n_u + n_v - I(u,v)} \geq {\sigma^*}(u,v) - \frac{1}{2}\rho$.

\vspace{1mm}
\noindent
\underline{For Cosine similarity}, with $t = \frac{1}{4} \rho^2 n_v$,
the decreased exact similarity becomes
$\sigma(u,v) = \frac{I(u,v) - t}{\sqrt{(n_u - t + b) \cdot (n_v - b)}} > \sigma^*(u,v) - \frac{t}{\sqrt{n_u \cdot n_v}} >  \sigma^*(u,v) - \frac{1/4 \cdot \rho^2 n_v}{\sqrt{1/4 \cdot \rho^2 n_v \cdot n_v}} = \sigma^*(u,v) - \frac{1}{2} \rho 
$.

\vspace{1mm}
\noindent
\underline{For Dice similarity}, 
with $t = \frac{1}{4} \rho^2 n_v \leq \frac{1}{4}\rho n_v$, 
the decreased exact similarity becomes
$\sigma(u,v) = \frac{I(u,v) - t}{(n_u + n_v - t) / 2} \geq \sigma^*(u,v) - \frac{t}{(n_u + n_v) / 2} 
\geq \sigma^*(u,v) - \frac{1/4 \rho n_v}{n_v / 2} = \sigma^*(u,v) - \frac{1}{2} \rho$.

Therefore, for any of these similarity measurements, the update affordability $\tau(u,v) \geq t = \frac{1}{4} \rho^2 n_v \in \Omega(d_{\max}(u,v))$ holds for Case 2.
This completes the whole proof for Claim 1.

\vspace{-4mm}
\section{Optimizations}
\label{sec:opt}
We introduce two optimizations to enhance the practical performance of our {\em VD-STAR}. 
The idea stems from a crucial observation -- the {\em CoreFindStr} is designed for finding core vertices efficiently in $O(|\Vc| + 1)$ time to achieve the target query time complexity $O(m_{cr})$. 
If we relax this query bound, then it is not necessary to implement the {\em CoreFindStr}.
In this way, we can considerably improve 
the update efficiency 
by not only shaving the maintenance cost for {\em CoreFindStr} but also, importantly, 
releasing the ``approximation budget'': 
recall that {\em VD-STAR} uses a $\Delta$-Table which introduces a $\Delta$-absolute error in the approximation. 
Hence, we can set a larger $\rho$ for {\em EdgeSimStr} 
that achieves the same approximation guarantee.

\vspace{-2mm}
\subsection{{\em VD-STAR} with No {\em CoreFindStr}}
\label{subsec:opt1}

Consider an implementation of our {\em VD-STAR} without {\em CoreFindStr}.
When a query with parameters $\eps$ and $\mu$ arrives, 
to identify all the core vertices,
it suffices to check for each vertex $u \in V$ whether $u$ is a core vertex with $u$'s sorted neighbor linked list $N(u)$.
This can be achieved by scanning $N(u)$ from the beginning and checking whether the similarity between $u$ and the $\mu^\text{th}$ (largest) neighbor is $\geq \eps$ or not.
The time complexity is clearly bounded by $O(\mu)$ for each vertex $u$, and hence, the overall running time of identifying all the vertices is bounded by $O(\mu \cdot n)$.
If the sorted neighbor list $N(u)$
is maintained with a binary search tree,
finding the $\mu^\text{th}$
largest similarity can 
be achieved in 
$O(\log d_{\max})$ time. In this case, the core vertex identification cost is bounded by $O(n \cdot \log n)$.
Therefore, without the {\em CoreFindStr}, our {\em VD-STAR} 
can answer each query in $O(\min\{\mu, \log n\} \cdot n + m_{cr})$ time.

Note that, in practice, this query time complexity is acceptable because: (i) the parameter $\mu$ in practice is often a small constant for which $\mu \cdot n \in O(n)$ often holds, and (ii) for reasonable clustering parameters, $m_{cr}$ often dominates the term $O(\min\{\mu, \log n\} \cdot n)$. 
If either of these cases happens, the query time complexity is still bounded by $O(m_{cr})$ the same as before with {\em CoreFindStr}.  
As we will see in experiments, {\em VD-STAR} with no {\em CoreFindStr}, which is named {\em Ours-NoT}, significantly improves the update efficiency with just a negligible sacrifice in the query efficiency.

\subsection{{\em VD-STAR} with a Small $\mu$-Table}
\label{subsec:opt2}

Recall that {\em CoreFindStr} can be implemented with a $\mu$-Table 
which is used in GS*-Index and does not ``consume'' any approximation budget. 
Inspired by this, our other version of {\em VD-STAR} is to implement {\em CoreFindStr} with a {\em small} $\mu$-Table.
In the sense that, we do not implement the $\mu$-Table {\em in full}
to capture all possible values of the given parameter $\mu$. 
Instead, we just implement it {\em partially} for the $\mu$ values 
up to a small constant, say $15$.
As a result, the maintenance of the small $\mu$-Table would not affect the update time complexity of {\em VD-STAR}. 
In addition, it releases the approximation budget consumed by the $\Delta$-Table implementation, and hence, we can increase the value of $\rho$ for the {\em EdgeSimStr} accordingly.

To answer a query with parameters $\eps$ and $\mu$, 
if $\mu$ is captured by the small $\mu$-Table, then we use the $\mu$-Table to retrieve all the core vertices in $O(|\Vc| + 1)$ and hence, the query time complexity is bounded by $O(m_{cr})$ as desired.
Otherwise, we just run the above version of {\em VD-STAR} with no {\em CoreFindStr} to answer the query.

\vspace{-2mm}
\section{Experiments}
\label{sec:exp}
\subsection{Experimental Settings}

\vspace{1mm}
\noindent\textbf{Datasets.}
We evaluate our algorithms on nine real-world datasets from the Stanford Network Analysis Project~\cite{snapnets} and Network Repository~\cite{nr} which are also used in the baseline papers~\cite{ruan2021dynamic,zhang2022effective,wen2017efficient}. 
Following previous works~\cite{ruan2021dynamic, zhang2022effective}, we treat all graphs as undirected and remove all self-loops. Table~\ref{tab:dataset} summarizes the dataset.

\noindent\textbf{Competitors.} 
We study the performance of our three algorithms:  
{\em VD-STAR}, {\em VD-STAR-NoT} (Section~\ref{subsec:opt1}) and 
{\em VD-STAR-$\mu$T} (Section~\ref{subsec:opt2}),
which are respectively denoted by {\bf \em Ours}, {\bf \em Ours-NoT} and {\bf \em Ours-$\mu$T} for short.
We compare these algorithms 
with 
the SOTA exact and approximate algorithms {\bf GS*-Index}~\cite{wen2017efficient} and {\bf BOTBIN}~\cite{zhang2022effective}.
In {\em Ours-$\mu$T}, only a $\mu$-Table with $\mu_{\max} = 15$ is constructed. 

\noindent
{\bf Experiment Environment.}
All experiments are conducted on a Ubuntu virtual server with a 2 GHz CPU and 1 TB memory. All source codes are in C++ and compiled with -O3 turned on.
The source code of our implementations can be found in~\cite{sourcecode}.

\noindent
\textbf{Default Parameter Settings.} 
By default, the {\em target overall approximation budget}, denoted by $\rho^*$, is set as $\rho^* = 0.02$.
For BOTBIN, 
we set $\Delta = 0.01$ by default as suggested in its paper.
Since the use of $\Delta$-Table would introduce a $\Delta$ error, 
we set $\rho = \rho^* - \Delta = 0.01$ for the {\em EdgeSimStr} in BOTBIN
to achieve an overall $\rho^*$-approximation.
We set the same $\Delta$ and $\rho$ for {\em Ours}.
As both {\em Ours-NoT} and {\em Ours-$\mu$T} do not adopt the $\Delta$-Table,
we set $\rho = \rho^*$ for fair comparison.

\begin{figure*}[t]
\begin{subfigure}{0.8\linewidth}
    \centering
    \includegraphics[width=\textwidth]{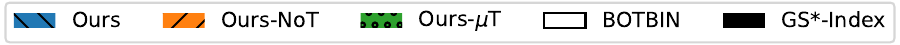}
\end{subfigure}
\begin{subfigure}{0.5\linewidth}
    \includegraphics[width=1\textwidth]{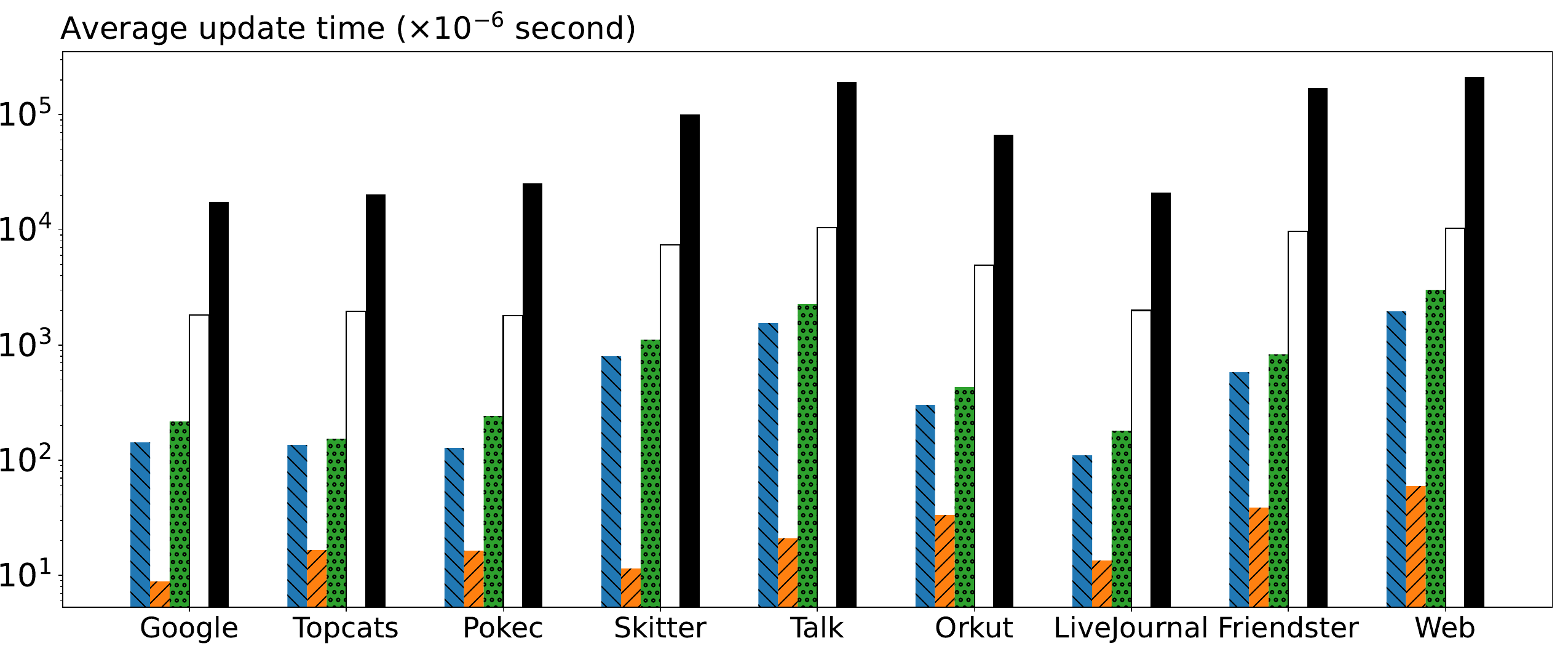}
    \caption{Average update running time}    
    \label{fig:update}
\end{subfigure}%
\begin{subfigure}{0.5\linewidth}
    \includegraphics[width=1\textwidth]{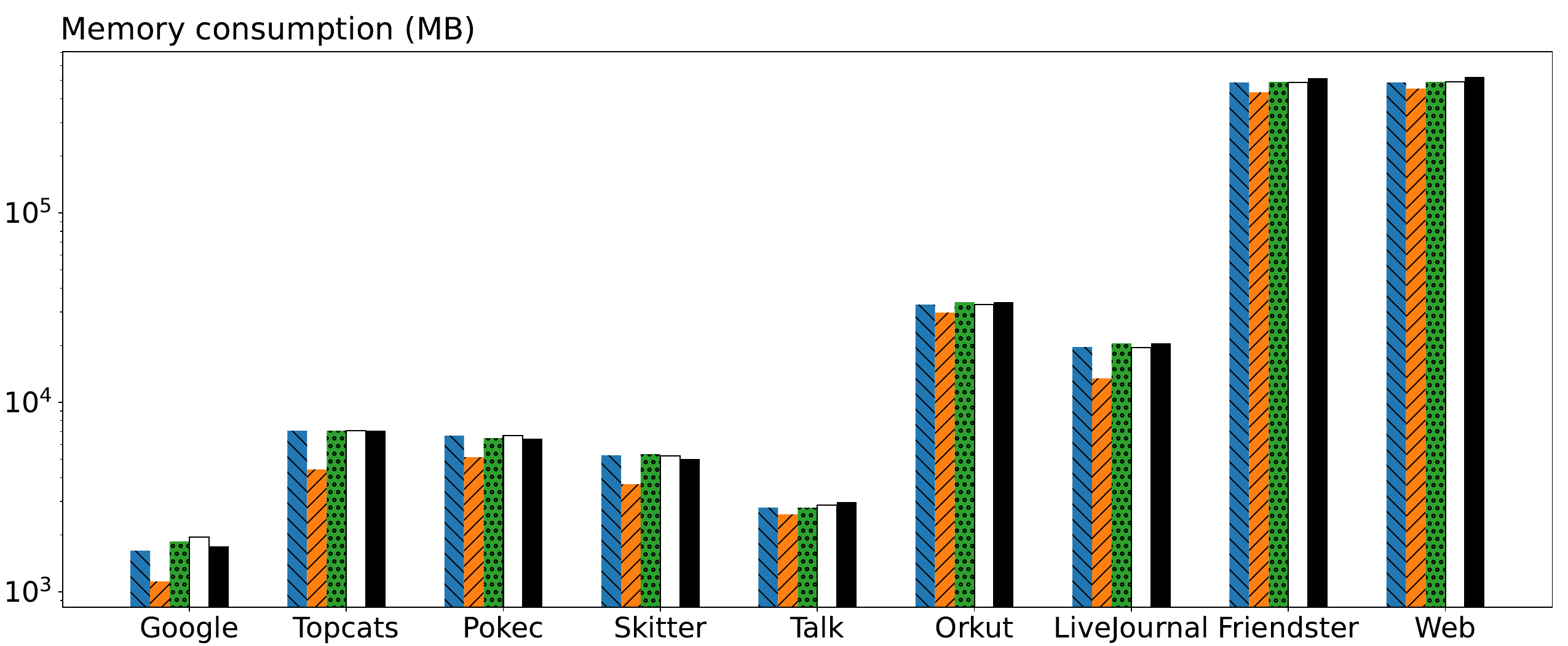}
    \caption{Memory consumption}
    \label{fig:mem}
\end{subfigure}
    \caption{Update processing performance results}
    \label{fig:efficiency}
\end{figure*}

\noindent\textbf{Update Generation.} 
To simulate graph updates in real-world applications, we randomly generate a sequence of edge insertions and deletions for each dataset. 
To testify different scenarios, we vary the ratio $\eta$ of \#deletion to \#insertion by setting the probabilities of an insertion and a deletion to $\frac{1}{1+\eta}$ and $\frac{\eta}{1+\eta}$, respectively. 
To perform an edge deletion, we uniformly at random choose an existing edge and delete it. 
To perform an edge insertion, we employ three strategies:
\vspace{-1mm}
\begin{itemize}[leftmargin=*]
    \item random-random (\textbf{RR}): a non-existent edge is randomly added.
    \item degree-random (\textbf{DR}): Each vertex $u$ has a probability of $\frac{d_u}{2m}$ to be chosen, where $d_u$ is the degree of $u$ and $m$ is the number of edges in the current graph. 
Once $u$ is chosen, the second vertex, $v$, is randomly chosen from those vertices not yet linked to $u$.
    \item degree-degree (\textbf{DD}): Vertex $u$ is chosen as in DR; vertex $v$ is chosen from the vertices not yet linked to $u$ with $\frac{d_v}{2m}$ probability.
\end{itemize}
\vspace{-1mm}
By default, we set $\eta = \frac{1}{10}$.
For each dataset and a configuration of $\eta$ and update generation strategy, we generate $M = 2m^*$ updates, where $m^*$ is the number of edges in the initial graph.

\noindent\textbf{Query Simulation.} To simulate the query process in real-world applications, we randomly generate a query after every 20 updates, with $\eps \in [0.1, 0.5]$ and $\mu \in [2, 2\bar{d}]$ of each graph ($\bar{d}$: the average degree). 
With a total of $M = 2m^*$ updates, $0.1m^*$ queries are tested.

\vspace{-2mm}
\subsection{Study on Update Efficiency}\label{subsec:exp_index}


\subsubsection{\bf Average Update Time with Default Parameters.}
We first study 
the average running time of processing updates. 
As shown in Figure~\ref{fig:update}, we have the following observations:
(1) {\em Ours-NoT}~consistently achieves the best update efficiency. Particularly, it accelerates update processing by as much as 9,315 times (on \texttt{wiki-Talk}) compared with GS*-Index and 647 times (on \texttt{as-skitter}) compared with BOTBIN. GS*-Index uses exact similarity calculation such that it has the highest update time. 
(2) {\em Ours}~is up to 18 times faster in update processing (on \texttt{soc-LiveJournal1})  compared to BOTBIN which also uses a $\Delta$-table. 
(3) {\em Ours-$\mu$T}~ achieves a significant improvement in updating speed, while maintaining competitive query time, as elaborated later in Section~\ref{subsubsec:qe}. It outperforms SOTA methods by up to 193 times on \texttt{soc-Friendster} vs. GS*-Index. 
\vspace{-1mm}

\begin{table}[ht]
\centering
\caption{Dataset Summary}
\scalebox{0.9}{%
\begin{tabular}{l|r|r|r|c}
\toprule
\multicolumn{1}{c|}{\textbf{Datasets}}         & $\mathbf{n(\times10^6)}$     & $\mathbf{m(\times10^6)}$   & \multicolumn{1}{c|}{$\boldsymbol{\Bar{d}}$} & \textbf{Domain} \\\hline
\texttt{web-Google}       & 0.88  & 4.32 & 9.86 & Website hyperlink\\\hline
\texttt{wiki-topcats}     & 1.79 & 25.44 & 28.38 & Website hyperlink \\\hline
\texttt{soc-Pokec}        & 1.63 & 22.30 & 27.36 & Social network \\\hline
\texttt{as-skitter}       & 1.70 & 11.10 & 13.06 & Traceroute graph\\\hline
\texttt{wiki-Talk}        & 2.39 & 4.66 & 3.90& Interaction graph\\\hline
\texttt{soc-Orkut}            & 3.07 & 117.19  & 76.22 & Social network\\\hline
\texttt{soc-LiveJournal1} & 4.85 & 42.85 & 17.69 & Social Network\\\hline
\texttt{soc-Friendster}   & 65.61 & 1,806.07  & 55.13 & Social network\\\hline
\texttt{web-2012} & 90.32 & 1,940.85 & 42.91 & Website hyperlink\\\bottomrule
\end{tabular}}
\label{tab:dataset}
\end{table}

\subsubsection{\bf Memory Consumption.} 
As shown in Figure~\ref{fig:mem}, all methods exhibit minor differences in memory consumption 
owing to that their space consumption 
are all linear to the graph size.
{\em Ours-NoT} has the smallest memory consumption 
because it has no implementation for {\em CoreFindStr}.
{\em Ours} consumes slightly less memory compared with BOTBIN possibly because maintaining $\mathcal{B}(u)$ 
might be more space-efficient than maintaining the bottom-$k$ signatures. 

\vspace{-1mm}
\subsubsection{\bf Impact of Target Overall Approximation $\rho^*$.} 

To test how $\rho^*$ affects the update time, we vary $\rho^*$ from $0.001$ to $0.1$. 
GS*-Index is excluded from this experiment as it is an exact algorithm. 
As shown in Figure~\ref{fig:rho_varying}, 
the update time decreases when $\rho^*$ grows, as expected. 
A larger $\rho^*$ value leads to smaller sample sizes for similarity estimation 
and larger update affordability ($\tau$) for our algorithms. 
As expected, {\em Ours-NoT} has the lowest update time under all $\rho^*$ values tested, 
with speedups reaching up to 700 times over BOTBIN (on \texttt{as-skitter} when $\rho^*=0.01$).

Compared with BOTBIN, \algomu\ and \algodelta\  are also about an order of magnitude faster. 
Remarkably, even with $\rho=0.001$, our \algo~ achieves an average update time of $ 69 \times 10^{-6}$ second on a graph with 1.9 billion edges. 
For BOTBIN and \algodelta, $\Delta$ is set to 0.01 except for $\rho^*=0.01$ and $\rho^*=0.001$ where $\Delta$ is set as half of $\rho^*$ to satisfy the $\rho^*$-absolute-approximation for these two algorithms.

\begin{figure}[t]
    \centering
    \begin{subfigure}{0.7\linewidth}
        \includegraphics[width=\textwidth]{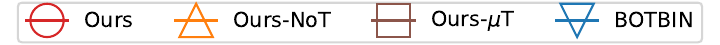}
    \end{subfigure}
    \begin{tabular}{ccc}
        \hspace{-3mm}\begin{subfigure}{0.333\linewidth}
            \includegraphics[width=\textwidth]{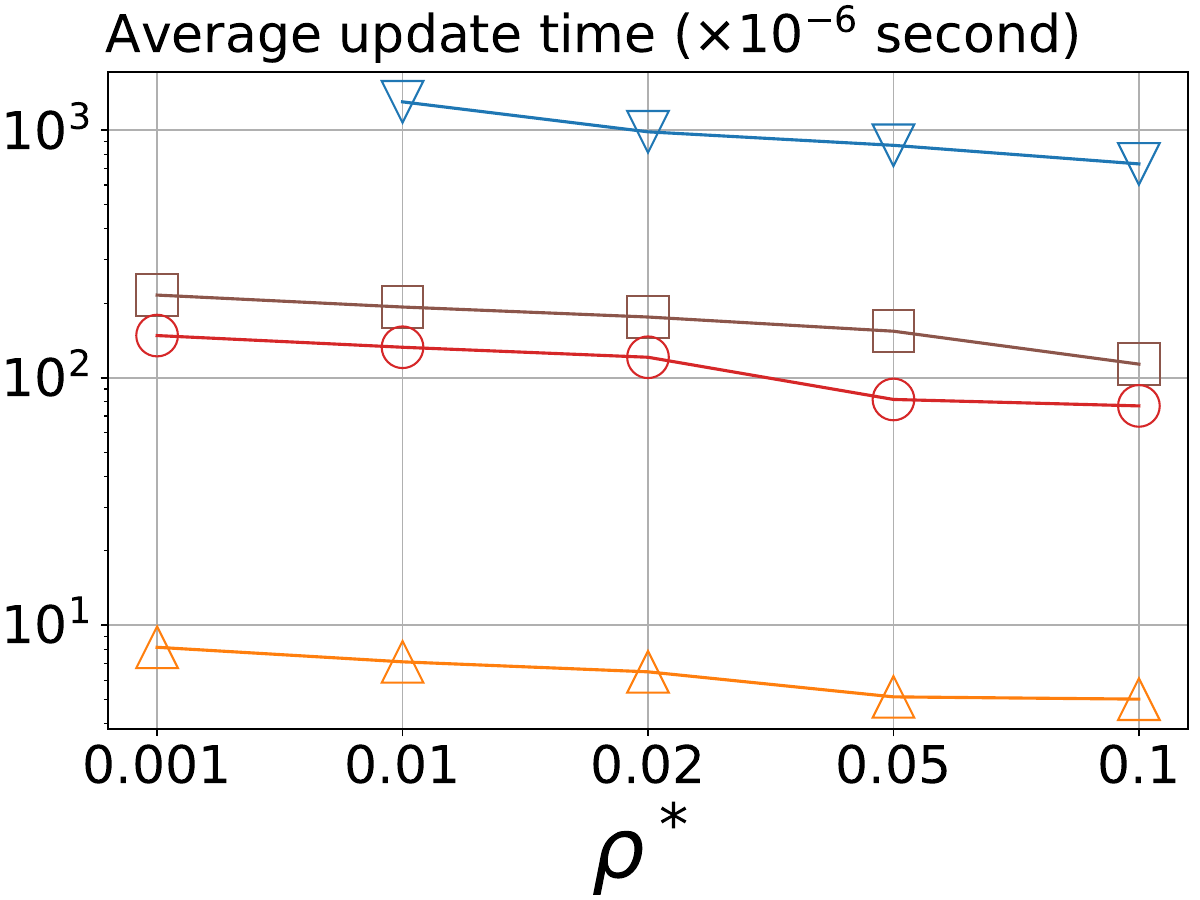}
            \vspace{-6mm}
            \caption{Google}
            \vspace{-1mm}
        \end{subfigure}&
            \hspace{-3mm}\begin{subfigure}{0.333\linewidth}
                \includegraphics[width=\textwidth]{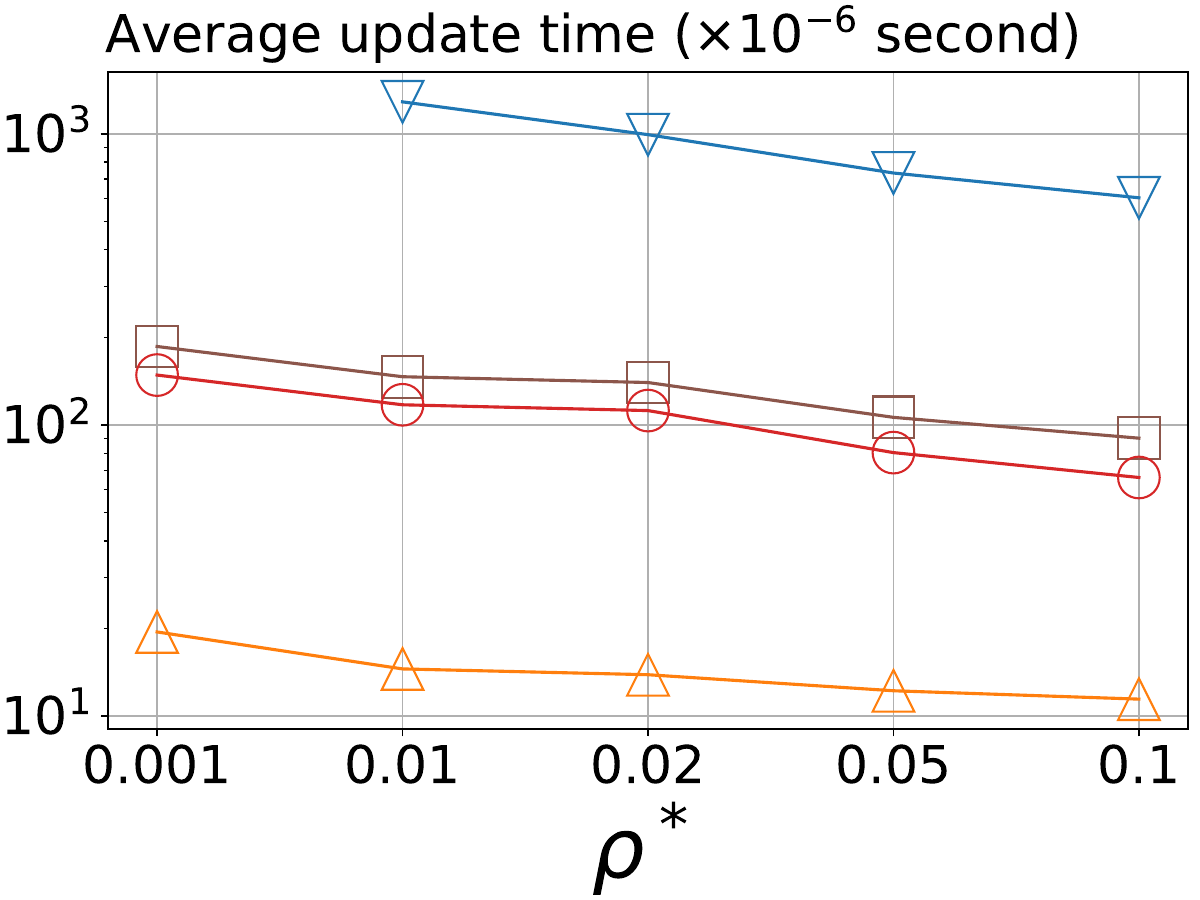}
                \vspace{-6mm}
                \caption{Topcats}
                \vspace{-1mm}
        \end{subfigure}&
            \hspace{-3mm}\begin{subfigure}{0.333\linewidth}
                \includegraphics[width=\textwidth]{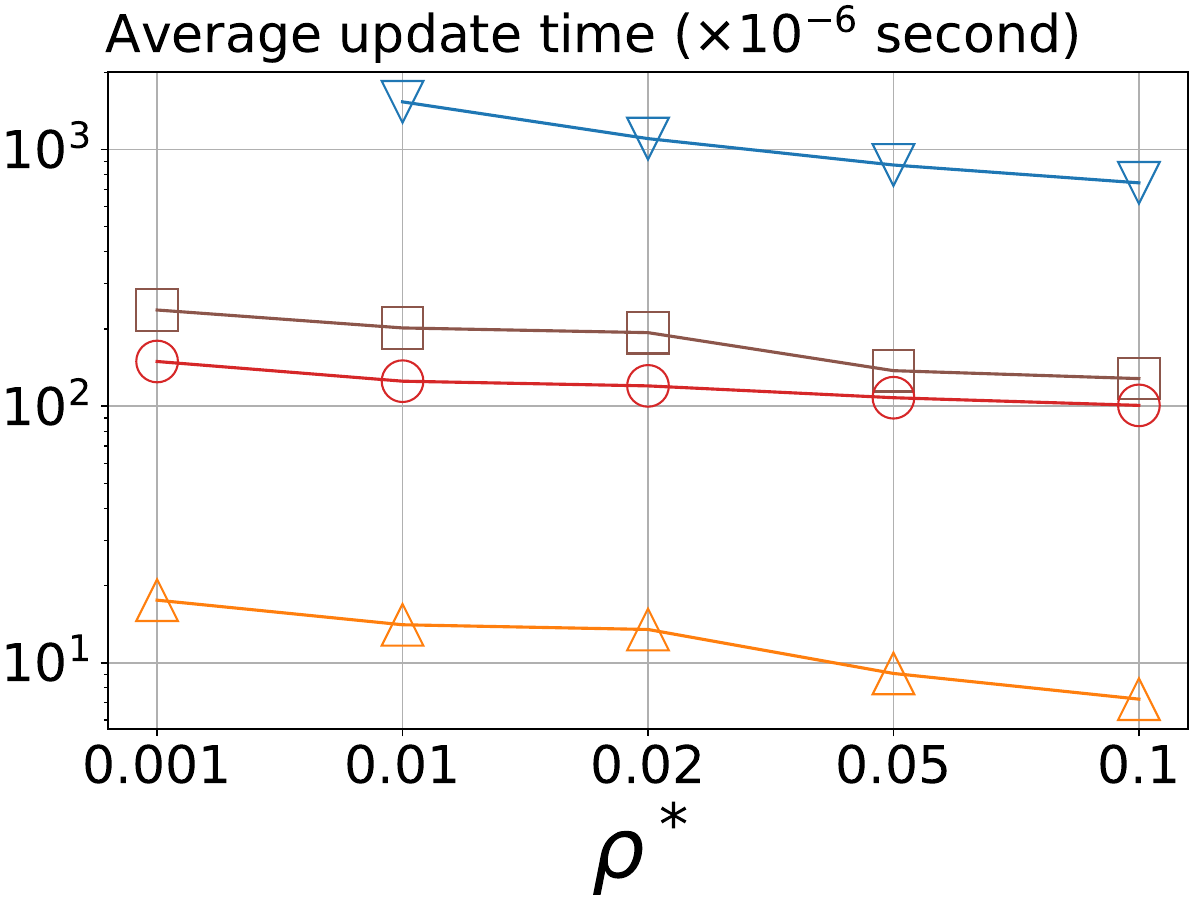}
                \vspace{-6mm}
                \caption{Pokec}
                \vspace{-1mm}
        \end{subfigure}\\
            \hspace{-3mm}\begin{subfigure}{0.333\linewidth}
                \includegraphics[width=\textwidth]{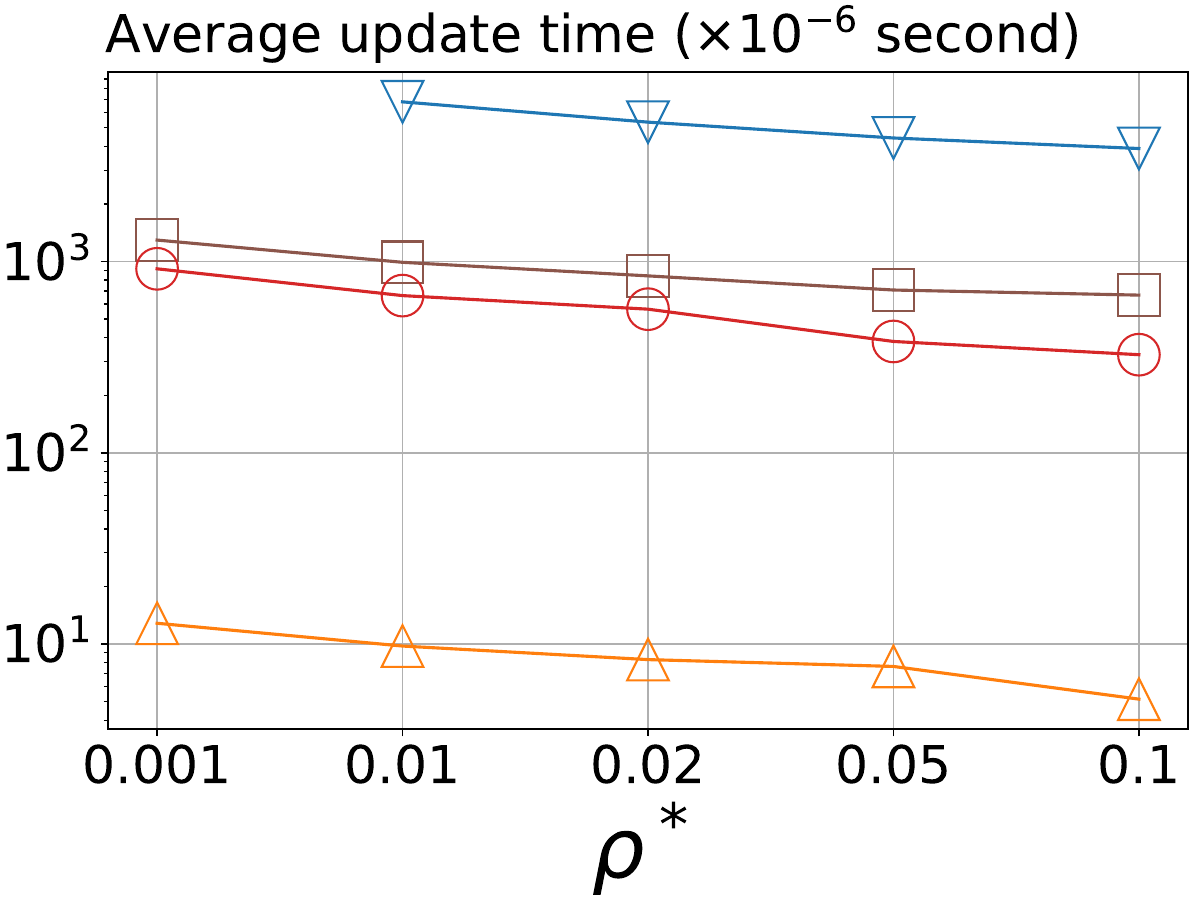}
                \vspace{-6mm}
                \caption{Skitter}
                \vspace{-2mm}
        \end{subfigure}&
            \hspace{-3mm}\begin{subfigure}{0.333\linewidth}
                \includegraphics[width=\textwidth]{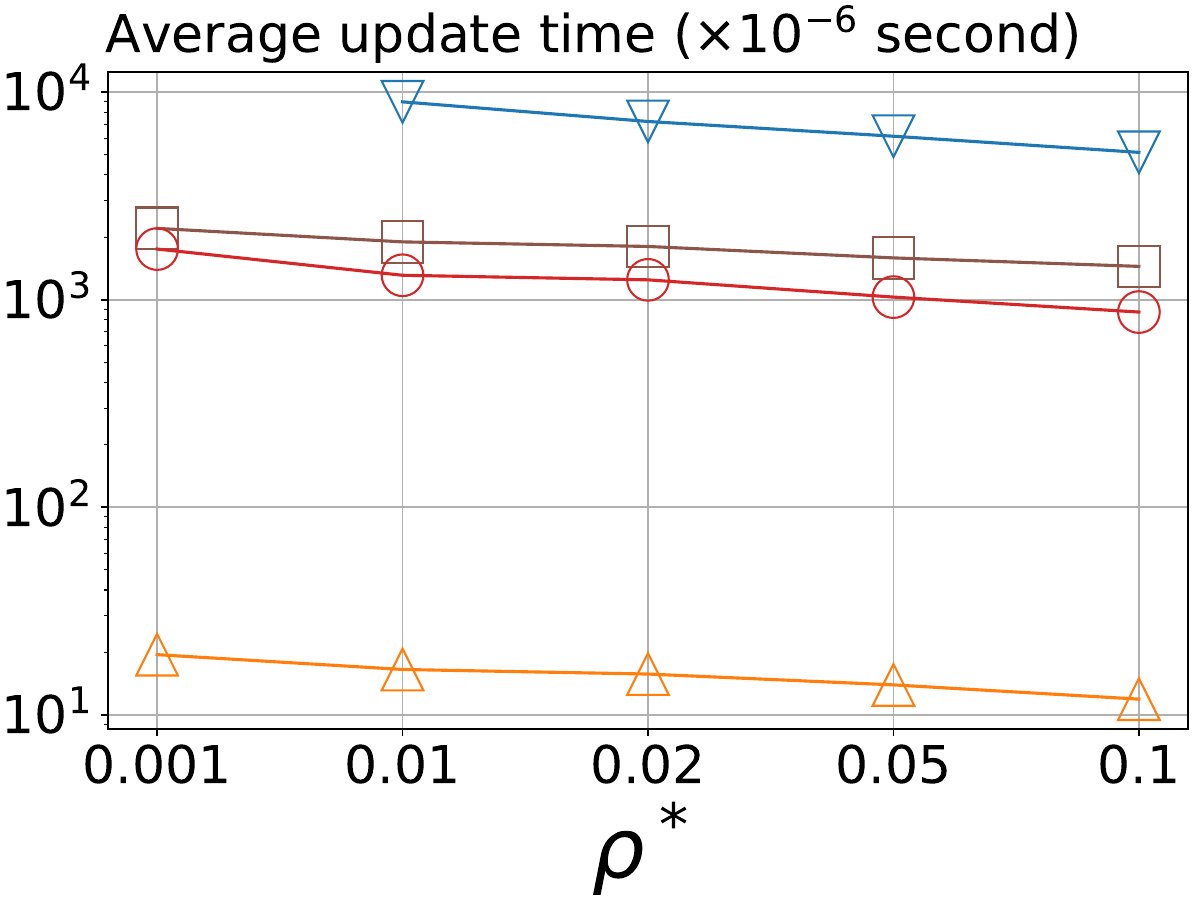}
                \vspace{-6mm}
                \caption{Talk}
                \vspace{-2mm}
        \end{subfigure}&
        \hspace{-3mm}\begin{subfigure}{0.333\linewidth}
            \includegraphics[width=\textwidth]{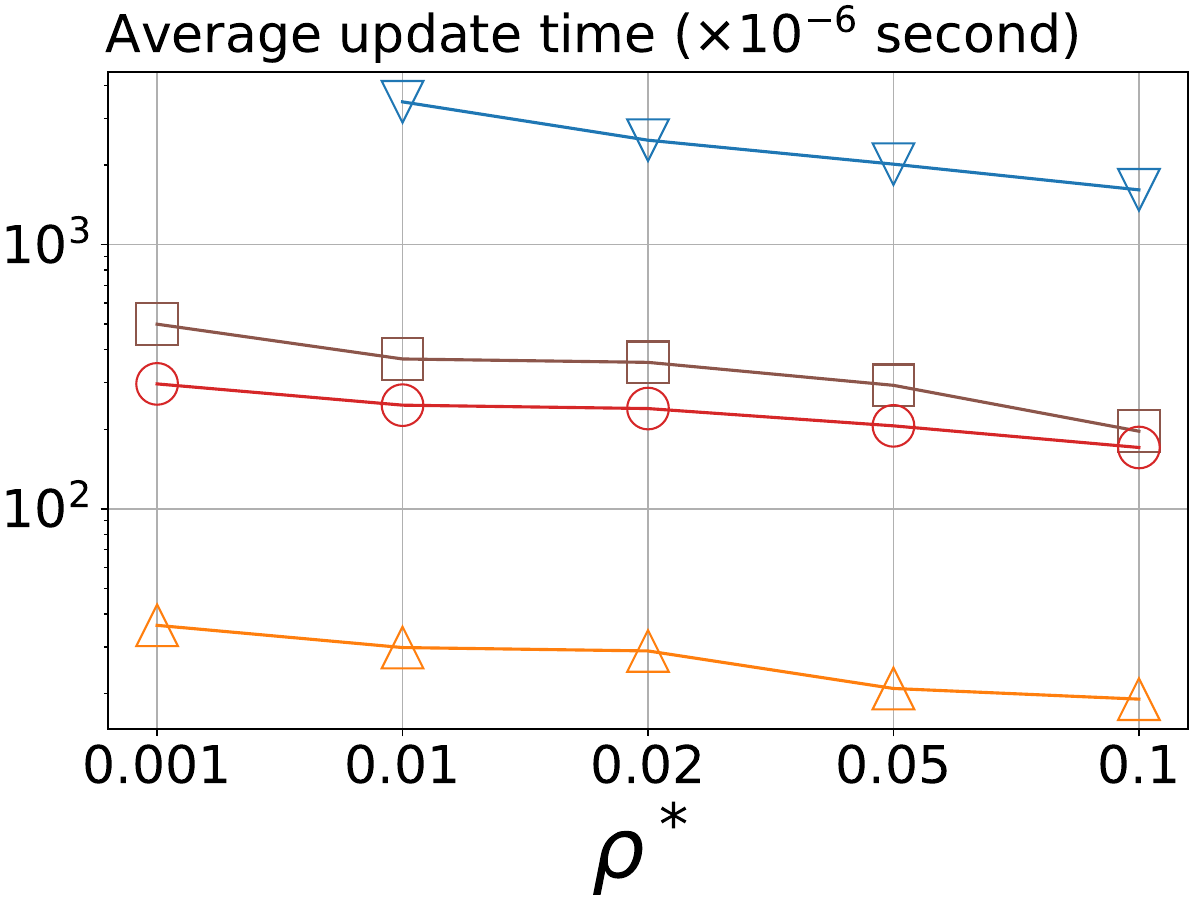}
            \vspace{-6mm}
            \caption{Orkut}
            \vspace{-2mm}
        \end{subfigure}\\
        \hspace{-3mm}\begin{subfigure}{0.333\linewidth}
            \includegraphics[width=\textwidth]{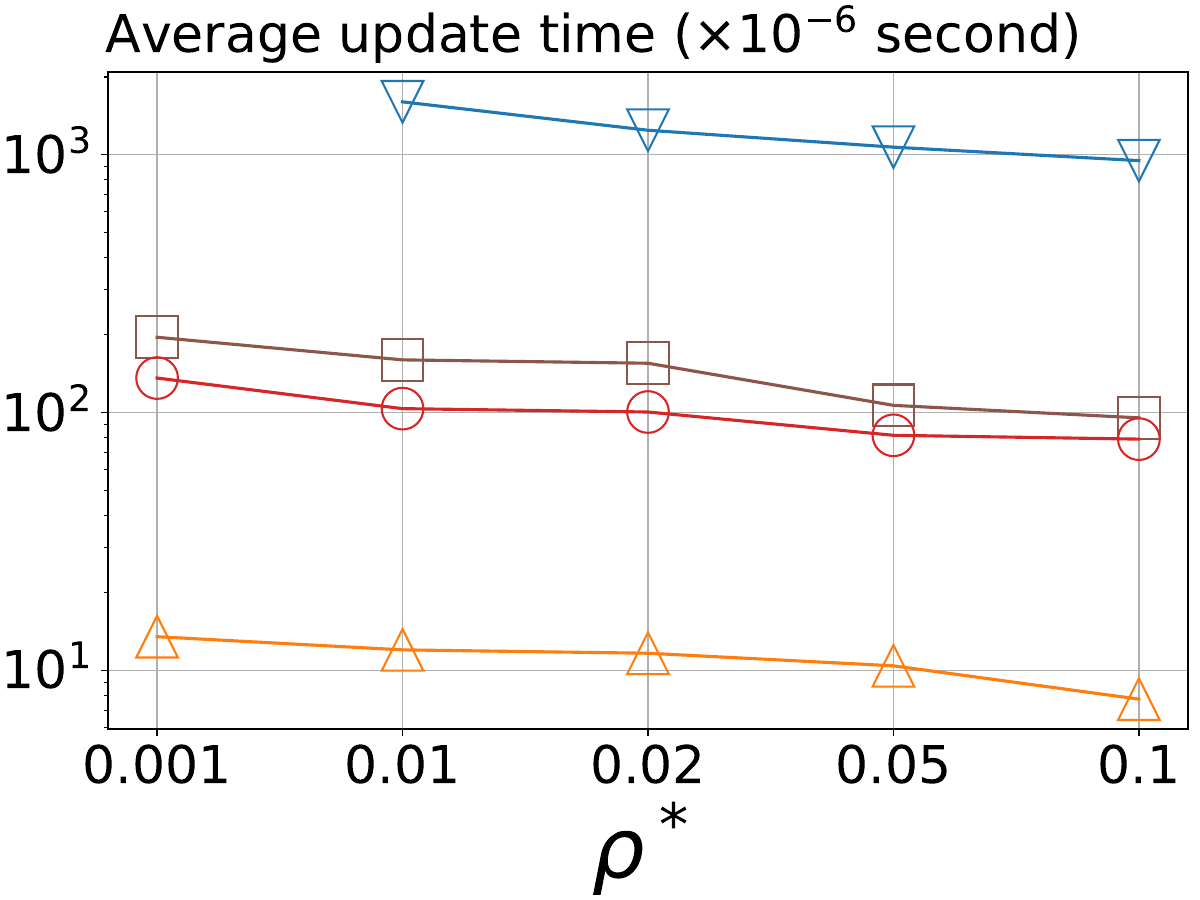}
            \vspace{-6mm}
            \caption{LiveJournal}
            \vspace{-2mm}
        \end{subfigure}&
        \hspace{-3mm}\begin{subfigure}{0.333\linewidth}
            \includegraphics[width=\textwidth]{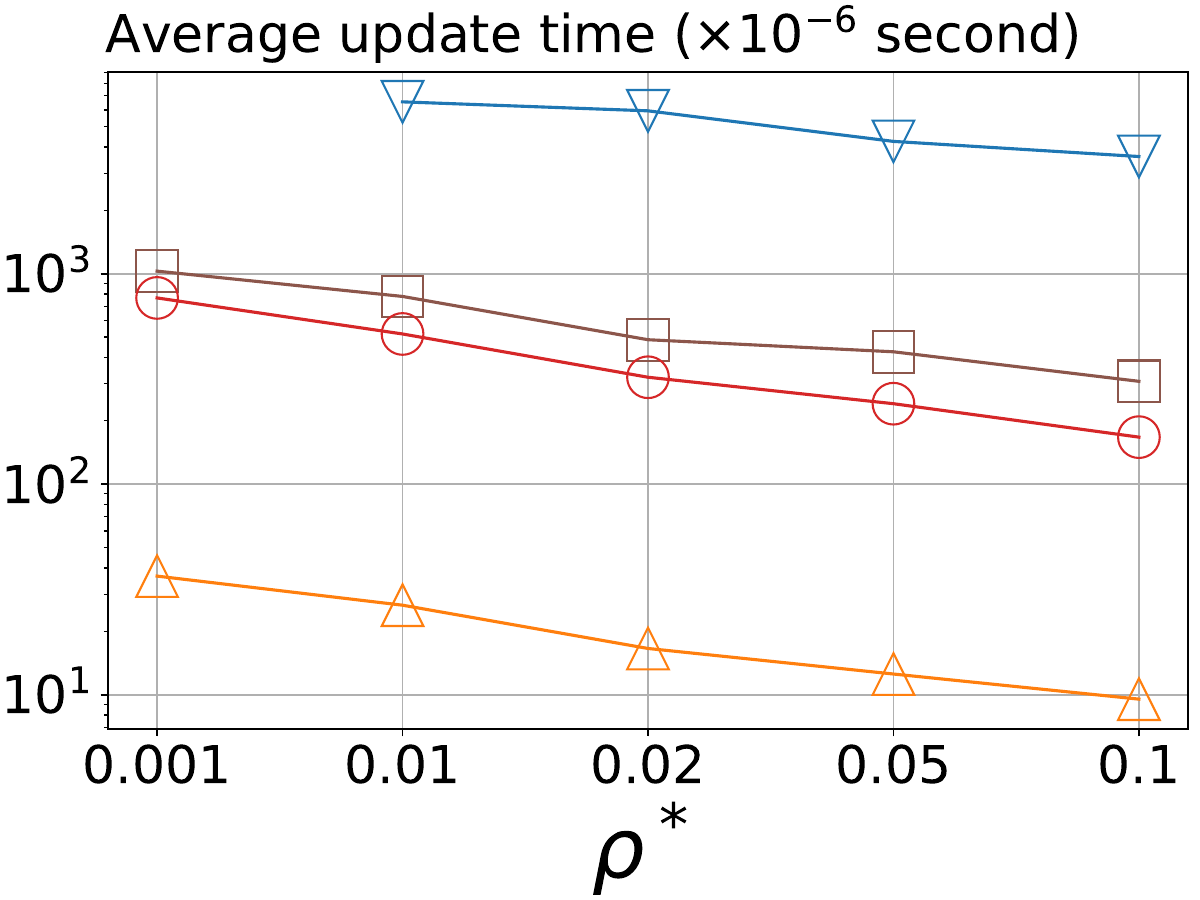}
            \vspace{-6mm}
            \caption{Friendster}
            \vspace{-2mm}
        \end{subfigure}&
        \hspace{-3mm}\begin{subfigure}{0.333\linewidth}
            \includegraphics[width=\textwidth]{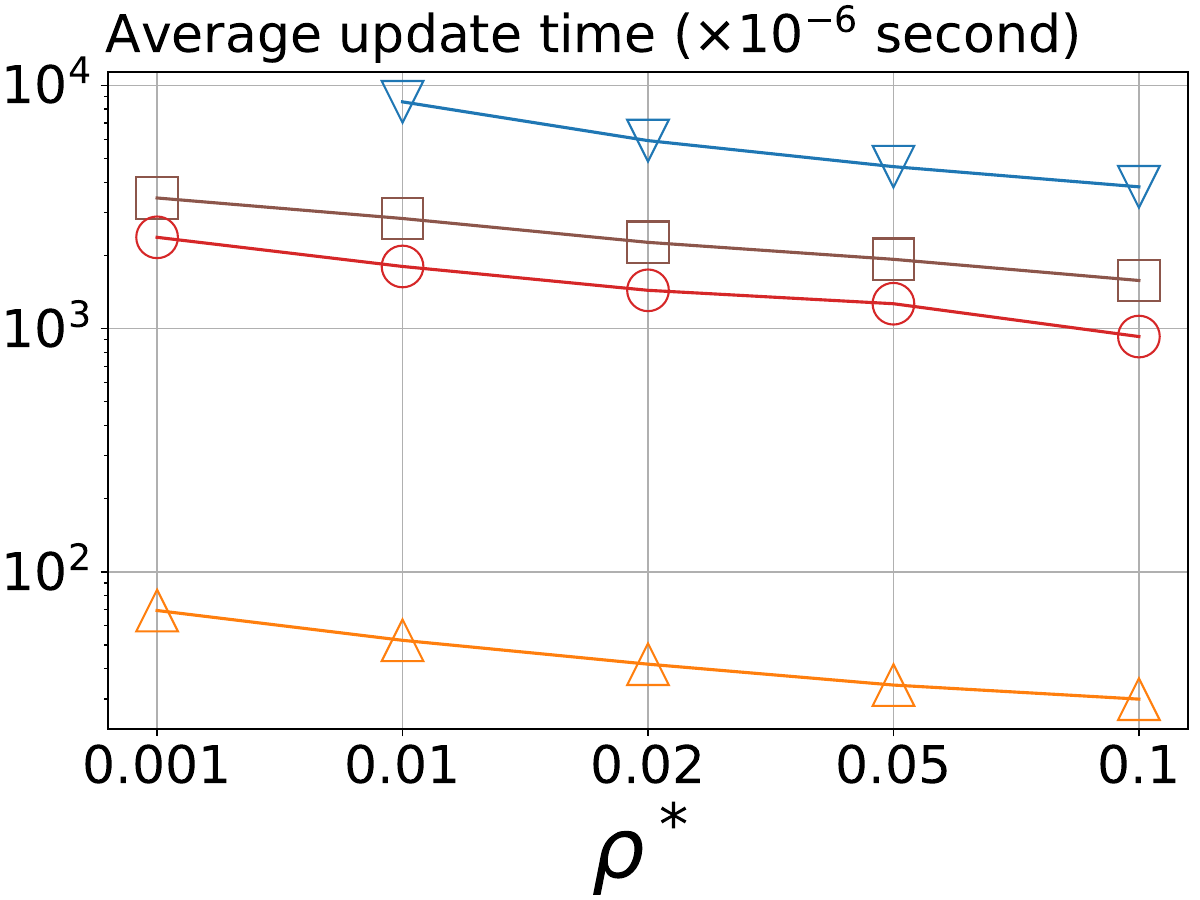}
            \vspace{-6mm}
            \caption{Web}
            \vspace{-2mm}
        \end{subfigure}\\
    \end{tabular}
    \vspace{-3mm}
    \caption{Average update running time vs.  $\rho^*$}
    \label{fig:rho_varying}
\vspace{-4mm}
\end{figure}

\vspace{-1mm}
\subsubsection{\bf Impact of Update Distributions}
\label{subsec:update_dis}


As shown in Figure~\ref{fig:ug_varying}, the average update times of all algorithms increase as updates follow more skewed distributions (from RR to RD and to DD). This trend primarily arises because inserting or deleting a neighbor for a vertex of a larger degree takes a longer time (i.e., $O(\log n_{u})$). Additionally, vertices of larger degrees appear more frequently in the $\mu$-Table. Despite these challenges, our algorithms consistently outperform all competitors, with speedups of up to 6,098 times on \textbf{RR}, 9,315 times on \textbf{DR}, and 4,857 times on \textbf{DD}.

\begin{figure}[t]
    \centering
    \begin{subfigure}{0.9\linewidth}
        \includegraphics[width=\textwidth]{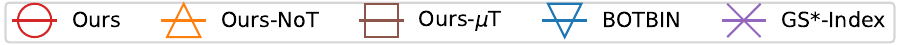}
    \end{subfigure}
    \begin{tabular}{ccc}
        \hspace{-3mm}\begin{subfigure}{0.333\linewidth}
                        \includegraphics[width=\textwidth]{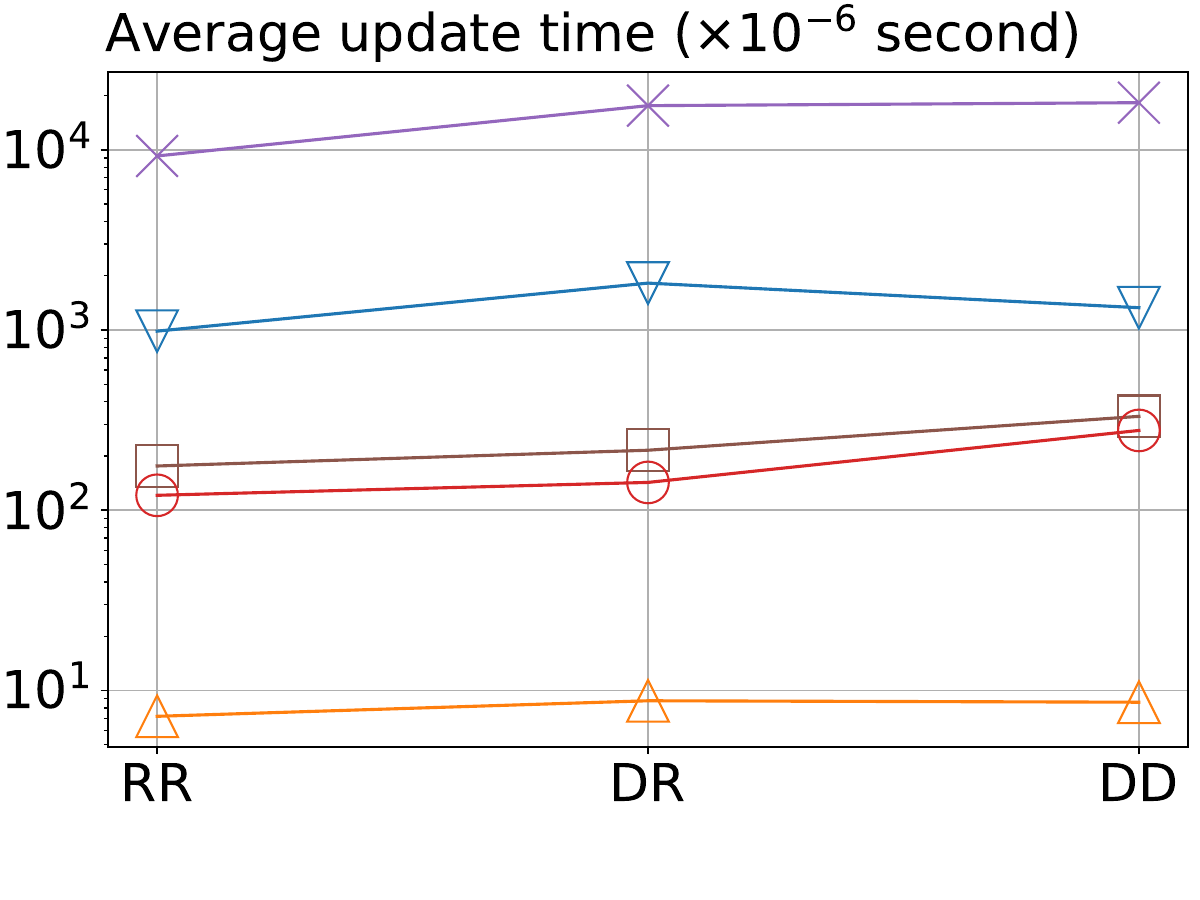}
            \vspace{-8mm}
            \caption{Google}
            \vspace{-1mm}
        \end{subfigure}&
            \hspace{-3mm}\begin{subfigure}{0.333\linewidth}
                \includegraphics[width=\textwidth]{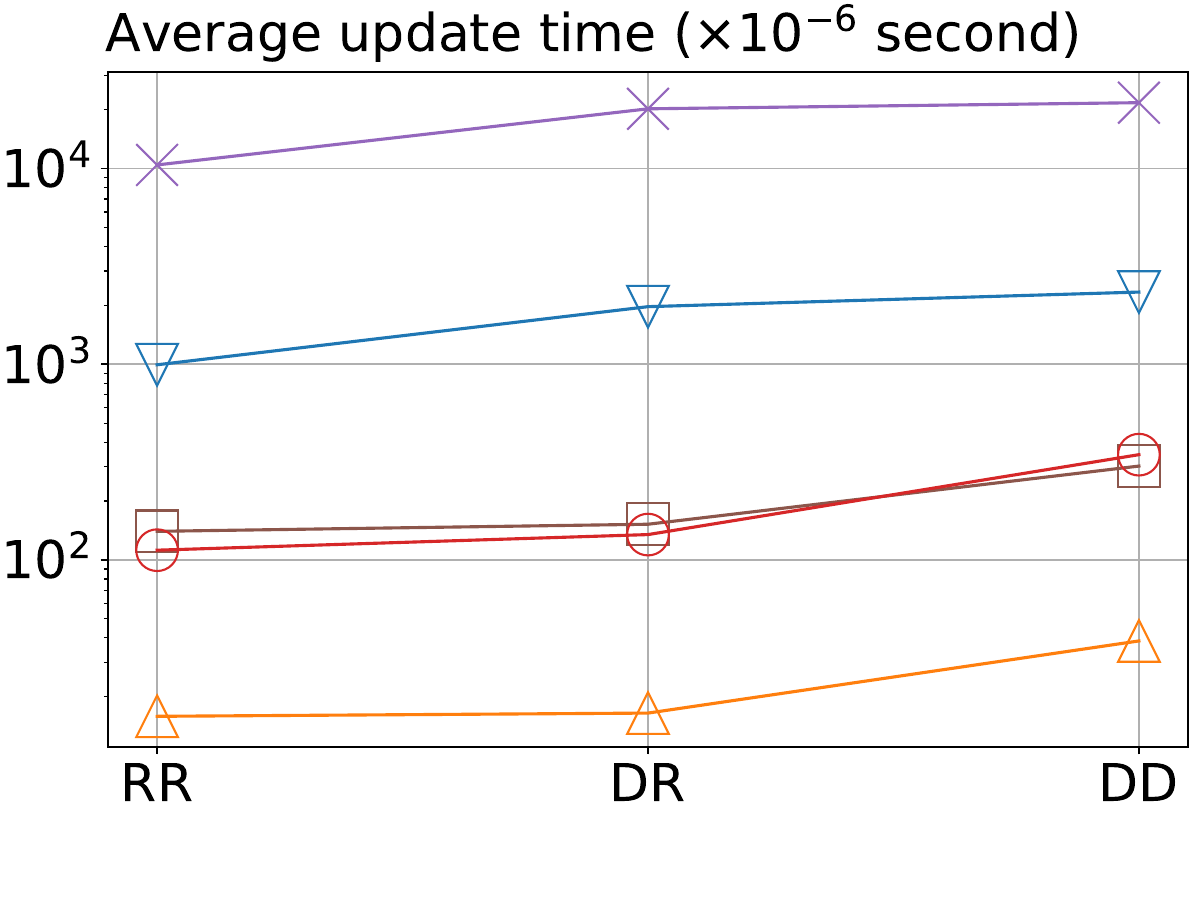}
                \vspace{-8mm}
                \caption{Topcats}
                \vspace{-1mm}
        \end{subfigure}&
            \hspace{-3mm}\begin{subfigure}{0.333\linewidth}
                \includegraphics[width=\textwidth]{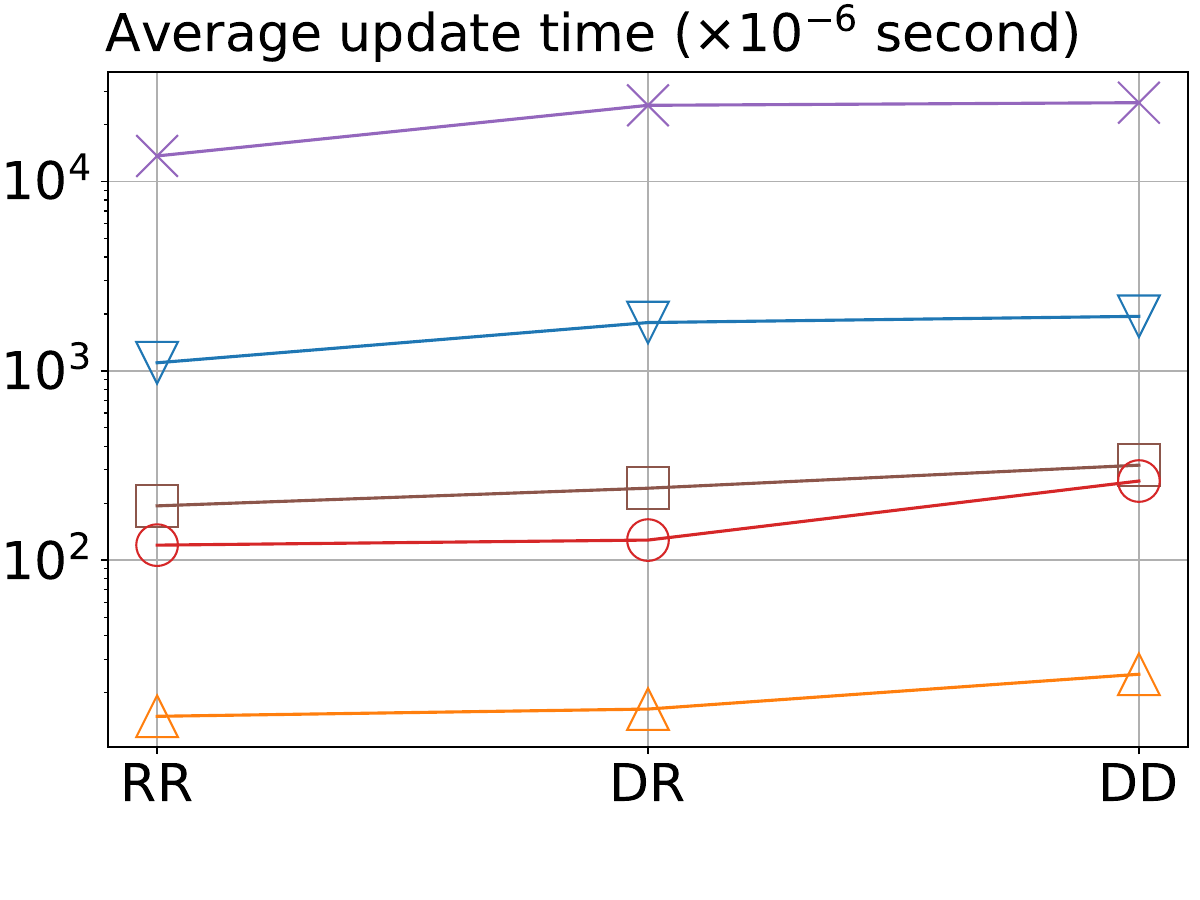}
                \vspace{-8mm}
                \caption{Pokec}
                \vspace{-1mm}
        \end{subfigure}\\
            \hspace{-3mm}\begin{subfigure}{0.333\linewidth}
                \includegraphics[width=\textwidth]{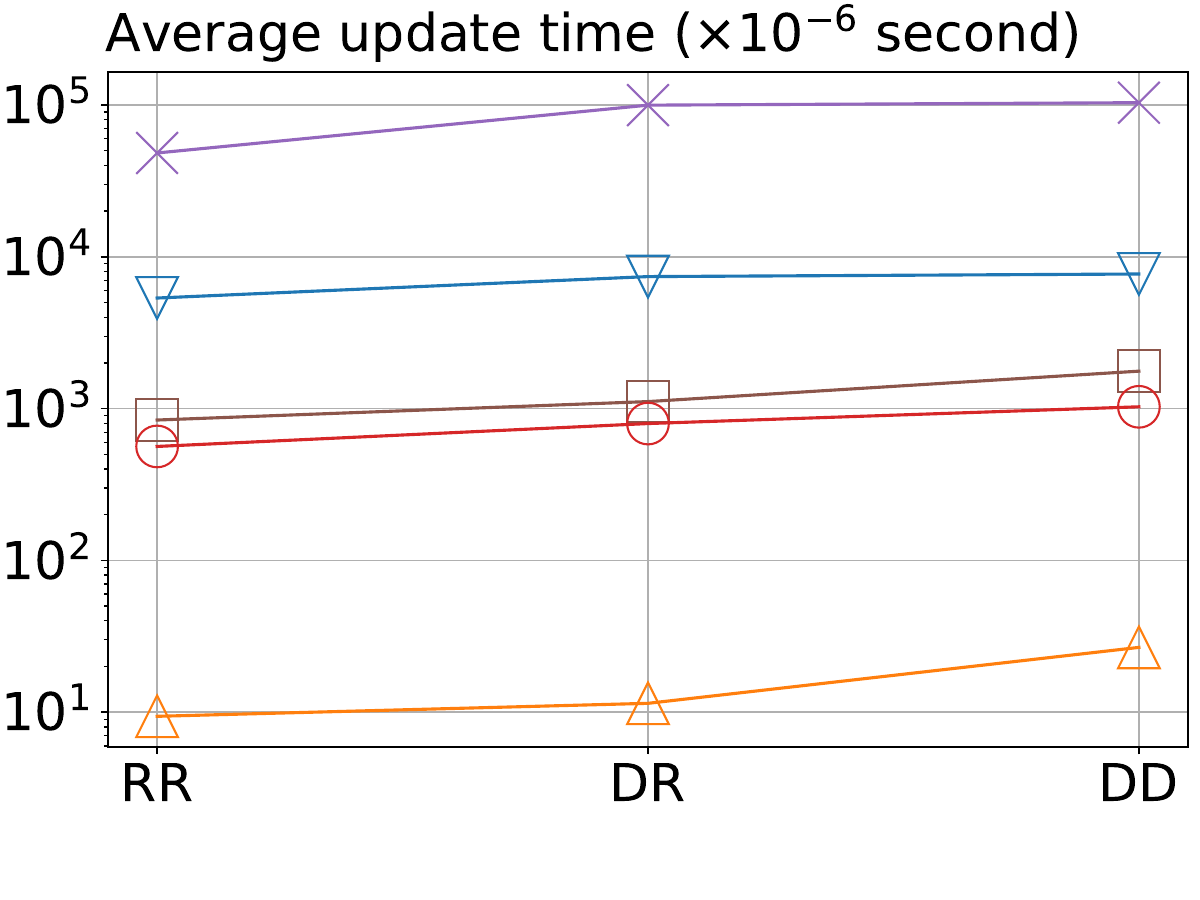}
                \vspace{-8mm}
                \caption{Skitter}
                \vspace{-2mm}
        \end{subfigure}&
            \hspace{-3mm}\begin{subfigure}{0.333\linewidth}
                \includegraphics[width=\textwidth]{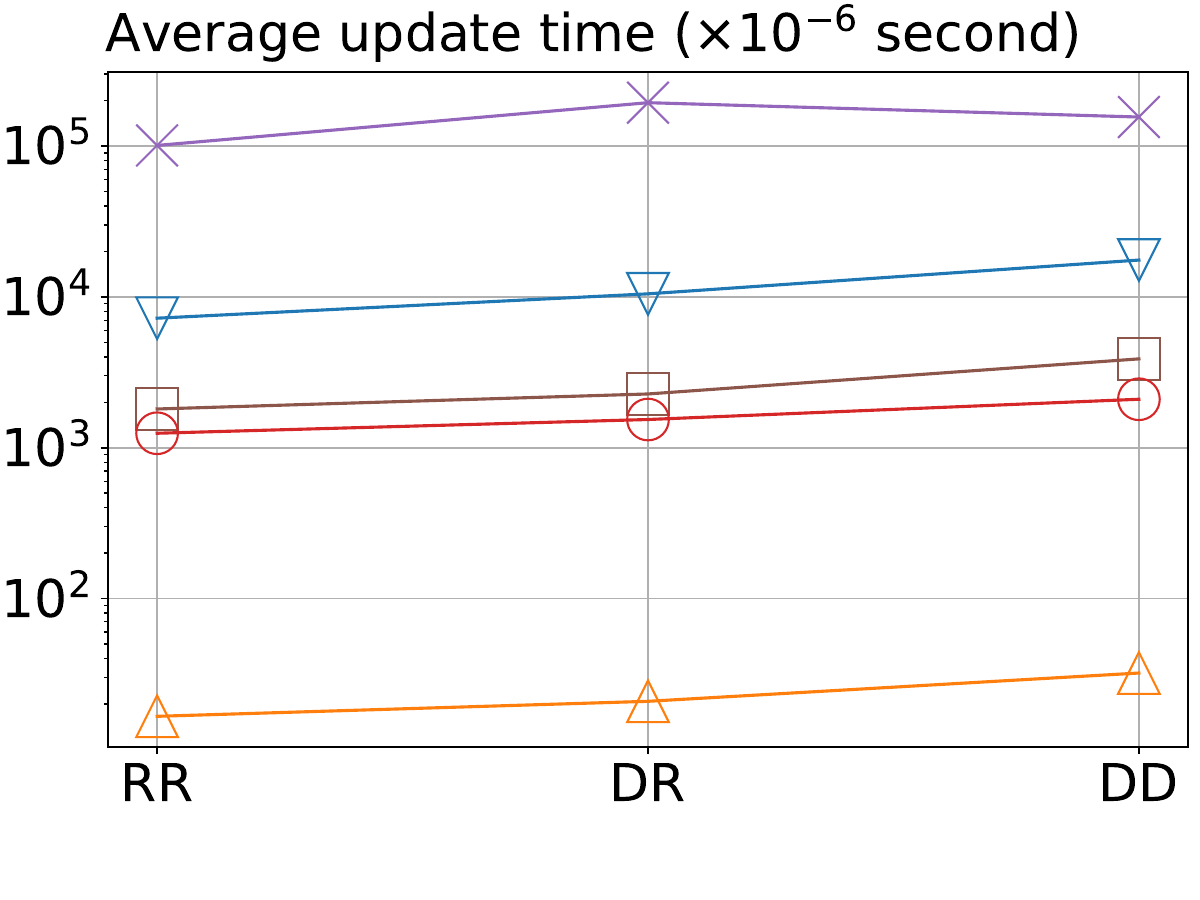}
                \vspace{-8mm}
                \caption{Talk}
                \vspace{-2mm}
        \end{subfigure}&
        \hspace{-3mm}\begin{subfigure}{0.333\linewidth}
            \includegraphics[width=\textwidth]{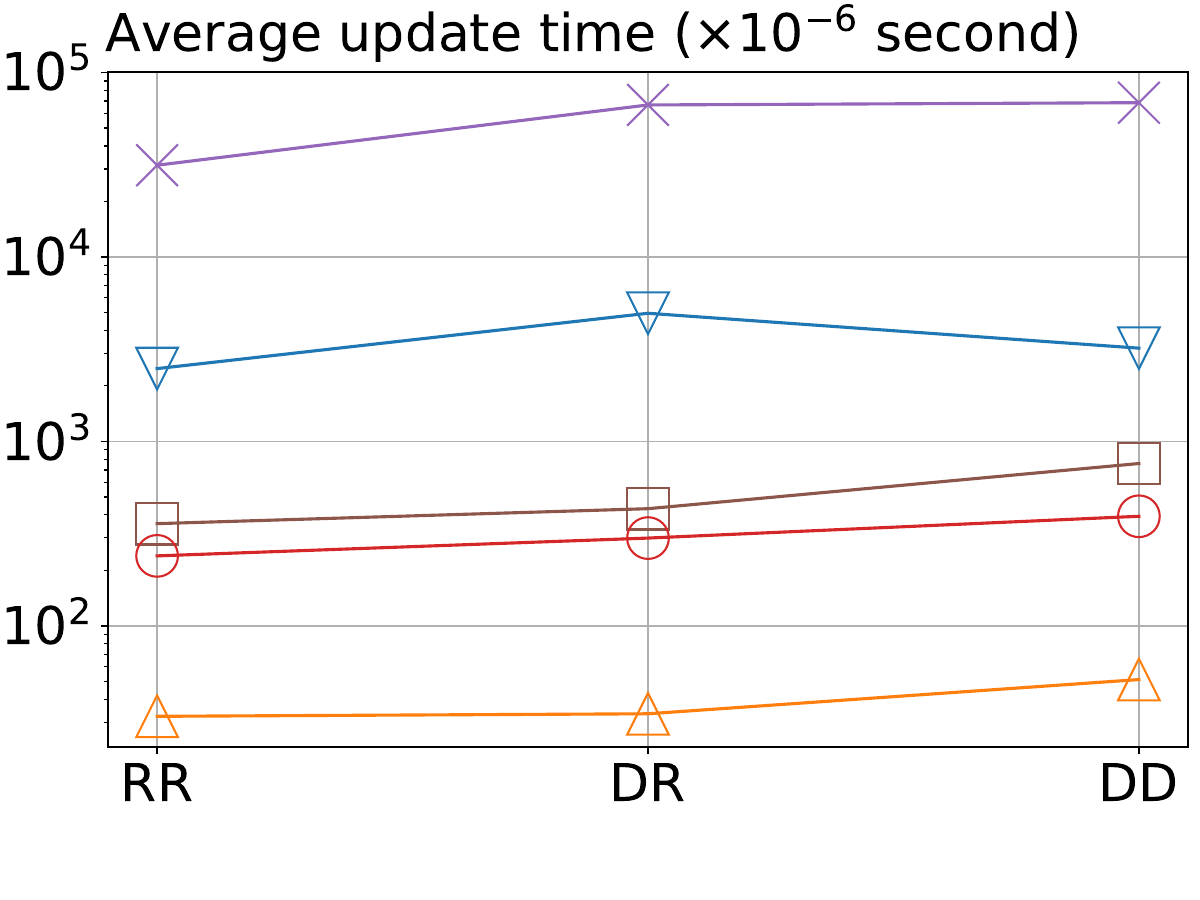}
            \vspace{-8mm}
            \caption{Orkut}
            \vspace{-2mm}
        \end{subfigure}\\
        \hspace{-3mm}\begin{subfigure}{0.333\linewidth}
            \includegraphics[width=\textwidth]{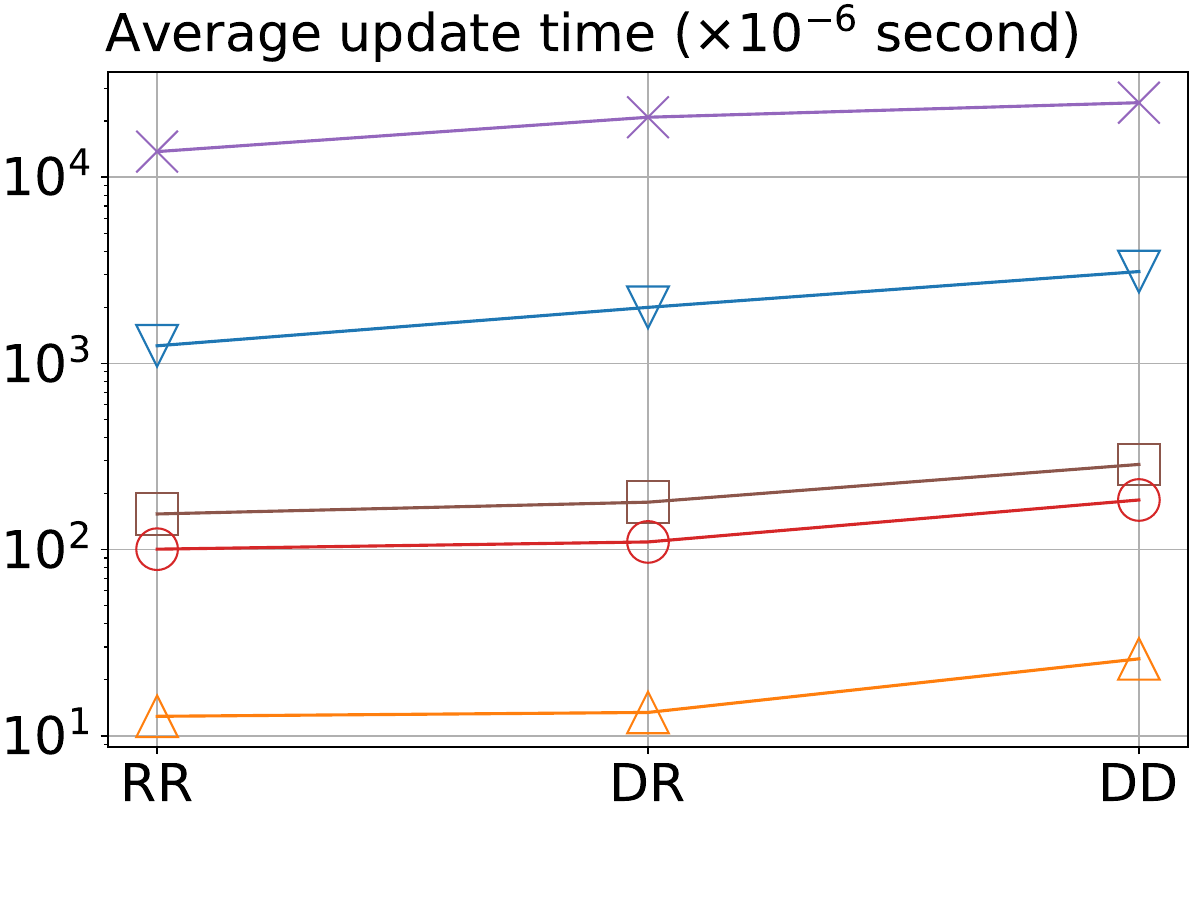}
            \vspace{-8mm}
            \caption{LiveJournal}
            \vspace{-2mm}
        \end{subfigure}&
        \hspace{-3mm}\begin{subfigure}{0.333\linewidth}
            \includegraphics[width=\textwidth]{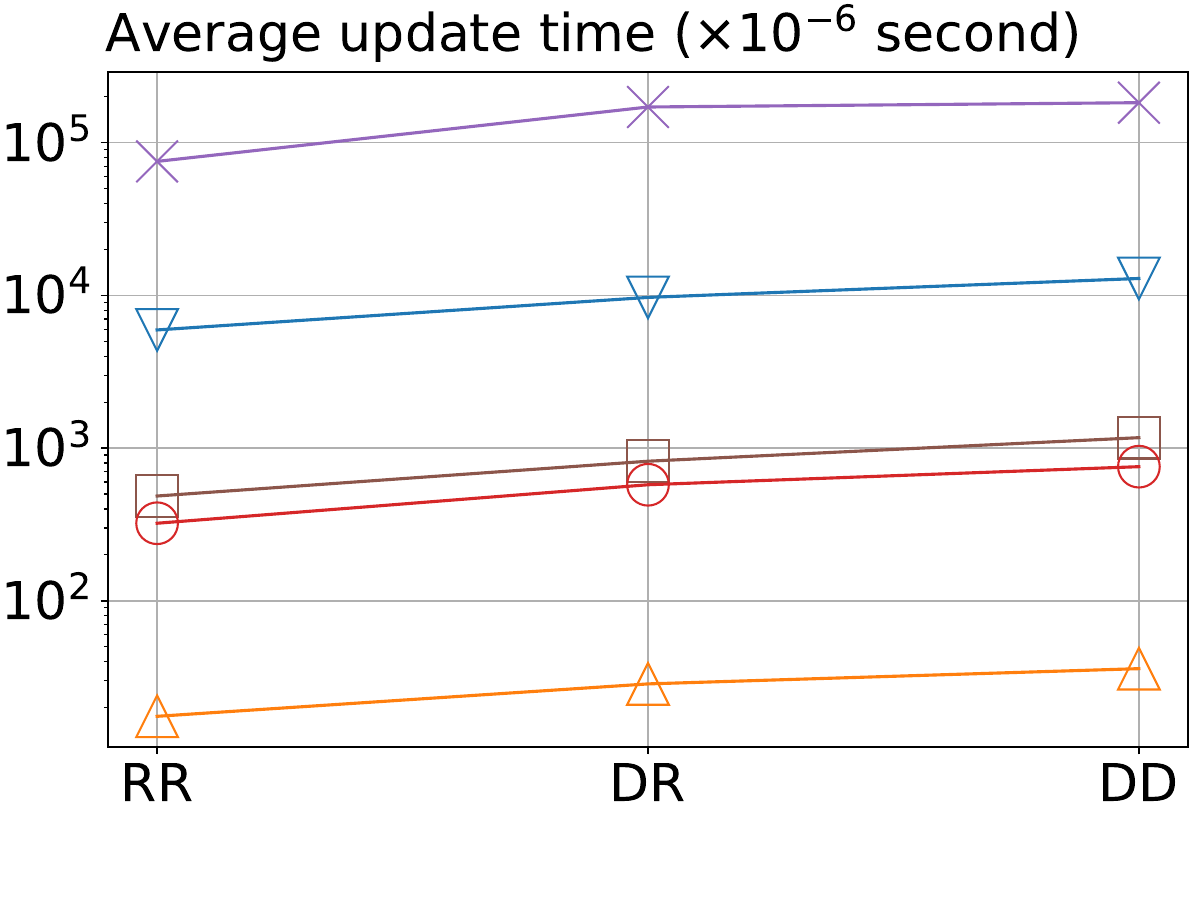}
            \vspace{-8mm}
            \caption{Friendster}
            \vspace{-2mm}
        \end{subfigure}&
        \hspace{-3mm}\begin{subfigure}{0.333\linewidth}
            \includegraphics[width=\textwidth]{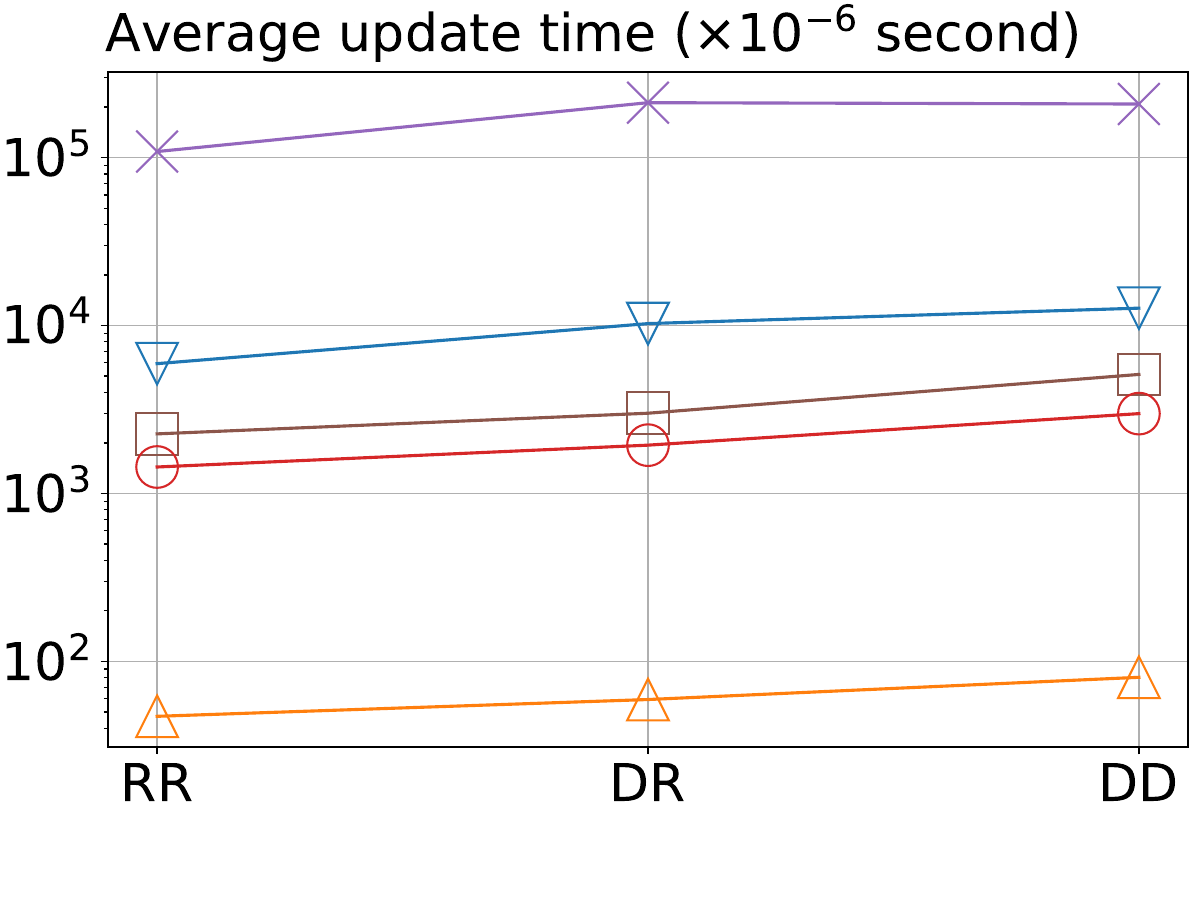}
            \vspace{-8mm}
            \caption{Web}
            \vspace{-2mm}
        \end{subfigure}\\
    \end{tabular}
    \vspace{-3mm}
    \caption{Average update time vs. update distribution}
    \label{fig:ug_varying}
\vspace{-6mm}
\end{figure}

\vspace{-2mm}
\subsubsection{\bf Impact of Deletion-to-Insertion Ratio}
\label{subsec:in_del_r}

\begin{figure}[t]
    \centering
    \begin{subfigure}{0.9\linewidth}
        \includegraphics[width=\textwidth]{figures/update_generation/lines_legend_ug.pdf}
    \end{subfigure}
    \begin{tabular}{ccc}
\hspace{-3mm}\begin{subfigure}{0.333\linewidth}
            \includegraphics[width=\textwidth]{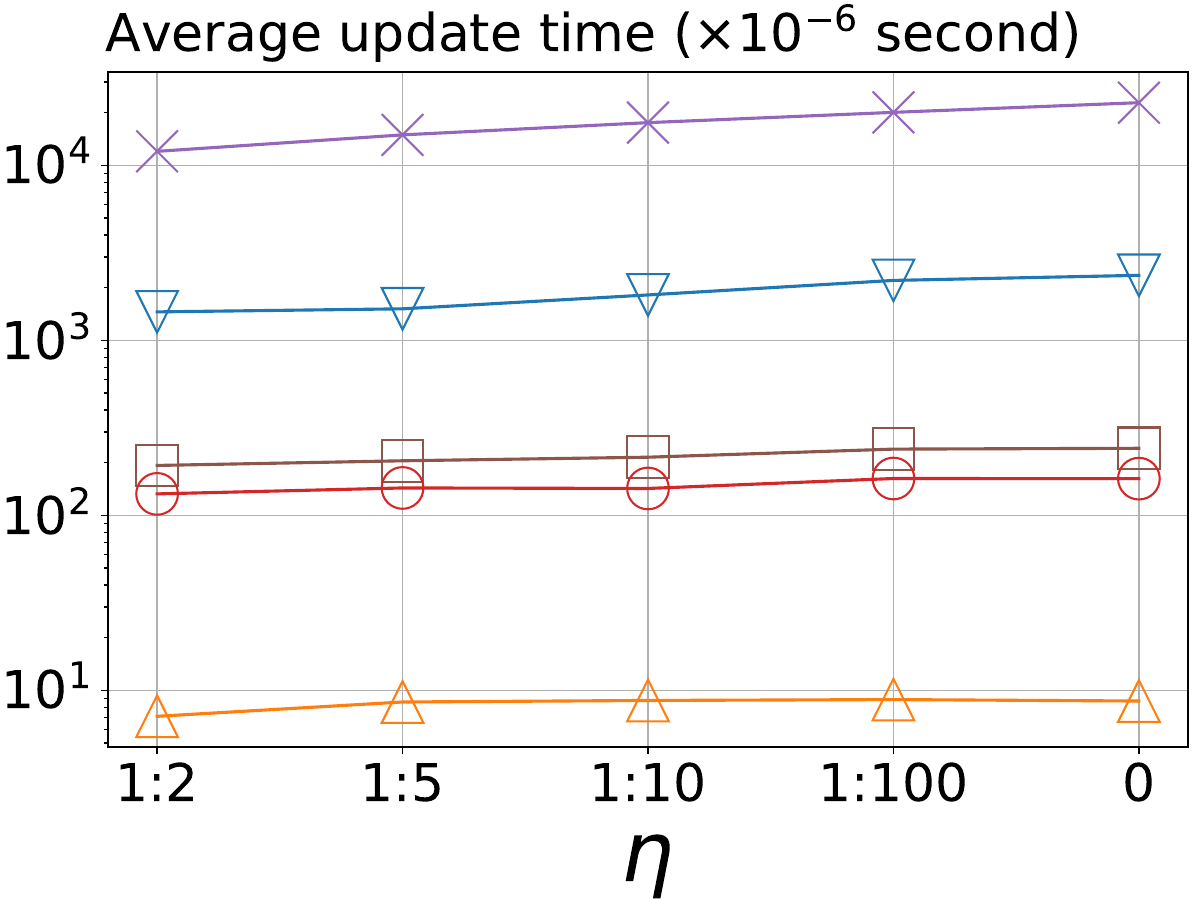}
            \vspace{-6mm}
            \caption{Google}
            \vspace{-1mm}
        \end{subfigure}&
            \hspace{-3mm}\begin{subfigure}{0.333\linewidth}
                \includegraphics[width=\textwidth]{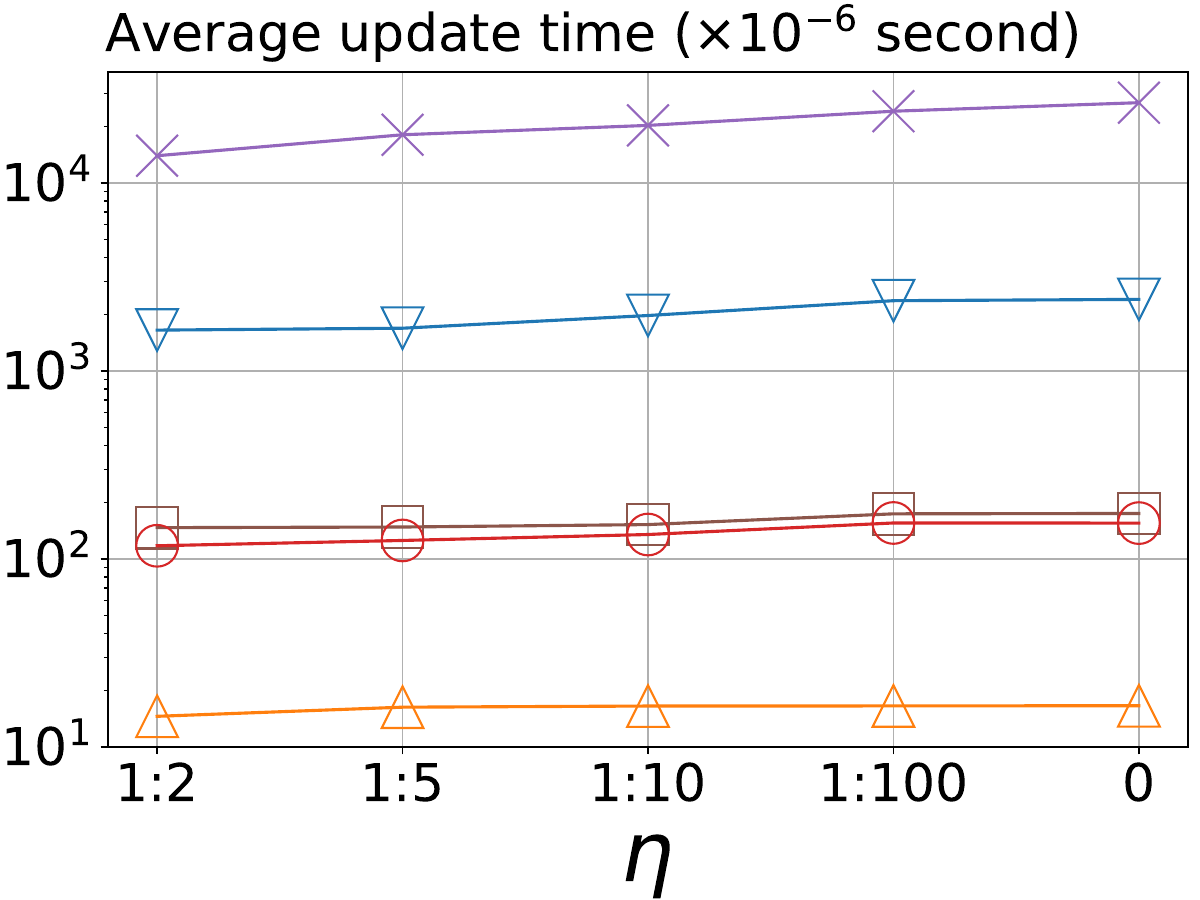}
                \vspace{-6mm}
                \caption{Topcats}
                \vspace{-1mm}
        \end{subfigure}&
            \hspace{-3mm}\begin{subfigure}{0.333\linewidth}
                \includegraphics[width=\textwidth]{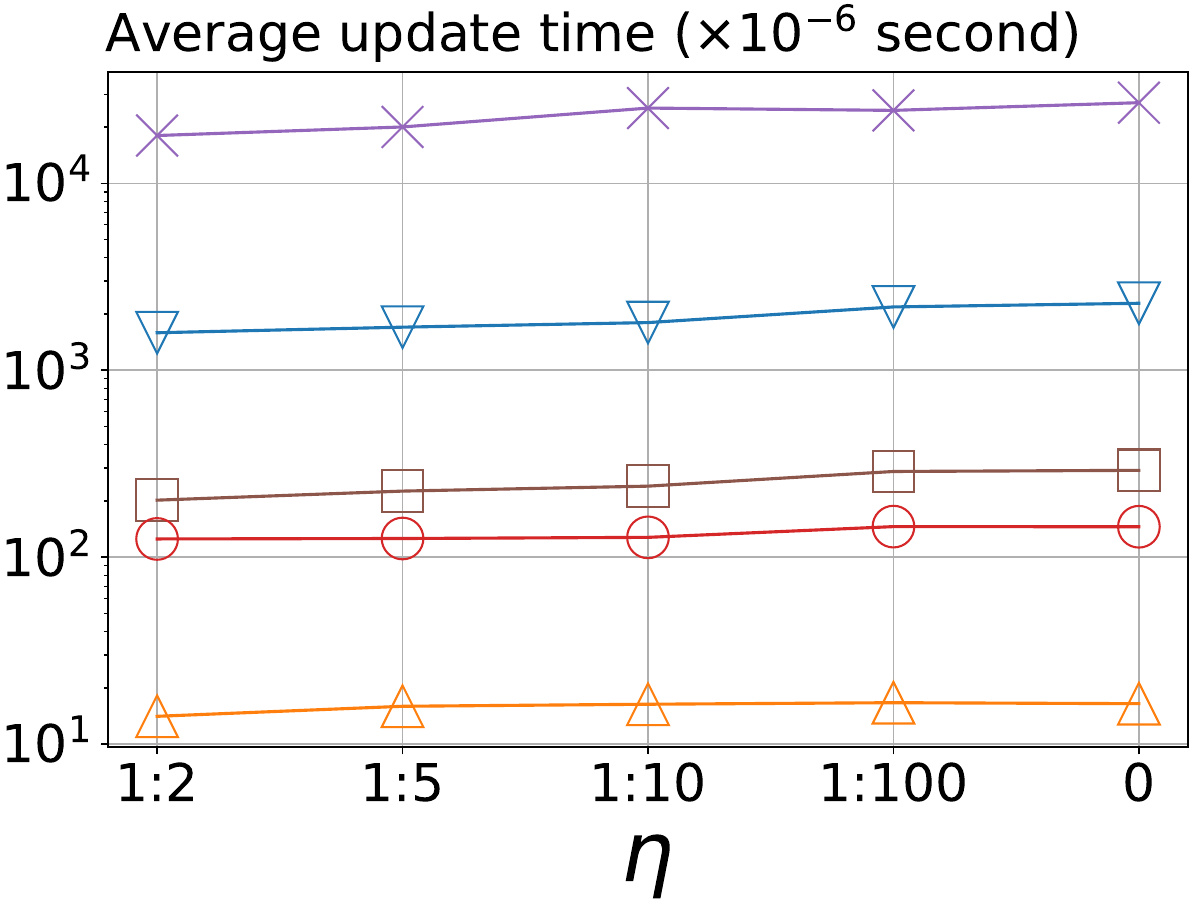}
                \vspace{-6mm}
                \caption{Pokec}
                \vspace{-1mm}
        \end{subfigure}\\
            \hspace{-3mm}\begin{subfigure}{0.333\linewidth}
                \includegraphics[width=\textwidth]{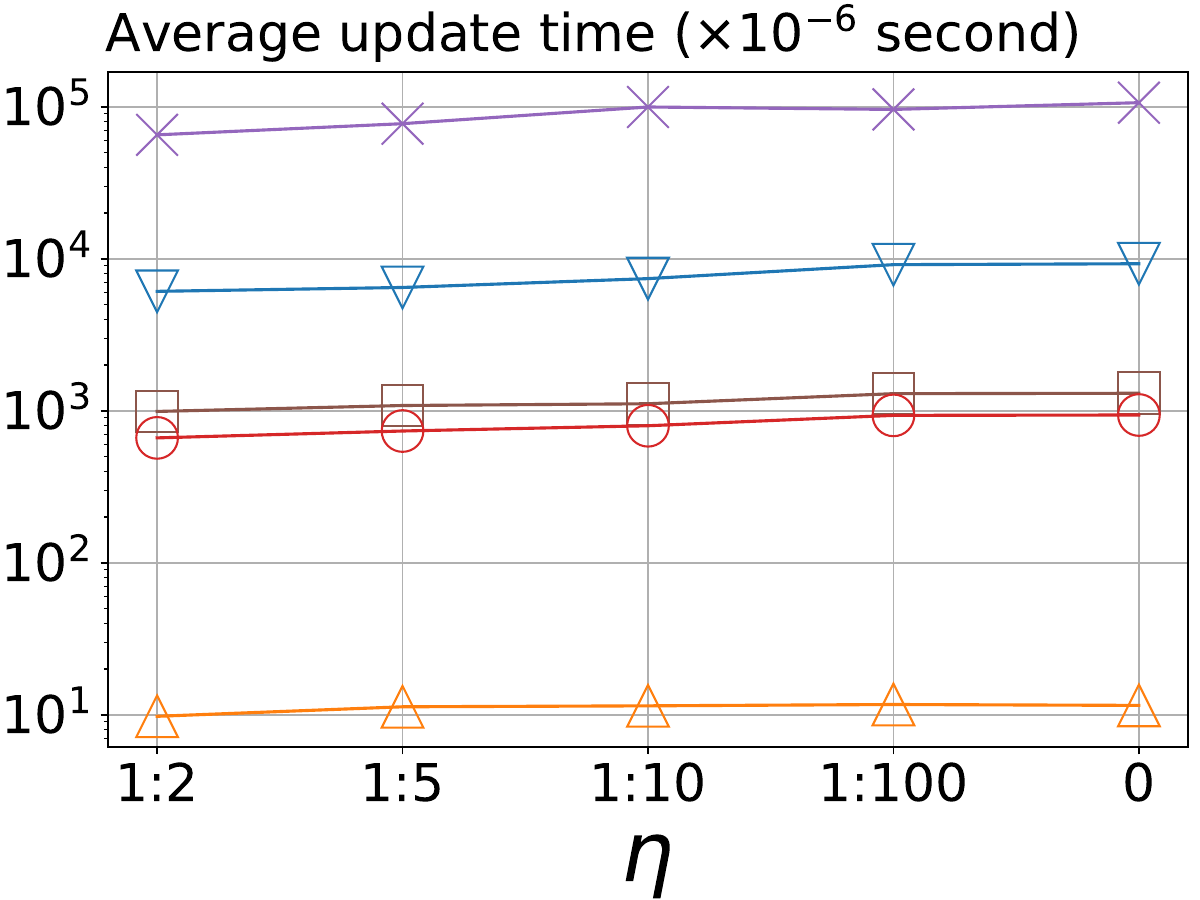}
                \vspace{-6mm}
                \caption{Skitter}
                \vspace{-2mm}
        \end{subfigure}&
            \hspace{-3mm}\begin{subfigure}{0.333\linewidth}
                \includegraphics[width=\textwidth]{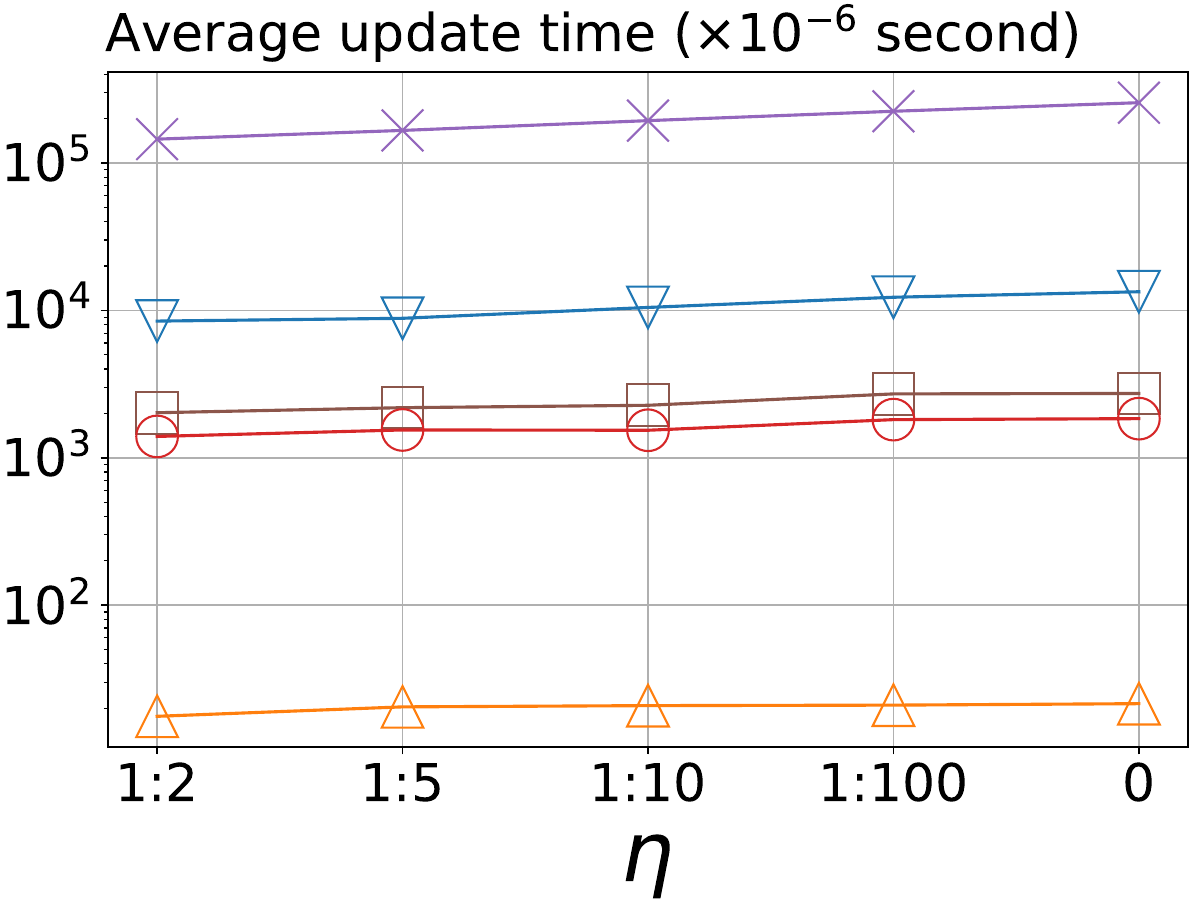}
                \vspace{-6mm}
                \caption{Talk}
                \vspace{-2mm}
        \end{subfigure}&
        \hspace{-3mm}\begin{subfigure}{0.333\linewidth}
            \includegraphics[width=\textwidth]{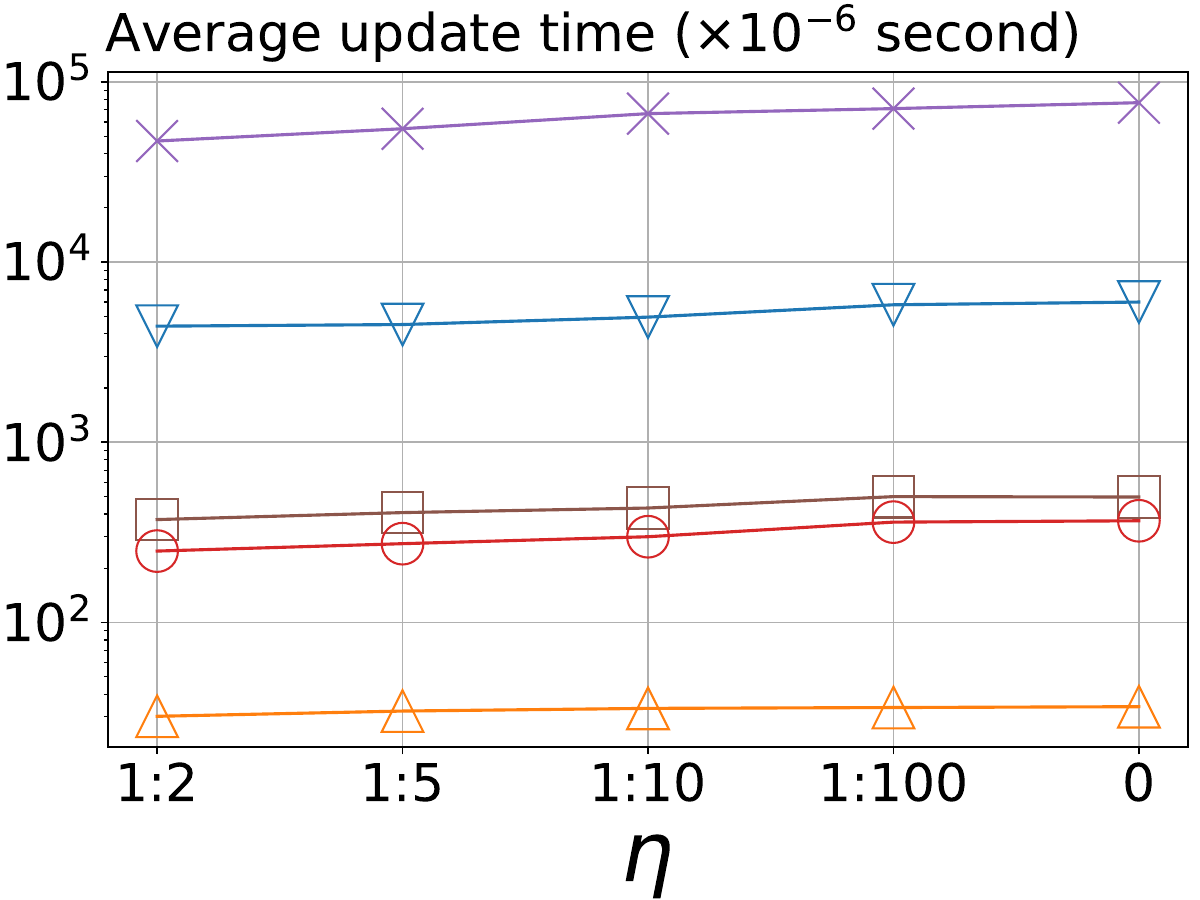}
            \vspace{-6mm}
            \caption{Orkut}
            \vspace{-2mm}
        \end{subfigure}\\
        \hspace{-3mm}\begin{subfigure}{0.333\linewidth}
            \includegraphics[width=\textwidth]{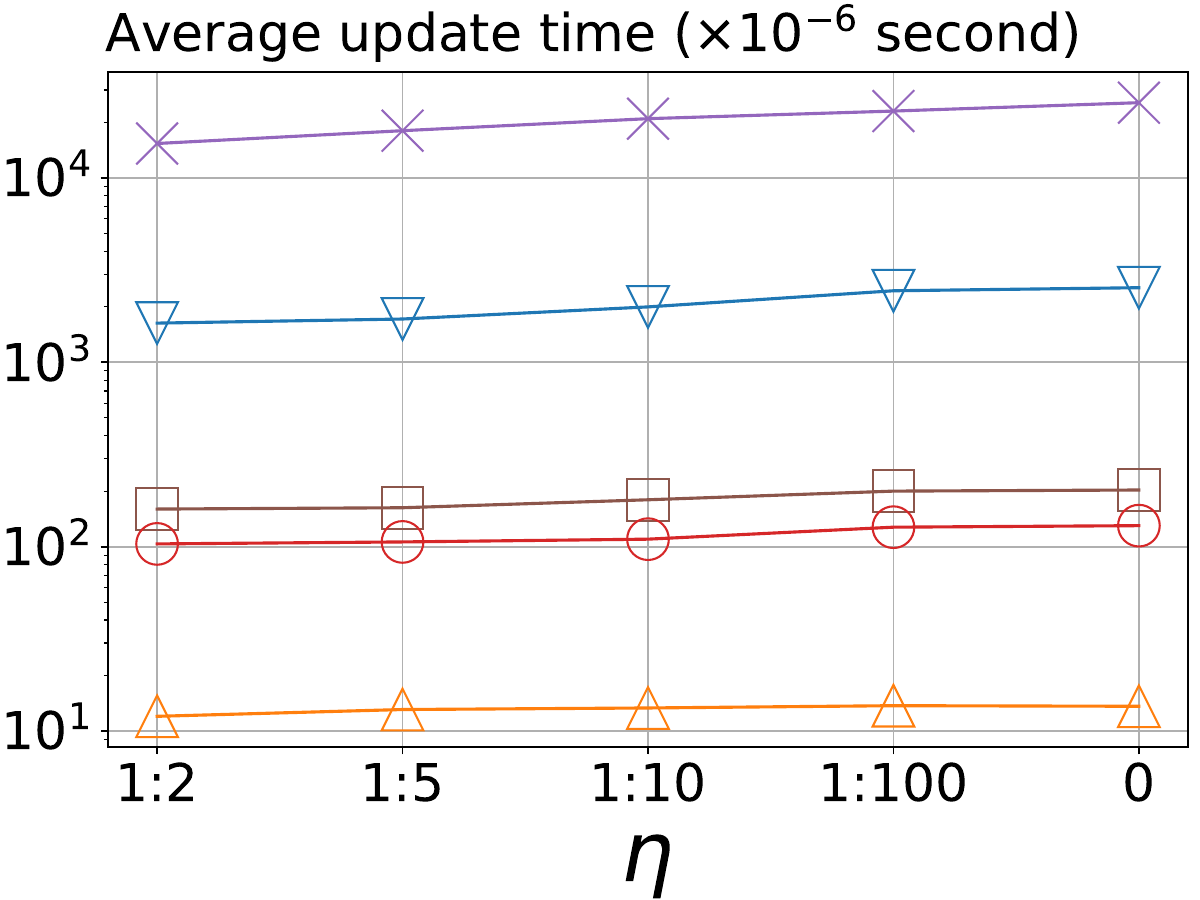}
            \vspace{-6mm}
            \caption{LiveJournal}
            \vspace{-2mm}
        \end{subfigure}&
        \hspace{-3mm}\begin{subfigure}{0.333\linewidth}
            \includegraphics[width=\textwidth]{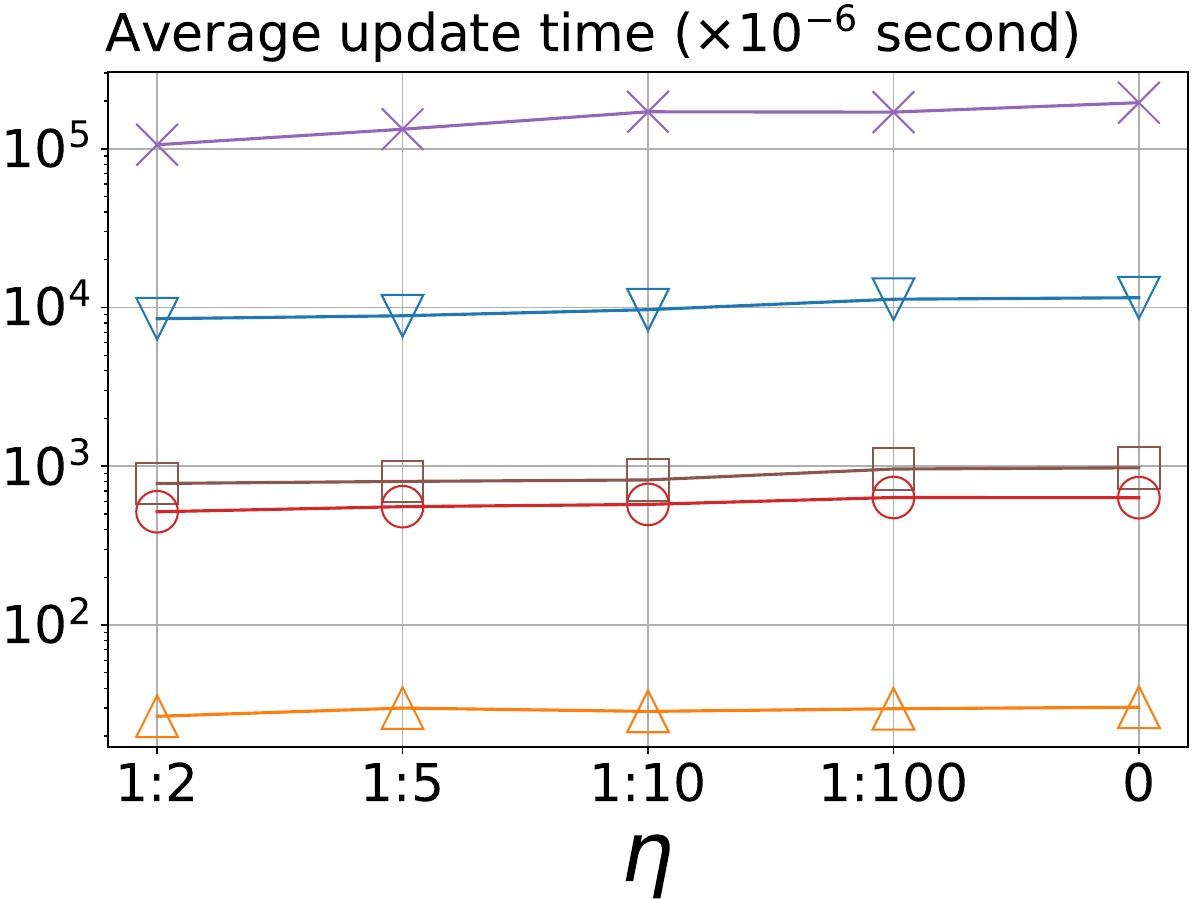}
            \vspace{-6mm}
            \caption{Friendster}
            \vspace{-2mm}
        \end{subfigure}&
        \hspace{-3mm}\begin{subfigure}{0.333\linewidth}
            \includegraphics[width=\textwidth]{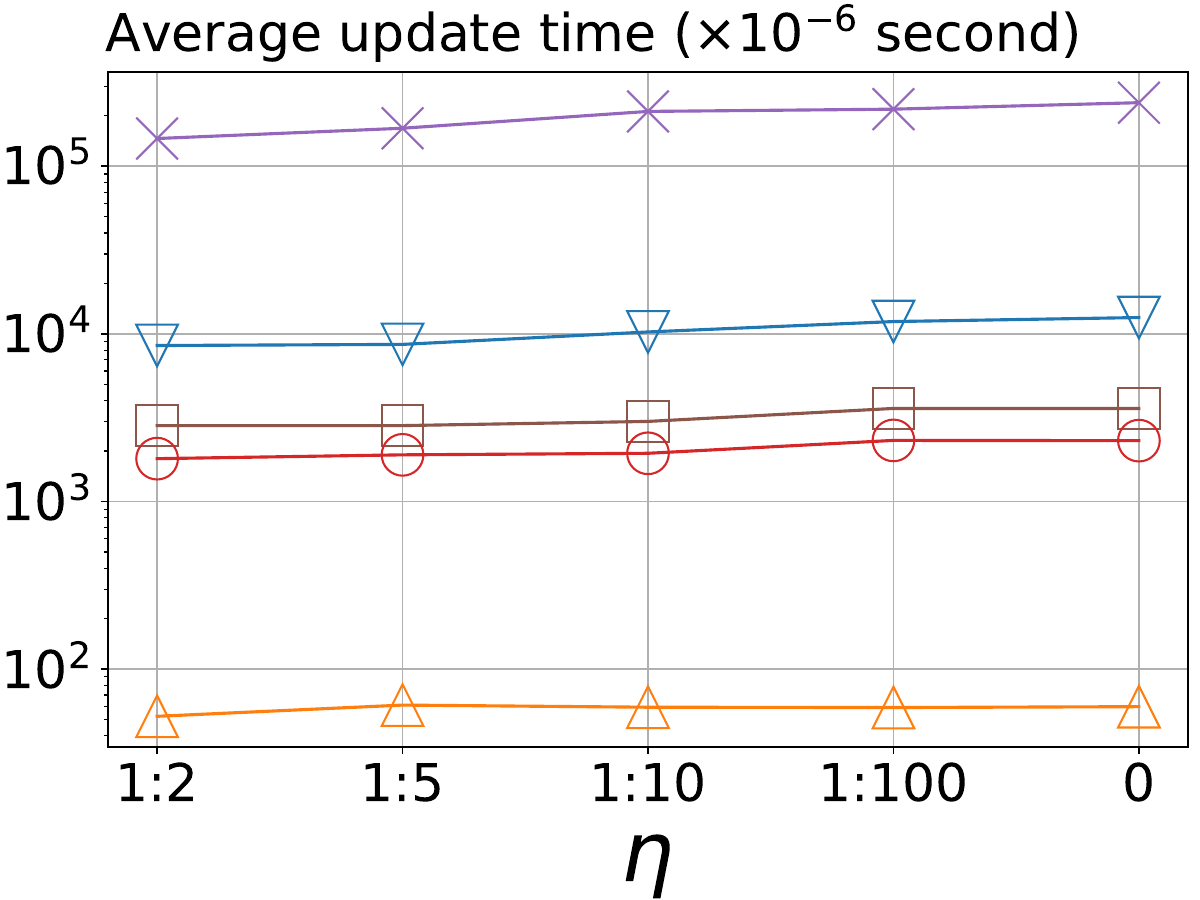}
            \vspace{-6mm}
            \caption{Web}
            \vspace{-2mm}
        \end{subfigure}\\
    \end{tabular}
    \vspace{-4mm}
    \caption{Average update time vs. $\eta$}
    \label{fig:eta_varying}
\end{figure}

Next, we vary deletion-to-insertion ratios, and  Figure~\ref{fig:eta_varying} reports the results. We observe the following: (1)~The average update times increase 
with insertions occurring more frequently
across all algorithms and datasets. 
This is expected since the number of edges grows with an increase in insertions, and the growth accelerates when the deletion-to-insertion ratio $\eta$ decreases. 
(2)~Across all settings tested, all our algorithms outperform the SOTA algorithms, with \algo~ having the lowest update times. (3)~When $\eta = 0$, all updates are insertions. In this case, \algo~ amplifies the speedup from up to 9,315 times to 11,959 times (vs. GS*-Index), and from up to 647 times to 805 times (vs. BOTBIN), compared with $\eta = \frac{1}{10}$. This reaffirms the robustness of our algorithms in terms of scalability.

\vspace{-2mm}
\subsection{Study on Query Efficiency}
\label{subsubsec:qe}
\vspace{-1mm}
The query efficiency results are shown in Figure~\ref{fig:query}.
Several observations are made: (1) GS*-Index, BOTBIN, and \algodelta\ exhibit similar query times as their query time are all bounded by $O(m_{cr})$, linear to the size of the clustering result graph. 
\algomu, utilizing a fixed-size $\mu$-table (in Section~\ref{subsec:opt2}), 
only incurs a cost bounded by $O(\min\{\mu, \log n\} \cdot n + m_{cr})$ when the input $\mu$
exceeds a certain threshold $\mu_{\max}$ which is set as $15$ in this experiment. 
In practice, it can still achieve performance similar to $O(m_{cr})$ methods, 
as evidenced by the results showing very similar query times. 
(2) Across all datasets, our \algo~algorithm's query times are within the same magnitude as the other algorithms. Notably, on \texttt{web-Google} and \texttt{web-topcats}, \algo\ has a lower query time. This is because the size of the clustering result graph is large in these datasets, and hence, $O(m_{cr})$ dominates the query time, making the query overhead of \algo\ 
negligible.
%
(4) In structural clustering problems, the number of the clustering result vertices ($n_{cr}$) has substantial practical implications. If $n_{cr}$ is small, the structural clustering results may lose significance because most vertices are excluded. In cases where $n_{cr}$ approaches $n$, \algo~ introduces a negligible overhead in queries 
while accelerating updates by over 100 times. 
%
\begin{figure}[t]
\begin{subfigure}{0.8\linewidth}
    \centering
    \includegraphics[width=\textwidth]{figures/bar_legend.pdf}
\end{subfigure}
\begin{subfigure}{1\linewidth}
    \hspace{3mm}
    \includegraphics[width=0.9\textwidth]{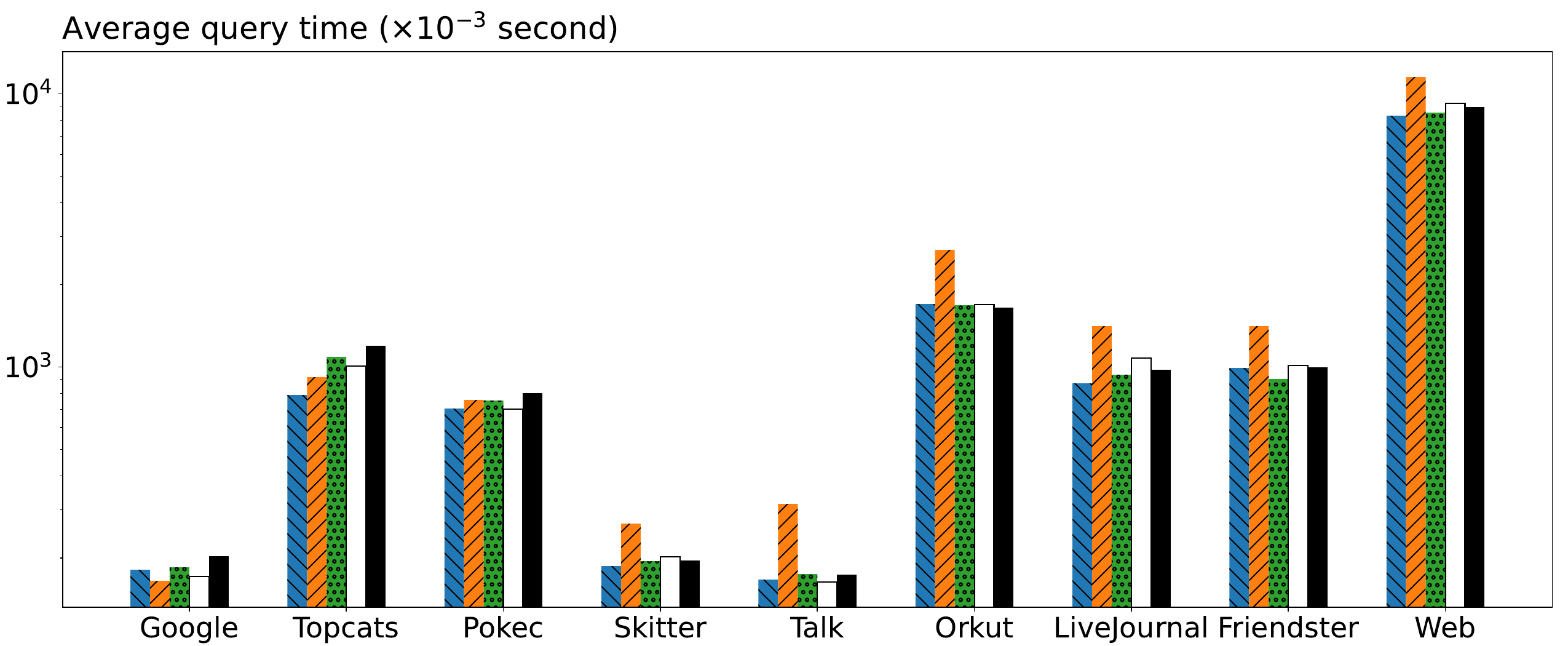}
\end{subfigure}
    \caption{Query processing performance results}
    \label{fig:query}
\end{figure}

\begin{figure*}[t]
\begin{subfigure}{0.8\linewidth}
    \centering
    \includegraphics[width=\textwidth]{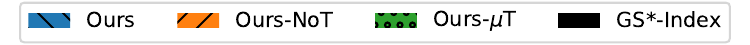}
\end{subfigure}
\begin{subfigure}{0.5\linewidth}
    \includegraphics[width=1\textwidth]{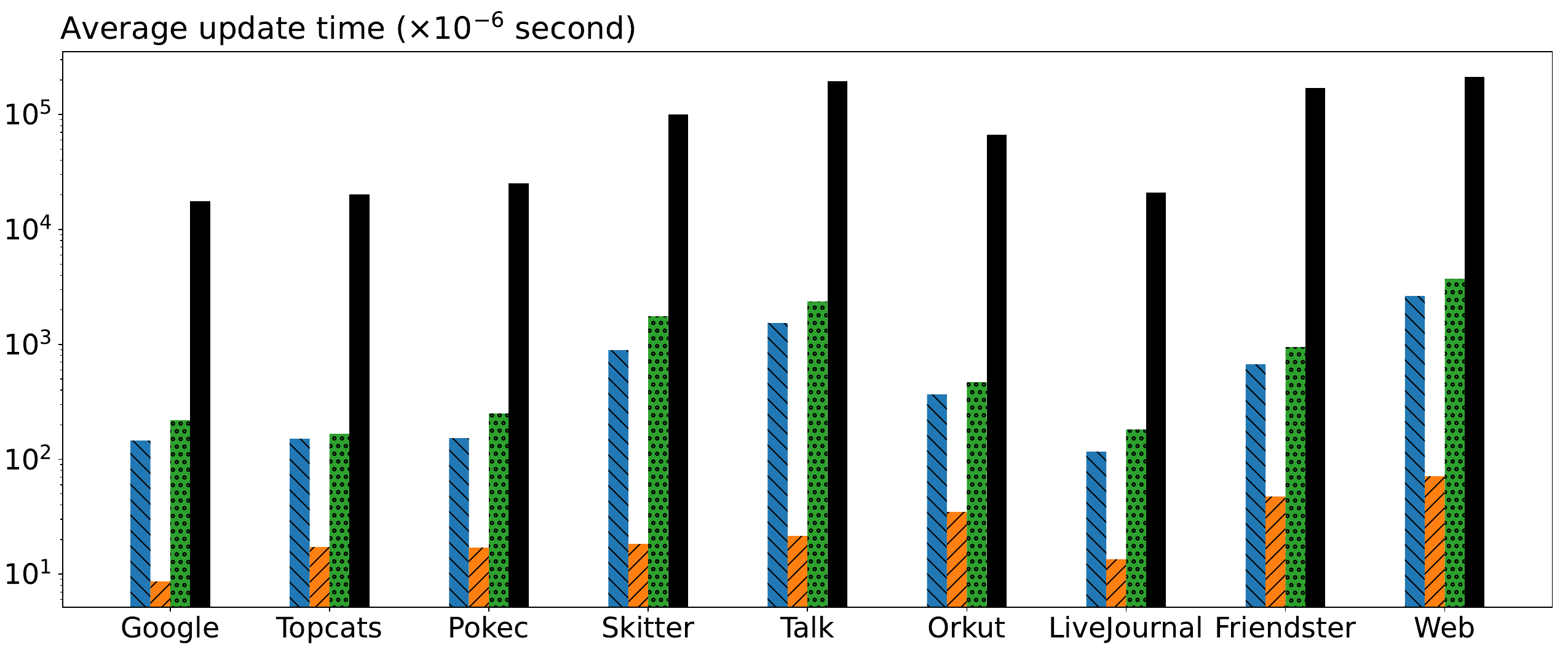}
    \vspace{-6mm}
    \caption{Cosine}    
    \label{fig:cos_update}
\end{subfigure}%
\begin{subfigure}{0.5\linewidth}
    \includegraphics[width=1\textwidth]{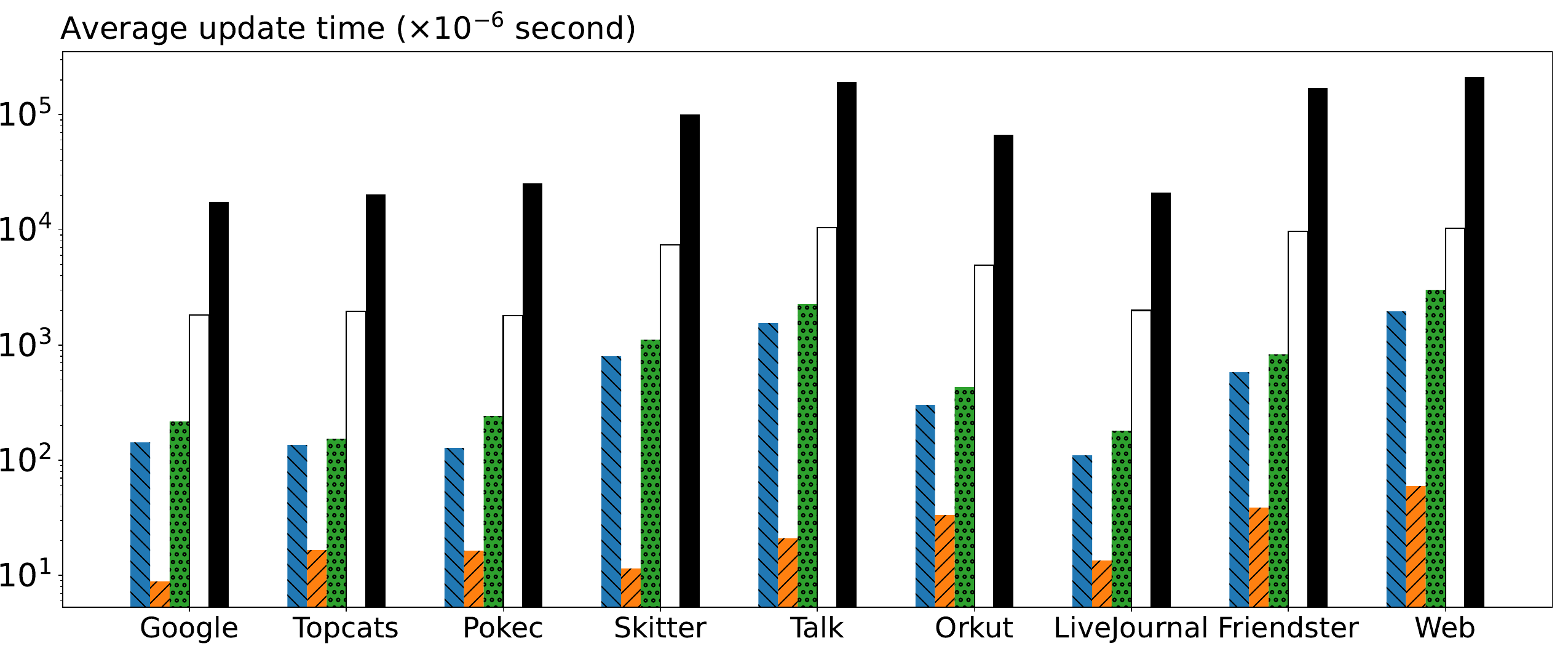}
    \vspace{-6mm}
    \caption{Dice}
    \label{fig:dic_update}
\end{subfigure}
    \caption{Average update running time on Cosine and Dice}
    \label{fig:cos}
    \vspace{5mm}
\end{figure*}

\subsection{Study on Clustering Quality}
%
We look into the clustering quality of both our algorithms and BOTBIN, 
in terms of the \emph{mislabeled rate} (\textbf{MLR}) and \emph{adjusted rand index} (\textbf{ARI})~\cite{hubert1985comparing}. 
MLR is calculated as dividing the number of incorrectly labeled edges by the number of edges, $m$, of the current graph. ARI is widely used to evaluate the clustering quality which outputs a value from 0 to 1, where 1 means that the clusters are exactly the same as the ground truth. 
We evaluate the result quality using the default $\rho^* = 0.02$ and a larger $\rho^* = 0.1$. 
Note that here $\rho^*$ represents an absolute error, 
and 0.1 is already a relatively large error value. 
Both MLR and ARI are measured for each query as described above and their average values are reported in Table~\ref{tab:quality}.

\begin{table}
\caption{Clustering quality results}
\resizebox{1\textwidth}{!}{%
\begin{tabular}{l|cccc|cccc}
\hline
                 & \multicolumn{4}{c|}{$\boldsymbol{\rho=}\;\mathbf{0.02}$}                                                                                & \multicolumn{4}{c}{$\boldsymbol{\rho=}\;\mathbf{0.1}$}                                                                                  \\ \cline{2-9} 
                 & \multicolumn{2}{c|}{\textbf{BOTBIN}}     & \multicolumn{2}{c|}{\textbf{\algodelta}}     & \multicolumn{2}{c|}{\textbf{BOTBIN}}       & \multicolumn{2}{c}{\textbf{\algodelta}}       \\ \cline{2-9} 
                 & \multicolumn{1}{c|}{\textbf{ARI} $\uparrow$}    & \multicolumn{1}{c|}{\textbf{MLR} $\downarrow$} & \multicolumn{1}{c|}{\textbf{ARI} $\uparrow$}    & \textbf{MLR} $\downarrow$ & \multicolumn{1}{c|}{\textbf{ARI} $\uparrow$}    & \multicolumn{1}{c|}{\textbf{MLR} $\downarrow$} & \multicolumn{1}{c|}{\textbf{ARI} $\uparrow$}    & \textbf{MLR} $\downarrow$ \\ \hline
web-Google       & \multicolumn{1}{c|}{\textbf{0.9991}} & \multicolumn{1}{c|}{\textbf{0.14\%}}        & \multicolumn{1}{c|}{0.9990} & \textbf{0.14\% }       & \multicolumn{1}{c|}{0.9695} & \multicolumn{1}{c|}{5.52\%}        & \multicolumn{1}{c|}{\textbf{0.9701}} & \textbf{5.50\% }       \\ \hline
wiki-topcats     & \multicolumn{1}{c|}{0.9990} & \multicolumn{1}{c|}{\textbf{0.06\%}}        & \multicolumn{1}{c|}{\textbf{0.9996}} & \textbf{0.06\%}        & \multicolumn{1}{c|}{\textbf{0.9896}} & \multicolumn{1}{c|}{\textbf{0.86\%}}        & \multicolumn{1}{c|}{0.9887} & 0.92\%        \\ \hline
soc-Pokec        & \multicolumn{1}{c|}{\textbf{0.9957}} & \multicolumn{1}{c|}{0.18\%}        & \multicolumn{1}{c|}{\textbf{0.9955}} & \textbf{0.17\%}        & \multicolumn{1}{c|}{\textbf{0.9676}} & \multicolumn{1}{c|}{\textbf{5.32\%}}        & \multicolumn{1}{c|}{0.9653} & 5.33\%        \\ \hline
as-skitter       & \multicolumn{1}{c|}{0.9995} & \multicolumn{1}{c|}{\textbf{0.19\%}}        & \multicolumn{1}{c|}{\textbf{0.9996}} & \textbf{0.19\%}        & \multicolumn{1}{c|}{0.9826} & \multicolumn{1}{c|}{6.38\%}        & \multicolumn{1}{c|}{\textbf{0.9842}} & \textbf{6.34}\%        \\ \hline
wiki-Talk        & \multicolumn{1}{c|}{0.9987} & \multicolumn{1}{c|}{0.45\%}        & \multicolumn{1}{c|}{\textbf{0.9989}} & \textbf{0.44\%}        & \multicolumn{1}{c|}{0.9716} & \multicolumn{1}{c|}{\textbf{7.29\%}}        & \multicolumn{1}{c|}{\textbf{0.9722}} & 7.30\%        \\ \hline
soc-Orkut            & \multicolumn{1}{c|}{0.9946} & \multicolumn{1}{c|}{\textbf{0.12\%}}        & \multicolumn{1}{c|}{\textbf{0.9954}} & \textbf{0.12\%}        & \multicolumn{1}{c|}{\textbf{0.9548}} & \multicolumn{1}{c|}{\textbf{4.38\%}}        & \multicolumn{1}{c|}{\textbf{0.9548}} & 4.40\%        \\ \hline
soc-LiveJournal1 & \multicolumn{1}{c|}{\textbf{0.9998}} & \multicolumn{1}{c|}{\textbf{0.12\%}}        & \multicolumn{1}{c|}{0.9995} & \textbf{0.12\%}        & \multicolumn{1}{c|}{0.9975} & \multicolumn{1}{c|}{4.56\%}        & \multicolumn{1}{c|}{\textbf{0.9982}} & \textbf{4.56\%}        \\ \hline
soc-Friendster          & \multicolumn{1}{c|}{0.9944} & \multicolumn{1}{c|}{\textbf{0.64\%}}        & \multicolumn{1}{c|}{\textbf{0.9947}} & 0.69\%       & \multicolumn{1}{c|}{0.9636} & \multicolumn{1}{c|}{6.73\%}        & \multicolumn{1}{c|}{\textbf{0.9673}} & \textbf{6.40\%}        \\ \hline
web-2012 & \multicolumn{1}{c|}{\textbf{0.9928}} & \multicolumn{1}{c|}{0.60\%}        & \multicolumn{1}{c|}{0.9926} &  \textbf{0.56\%}       & \multicolumn{1}{c|}{\textbf{0.9609}} & \multicolumn{1}{c|}{\textbf{9.79\%}}        & \multicolumn{1}{c|}{0.9607} & 9.91\%       \\ \bottomrule
\end{tabular}
}
\label{tab:quality}
\end{table}

Both BOTBIN and \algodelta\ have high-quality results, leveraging error bounds to their advantage. Notably, \algodelta\ outperforms BOTBIN on more datasets. 
When $\rho^* = 0.02$, \algodelta\ achieves MLR of less than 0.7\% across all datasets, with ARI values ranging from 0.9926 to 0.9996. 
Even with $\rho^* =0.1$, \algodelta\ maintains an average ARI of at least 0.9548 (on \texttt{soc-Orkut}) and MLR of at most 9.91\% (on \texttt{web-2012}).
\algomu\ and \algo\ have similar results to \algodelta. For brevity, they are not detailed here.
These results again underscore the practical significance of theoretical error bounds with a high success rate on real datasets.

\begin{table}[ht]
\caption{Clustering quality results on all three measurements (The result of Jaccard Similarity with $\rho = 0.02$ is copied and pasted from Table~\ref{tab:quality} to here for easy comparison.)}
\resizebox{1\textwidth}{!}{%
\begin{tabular}{l|c|c|c|c|c|c}
\hline
                & \multicolumn{2}{c|}{\textbf{Jaccard}} & \multicolumn{2}{c|}{\textbf{Cosine}} & \multicolumn{2}{c}{\textbf{Dice}} \\ \hline
\textbf{Datasets}         & \textbf{ARI $\uparrow$}   &\textbf{ MLR $\downarrow$} & \textbf{ARI $\uparrow$}   & \textbf{MLR $\downarrow$} & \textbf{ARI $\uparrow$}   & \textbf{MLR $\downarrow$}\\ \hline
\texttt{web-Google}     & 0.9991&  0.14\%& 0.9599 & 0.29\%     & 0.9990 & 0.14\%   \\ \hline
\texttt{wiki-topcats}   & 0.9990&  0.06\%& 0.9700 & 0.07\%      & 0.9999 & 0.06\%   \\ \hline
\texttt{soc-Pokec}      & 0.9957&  0.18\%& 0.9609 & 0.23\%      & 0.9958 & 0.15\%  \\ \hline
\texttt{as-skitter}     & 0.9995&  0.19\%& 0.9806 & 0.28\%      & 0.9996 & 0.13\%  \\ \hline
\texttt{wiki-Talk}      & 0.9987&  0.45\%& 0.9672 & 0.53\%      & 0.9989 & 0.42\%  \\ \hline
\texttt{Orkut}          & 0.9946&  0.12\%& 0.9673 & 0.14\%      & 0.9958 & 0.13\%  \\ \hline
soc-LiveJournal1        & 0.9998&  0.12\%& 0.9843 & 0.18\%      & 0.9994 & 0.12\%  \\ \hline
\texttt{soc-Friendster} &0.9947& 0.69\%&  0.9653 & 0.94\%   &  0.9937& 0.62\%     \\ \hline
\texttt{web-2012}   & 0.9926& 0.56\%& 0.9551 & 1.00\%      &  0.9958 & 0.57\%    \\ \bottomrule
\end{tabular}
}
\label{tab:all_qua}
\vspace{-5mm}
\end{table}
\subsection{Experiments on Cosine and Dice Similarities}
\noindent{\bf Cosine.}
As Figure~\ref{fig:cos_update} shows, 
the comparative pattern in update running time is similar to that observed under the Jaccard similarity setting (Figure~\ref{fig:update} above). 
However, the update time of \algo\ shows a slight increase compared with the Jaccard similarity setting (Figure~\ref{fig:update} above) due to larger constant factors in the sample size for similarity estimation and smaller constant factors in $\tau$ for update affordability to ensure the complexity bounds, 
as described in Section~\ref{sec:correctness} and Section~\ref{sec:update-affordability}.

The clustering quality results are shown in Table~\ref{tab:all_qua}. 
Our algorithms also show solid results for Cosine similarity-based structural clustering. 
Compared with exact algorithms like GS*-Index (whose ARI is 1 and MLR is 0 and are omitted from the table), our algorithm can achieve up to 0.9843 average ARI (on \texttt{soc-LiveJournal1}) and as low as 0.07\% MLR (on \texttt{wiki-topcats}).

\noindent{\bf Dice.} The results of Dice similarity are shown in Figure~\ref{fig:dic_update} (update running time) and Table~\ref{tab:all_qua} (clustering quality). As expected, our algorithms show similar performance as that on Jaccard.

\vspace{-2mm}
\section{Conclusion}

We proposed an algorithm called {\em VD-STAR} 
for the problem of {\em Dynamic Structural Clustering for All Parameters}.
Our {\em VD-STAR} algorithm can return an $\rho$-absolute-approximate clustering result with high probability for every query in $O(m_{cr})$ time, and can process each update in $O(\log n)$ amortized expected time, 
while its space consumption is bounded by $O(n + m)$ at all times.
The algorithm works well with Jaccard, Cosine and Dice similarity measurements and supports arbitrary updates.
{\em VD-STAR} significantly improves
the state-of-the-art 
approximate algorithm BOTBIN which achieves $O(\log^2 n)$ expected time per-update for random updates 
under Jaccard similarity only.
We evaluate our algorithm on nine real datasets, which shows strong empirical results 
in terms of update and query efficiency, clustering result quality, and robustness in handling various update distributions.
%
%

\clearpage

\bibliographystyle{unsrt} 
\bibliography{refs}

\end{document}